\newcommand{\subtitle}[1]{%
  \posttitle{%
    \par\end{center}
    \begin{center}\large#1\end{center}
    \vskip0.5em}%
}
\newenvironment{psmallmatrix}
  {\left(\begin{smallmatrix}}  % small matrix
  {\end{smallmatrix}\right)}
\newcommand{\Var}{{\text{Var}}}
\newcommand{\Cov}{{\text{Cov}}}
\newcommand{\E}{{\mathbb{E}}}
\renewcommand{\Pr}{{\text{P}}}
\newtheorem{theorem}{Theorem}[section]
\newtheorem{corollary}{Corollary}[section]
\newtheorem{lemma}{Lemma}
\newtheorem{proposition}[theorem]{Proposition}
\newcolumntype{C}[1]{>{\centering\arraybackslash}m{#1}}
\begin{document}
\pagestyle{plain}

\newtheoremstyle{mystyle}% name
{\topsep}% Space above
{\topsep}% Space below
{\it}% Body font
{}% Indent amount
{\bf}% Theorem head font
{.}%Punctuation after theorem head
{.5em}%Space after theorem head
{}% theorem head spec
\theoremstyle{mystyle}
\newtheorem{assumptionex}{Assumption}
\newenvironment{assumption}
  {\pushQED{\qed}\renewcommand{\qedsymbol}{}\assumptionex}
  {\popQED\endassumptionex}
\newtheorem{assumptionexp}{Assumption}
\newenvironment{assumptionp}
  {\pushQED{\qed}\renewcommand{\qedsymbol}{}\assumptionexp}
  {\popQED\endassumptionexp}
\renewcommand{\theassumptionexp}{\arabic{assumptionexp}$'$}

\newtheorem{assumptionexpp}{Assumption}
\newenvironment{assumptionpp}
  {\pushQED{\qed}\renewcommand{\qedsymbol}{}\assumptionexpp}
  {\popQED\endassumptionexpp}
\renewcommand{\theassumptionexpp}{\arabic{assumptionexpp}$''$}

\newtheorem{assumptionexppp}{Assumption}
\newenvironment{assumptionppp}
  {\pushQED{\qed}\renewcommand{\qedsymbol}{}\assumptionexppp}
  {\popQED\endassumptionexppp}
\renewcommand{\theassumptionexppp}{\arabic{assumptionexppp}$'''$}

\renewcommand{\arraystretch}{1.3}

\newcommand{\argmin}{\mathop{\mathrm{argmin}}}
\makeatletter
\newcommand{\grande}{\bBigg@{2.25}}
\newcommand{\enorme}{\bBigg@{5}}

\newcommand{\blind}{0}

\newcommand{\tit}{Neymanian inference in randomized experiments}

\if0\blind

%{\title{\tit\thanks{For comments and suggestions, we thank...}}
{\title{\tit\thanks{We thank Lihua Lei and Jann Spiess for helpful comments and suggestions. This work was supported  by
the Office of Naval Research  under grant numbers N00014-17-1-2131 and N00014-19-1-2468 and by Amazon through a gift.
}}
\author{Ambarish Chattopadhyay\thanks{Postdoctoral Fellow, Stanford Data Science, Stanford University; email: \url{hsirabma@stanford.edu}.} \and Guido W. Imbens\thanks{Professor of Economics, Graduate School of Business, and Department of Economics, Stanford
University, SIEPR, and NBER; email: \url{imbens@stanford.edu}.}
}

\date{\today} 

\maketitle
}\fi

\if1\blind
\title{\bf \tit}
%\date{} 
\maketitle
\fi

\begin{abstract}
% insert abstract
In his seminal 1923 work, Neyman studied the variance estimation problem for the difference-in-means estimator of the average treatment effect in completely randomized experiments. He proposed a variance estimator that is conservative in general and unbiased under homogeneous treatment effects. While widely used under complete randomization, there is no unique or natural way to extend this estimator to more complex designs. To this end, we show that Neyman's estimator can be alternatively derived in two ways, leading to two novel variance estimation approaches: the imputation approach and the contrast approach. While both approaches recover Neyman's estimator under complete randomization, they yield fundamentally different variance estimators for more general designs. In the imputation approach, the variance is expressed in terms of observed and missing potential outcomes and then estimated by imputing the missing potential outcomes, akin to Fisherian inference. In the contrast approach, the variance is expressed in terms of unobservable contrasts of potential outcomes and then estimated by exchanging each unobservable contrast with an observable contrast. We examine the properties of both approaches, showing that for a large class of designs, each produces non-negative, conservative variance estimators that are unbiased in finite samples or asymptotically under homogeneous treatment effects.
\end{abstract}

%\vspace*{.3in}

\begin{center}
\noindent Keywords:
%\small
{Causal inference; Design-based inference; Neymanian inference; Randomized experiments}
%\normalsize
\end{center}
\clearpage
\doublespacing

%\singlespacing
%\pagebreak
%\tableofcontents

%\pagebreak

%\doublespacing

\newpage
%%%%%%%%%%%%%%%%%%%%%%%%%%%%%%%%%%%%%%%%%%%
%%%%%%%%%%%%%%%%%%%%%%%%%%%%%%%%%%%%%%%%%%%
%%%%%%%%%%%%%%%%%%%%%%%%%%%%%%%%%%%%%%%%%%%
\section{Introduction}
\label{sec_intro}

\subsection{Design-based inference in randomized experiments}
\label{sec_designbased}

In randomized experiments, the act of randomization is controlled by the investigator and provides a basis to quantify uncertainty in a transparent manner. Thus, quite evidently, there has been an ever-growing interest in developing design-based (or randomization-based) approaches to causal inference in randomized experiments. 
There are two main modes of design-based causal inference: Fisherian and Neymanian (see, e.g., \cite{dasgupta2015causal,ding2017paradox}). Fisherian inference focuses on the causal effect of treatment on individual units within a population \cite{fisher1935design}, whereas Neymanian inference focuses on the average causal effect of treatment across a population \cite{neyman1923application}. In this paper, we explore the Neymanian mode of inference, focusing on obtaining unbiased estimators of the average treatment effect and developing estimators of their variances.

In his seminal work in 1923, Neyman
used the difference-in-means statistic to unbiasedly estimate the average treatment effect in completely randomized experiments \cite{neyman1923application}.
However, estimating its variance unbiasedly posed a challenge, since the variance depends on joint distributions of the potential outcomes under treatment and control for each unit, of which only one is observable.
To this end, Neyman proposed an unbiased estimator for an upper bound to the true variance such that, the estimator is unbiased for the true variance under treatment effect homogeneity, i.e., when treatment effects are the same for every unit. 
While Neyman's estimator is widely used under complete randomization, there is no unique or natural way to extend this estimator to more general experimental designs.
Specifically, given a general design and an unbiased estimator $\hat{\tau}$ of the average treatment effect (in particular, the Horvitz-Thompson estimator), it is not always straightforward to devise a principled approach that yields variance estimators which are generally conservative, unbiased (or nearly unbiased) under treatment effect homogeneity, and reduce to Neyman’s estimator in the case of complete randomization.

\subsection{Contribution and outline}
\label{sec_contribution}

To address the above problem, in this paper, we show that Neyman's variance estimator for completely randomized experiments can be derived using two alternative approaches, each of which are generalizable to more complex designs, where the average treatment effect is estimated unbiasedly using the Horvitz-Thompson estimator. We term them the \textit{imputation} approach and the \textit{contrast} approach. 
While both approaches recover Neyman's estimator under complete randomization, they yield fundamentally different variance estimators for more general designs.  

In the imputation approach, the variance of the estimated treatment effect is first expressed as a function of observed and missing potential outcomes across all the units. The variance is then estimated simply by estimating or imputing the missing potential outcome of each unit, similar to Fisherian inference. This approach is easy to implement and does not require the knowledge of the assignment mechanism in closed-form.

If the potential outcomes are imputed by setting the unit-level effect for unit $i$ to a deterministic value $\beta_i$, then the resulting variance estimator is shown to be conservative in finite samples for a large class of designs, irrespective of the choice of $\beta_i$. 
In addition, for some designs, the variance estimator is asymptotically unbiased under effect homogeneity, even when the true effects differ from the imputed effects.
This approach is also shown to produce reasonable variance estimators for a large class of designs if we set the unit-level effects to be the estimated average treatment effect $\hat{\tau}$. In particular, for completely randomized designs with equal group sizes, this choice results in a variance estimator that is asymptotically equivalent to the standard Neymanian variance estimator, implying that the Neymanian variance estimator can be obtained through a Fisherian mode of inference. Beyond complete randomization, we show that the imputation approach based on $\hat{\tau}$ is asymptotically unbiased for the true variance under mild conditions on the design and the potential outcomes.

Finally, we extend the imputation approach to arbitrary experimental designs, by introducing the notion of \textit{direct imputation}. Instead of estimating the missing potential outcome for each unit, the direct imputation approach aims to estimate a function of its observed and missing potential outcomes that directly relates to the variance. We show that this direct imputation approach yields a class of conservative variance estimators and, by leveraging a Jackknife-based method, provide a practical recommendation for a direct imputation-based variance estimator applicable to arbitrary designs. 

In the contrast approach, the variance is first expressed as a function of several unobservable contrasts of potential outcomes, where each contrast corresponds to an assignment vector in the support of the design. Here, a contrast is a linear combination of the treated and control potential outcomes across all units, where the coefficients sum to zero.
The variance is estimated by exchanging each unobservable contrast with an observable (and hence, estimable) contrast, or averages of observable contrasts. Unlike the imputation approach, the contrast approach does not focus on separately estimating the missing potential outcome for each unit, rather it focuses on directly estimating contrasts of potential outcomes across all units. We analyze the finite sample properties of this approach and show that, for a class of designs, the contrast approach yields a variance estimator that is conservative in general and unbiased under homogeneity. 
%For a large class of designs, these two approaches are shown to provide conservative variance estimators for the estimated average treatment effect that are unbiased (exactly or approximately) under treatment effect homogeneity.

% Related works

The approaches discussed in this paper contribute to the expanding literature on design-based inference in randomized experiments. A number of papers have focused on developing design-based variance estimators for completely randomized designs \cite{robins1988confidence, aronow2014sharp, nutz2022directional}, stratified and paired randomized designs \cite{kempthorne1955randomization, wilk1955randomization, gadbury2001randomization, abadie2008estimation, imai2008variance, higgins2015blocking, fogarty2018mitigating, pashley2021insights, luo2023interval}. Our work adds to this literature by focusing on principles that target a more general class of designs. 

In this regard, recent works develop methods to conduct design-based inference for arbitrary experimental designs \cite{aronow2013class, aronow2013conservative, harshaw2021optimized}. However, these important contributions typically focus on one facet of Neymanian inference, namely conservativeness, without directly addressing the other, i.e., unbiasedness under homogeneity.
Our emphasis on tailoring the variance estimator towards homogeneity stems from two reasons. First, a variance estimator that is unbiased under homogeneity seems more interpretable than one that is unbiased under an arbitrary restriction on the potential outcomes. Second, for a class of `measurable' designs (see Section \ref{sec_measurable}), the variance estimator that is unbiased under homogeneity is minimax (among a class of estimators) in that it minimizes the worst-case bias in estimating the true variance \cite{mukerjee2018using}. A closely related work is that of Mukerjee et al. \cite{mukerjee2018using}, which introduces a class of variance decompositions encompassing Neyman's variance decomposition under complete randomization as a special case. Despite its generality, identifying an appropriate decomposition for complex designs can often be challenging in practice. We show that, for some designs, the contrast approach recovers the decomposition-based variance estimators, thereby providing a concrete choice of the decomposition. Moreover, the aforementioned works typically employ Horvitz-Thompson-type estimators for variance estimation, which carries the risk of producing negative variance estimates \cite{lohr2021sampling}. In contrast, the imputation and contrast approaches guarantee non-negative variance estimates.

Finally, our work also contributes to the recent literature on synthesizing Neymanian and Fisherian modes of inference in randomized experiments \cite{ding2017paradox, ding2018randomization,  wu2021randomization}. These existing works pertain to hypothesis tests on average treatment effects and operate under completely randomized, stratified randomized, or factorial designs. In contrast, the imputation approach pertains to the estimation of average treatment effects and applies to a more general class of designs. Moreover, while existing results connecting the two modes of inference are valid in large samples, most of the results concerning the imputation approach hold in finite samples.

The paper is structured as follows. In Section \ref{sec_setup}, we present the experimental design setup and notations. In Section \ref{sec_measurable}, we formalize the notion of Neymanian inference and review the Neymanian decomposition approach to variance estimation. In Sections \ref{sec_substitution} and \ref{sec_imputation}, we formally propose and analyze the contrast approach and the imputation approach to Neymanian inference, respectively. In Section \ref{sec_final}, we conclude with a summary and remarks.

%%%%%%%%%%%%%%%%%%%%%%
%%%%%%%%%%%%%%%%%%%%%%
%%%%%%%%%%%%%%%%%%%%%%
\section{Setup, notations, and the estimation problem}
\label{sec_setup}

Consider a randomized experiment conducted on a finite population of $N$ units indexed by $i = 1,2,...,N$. Under the potential outcomes framework \cite{neyman1923application, rubin1974estimating}, let $Y_i(0)$ and $Y_i(1)$ be the potential outcomes for unit $i$ under control and treatment, respectively. In these notations, we assume that the stable unit treatment value assumption (SUTVA; \cite{rubin1980randomization}) holds, i.e., there is no interference across units and no different versions of each treatment level that may lead to different potential outcomes. Throughout the paper, the causal estimand of interest is the average treatment effect, $\tau = (1/N)\sum_{i=1}^{N}\{Y_i(1) - Y_i(0)\}.$
For unit $i$, let $W_i \in \{0,1\}$ be the treatment indicator, i.e., $W_i = 1$ if unit $i$ receives treatment and $W_i = 0$ otherwise. The observed outcome for unit $i$ is thus given by $Y^{\text{obs}}_i  =W_iY_i(1) + (1-W_i)Y_i(0)$. 

In this paper, we adopt a design-based (or randomization-based) perspective, where the potential outcomes are considered fixed quantities and randomness stems solely from the assignment of treatments to units.
Denote a generic experimental design by $d$. The corresponding assignment mechanism is defined as the joint probability distribution of $\bm{W} = (W_1,...,W_N)^\top$ under $d$, i.e., for $\bm{w} \in \{0,1\}^N$, $p_{\bm{w}} = \Pr_d(\bm{W} = \bm{w})$.
Moreover, denote $\mathcal{W} = \{\bm{w}\in \{0,1\}^N: p_{\bm{w}}>0\}$ as the support of $d$, and $\pi_i = \Pr_d(W_i = 1)$ as the propensity score.
For instance, in a completely randomized design (CRD) with $N_t $ treated and $N_c$ control units, $\mathcal{W} = \{\bm{w}\in \{0,1\}: \sum_{i=1}^{N}w_i = N_t\}$, $p_{\bm{w}} = \mathbbm{1}(w\in \mathcal{W})/\binom{N}{N_t}$, and $\pi_i = N_t/N$ for all $i \in \{1,2,...,N\}$.   

For estimating $\tau$ unbiasedly under a general design $d$, we require the following positivity assumption, which states that each unit has a positive probability of receiving either treatment or control.
\begin{assumption}[Positivity] \normalfont
    For design $d$, $0<\pi_i<1$ for all $i \in \{1,2,...,N\}$.
    \label{assump_positivity}
\end{assumption}
Under Assumption \ref{assump_positivity}, we can unbiasedly estimate the average treatment effect $\tau$ using the Horvitz-Thompson (or the inverse probability weighting) estimator $\hat{\tau} = (1/N)\sum_{i=1}^{N}W_i Y^{\text{obs}}_i/\pi_i - (1/N)\sum_{i=1}^{N}(1-W_i) Y^{\text{obs}}_i/(1-\pi_i).$
For a CRD, the Horvitz-Thompson estimator $\hat{\tau}$ boils down to the standard difference-in-means estimator, given by $\hat{\tau} = (1/N_t)\sum_{i:W_i = 1}Y^{\text{obs}}_i -  (1/N_c)\sum_{i:W_i = 0}Y^{\text{obs}}_i.$
Unless otherwise specified, throughout the rest of the paper, we assume that Assumption \ref{assump_positivity} holds.  
%%%%%%%%%%%%%%%%%%%%%%
%%%%%%%%%%%%%%%%%%%%%%
%%%%%%%%%%%%%%%%%%%%%%
\section{Formal problem and Neymanian decomposition}
\label{sec_measurable}
\label{sec_neyman_decomposition}
%\subsection{Formal problem}
%\label{sec_formal}
Our focus in this paper is on estimating the design-based variance of $\hat{\tau}$ in finite samples. 
For a CRD, Neyman \cite{neyman1923application} (see also \cite{imbens2015causal}, Chapter 6) proposed a conservative variance estimator that is unbiased under treatment effect homogeneity (or simply, homogeneity), i.e., when all unit-level treatment effects $Y_i(1)-Y_i(0)$ are constant. 
More formally, Neyman showed that, for a CRD,
\begin{align}
    \Var(\hat{\tau}) &= \frac{S^2_1}{N_t}+ \frac{S^2_0}{N_c}  - \frac{S^2_{10}}{N}, \label{eq_neyman}
\end{align}
where $\bar{Y}(1) = (1/N)\sum_{i=1}^{N}Y_i(1)$, $\bar{Y}(0) = (1/N)\sum_{i=1}^{N}Y_i(0)$, $S^2_1 = \frac{1}{N-1}\sum_{i=1}^{N}\{Y_i(1) -\bar{Y}(1)\}^2 $, $S^2_0 = \frac{1}{N-1}\sum_{i=1}^{N}\{Y_i(0) -\bar{Y}(0)\}^2 $, and $S^2_{10} = \frac{1}{N-1}\sum_{i=1}^{N}[\{Y_i(1)-Y_i(0)\} -\{\bar{Y}(1) - \bar{Y}(0)\}]^2$.  The variance decomposition in Equation \ref{eq_neyman} is called the Neymanian decomposition \cite{mukerjee2018using}.
While the first two terms in Equation \ref{eq_neyman} are unbiasedly estimable, the third term, $-S^2_{10}/N$, is not identifiable due to the fundamental problem of causal inference \cite{holland1986statistics}. However, since this term is always non-positive, estimating the first two terms unbiasedly would guarantee that the variance estimator is conservative. Hence, one can obtain the following conservative variance estimator (called the Neymanian estimator),
\begin{align}
    \hat{V}_{\text{Neyman}} = \frac{1}{N_t(N_t-1)}\sum_{i:W_i=1}(Y^{\text{obs}}_i -\bar{Y}_t)^2 + \frac{1}{N_c(N_c-1)}\sum_{i:W_i = 0}(Y^{\text{obs}}_i -\bar{Y}_c)^2,
\end{align}
where $\bar{Y}_t$ and $\bar{Y}_c$ are the means of the observed outcomes in the treatment and control groups, respectively. It follows that, $\E(\hat{V}_{\text{Neyman}}) \geq \Var(\hat{\tau})$. Moreover, when $Y_i(1) - Y_i(0) = \tau$ for all $i$, $\E(\hat{V}_{\text{Neyman}}) = \Var(\hat{\tau})$.

In this paper, we aim to conduct Neymanian inference for a more general class of experimental designs. More formally, for design $d$, we want to obtain an estimator $\hat{V}_d$ such that $\E(\hat{V}_d) \geq \Var_d(\hat{\tau})$ regardless of the potential outcomes. Moreover, when $Y_i(1) - Y_i(0) = \tau$, $\E(\hat{V}_d) = \Var_d(\hat{\tau})$ (or $\E(\hat{V}_d) \approx \Var_d(\hat{\tau})$).

%%%%%%%%%%%%%%%%%%%%%%%%%%%%
%%%%%%%%%%%%%%%%%%%%%%%%%%%%
%\subsection{The Neymanian decomposition approach}

An instinctive way to address the Neymanian inference problem for a general design $d$ is to decompose the variance of $\hat{\tau}$ under $d$ akin to that in Equation \ref{eq_neyman}, i.e., to write the variance as the sum of a component that is potentially estimable and another component that is not estimable in general, but is non-positive and vanishes under treatment effect homogeneity.
We call this approach the Neymanian decomposition approach. 
In this regard, Mukerjee et al. \cite{mukerjee2018using} provides a general class of Neymanian decompositions, which apply to linear unbiased estimators of finite population-level treatment contrasts in multivalued treatment settings. Proposition \ref{prop_decomp2} presents a special case of these decompositions for our current problem of average treatment effect estimation with binary treatment.
\begin{proposition} \normalfont
Let $\bm{Q}$ be an $N\times N$ matrix with $q_{ii'}$ as its $(i,i')$th element. Assume that (i) $\bm{Q}$ is non-negative definite, (ii) $q_{ii} = 1/N^2$ for all $i \in \{1,2,...,N\}$, and
(iii) $\sum_{j=1}^{N}q_{ij} = 0$ for all $i \in \{1,2,...,N\}$.
Then, for an arbitrary design $d$, 
\begin{equation}
\Var_d(\hat{\tau}) = \Tilde{V}_d(\bm{Q}) - \{\bm{Y}(1) - \bm{Y}(0)\}^\top \bm{Q} \{\bm{Y}(1) - \bm{Y}(0)\}, \label{eq_neyman_gen2}    
\end{equation}
where $\bm{Y}(1) = (Y_1(1),...,Y_N(1))^\top$, $\bm{Y}(0) = (Y_1(0),...,Y_N(0))^\top$, and
%\footnote{The decomposition in Equation \ref{eq_neyman_gen2} holds even if $\bm{Q}$ only satisfies condition (ii). Conditions (i) and (iii) are needed to ensure that $\Var_d(\hat{\tau}) \leq \tilde{V}_d(\bm{Q})$ and that treatment effect homogeneity implies $\bm{Q}(\bm{Y}(1)- \bm{Y}(0)) = \bm{0}$.}
%\vspace{-1cm}
\begin{align}
&\Tilde{V}_d(\bm{Q}) \nonumber \\
& = \frac{1}{N^2}\left\{\sum_{i=1}^{N}\frac{Y^2_i(1)}{\pi_i} + \sum_{i=1}^{N} \frac{Y^2_i(0) }{(1-\pi_i)} \right\} + 2\mathop{\sum\sum}_{i<i'}\left[Y_i(1)Y_{i'}(1)\left\{\frac{p_{ii'}(1,1)}{N^2\pi_i \pi_{i'}} + q_{ii'} - \frac{1}{N^2}\right\} \right. \nonumber\\
& \hspace{2.2in} \left. + Y_i(0)Y_{i'}(0)\left\{\frac{p_{ii'}(0,0)}{N^2(1-\pi_i) (1-\pi_{i'})}  + q_{ii'}- \frac{1}{N^2} \right\} \right] \nonumber \\
& \quad - 2\mathop{\sum\sum}_{i<i'}\left[Y_i(1)Y_{i'}(0)\left\{\frac{p_{ii'}(1,0)}{N^2\pi_i (1- \pi_{i'})} + q_{ii'} - \frac{1}{N^2}\right\} + Y_i(0)Y_{i'}(1)\left\{\frac{p_{ii'}(0,1)}{N^2(1-\pi_i) \pi_{i'}} + q_{ii'} - \frac{1}{N^2} \right\} \right]. \nonumber
\end{align}
    \label{prop_decomp2}
\end{proposition}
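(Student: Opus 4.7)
The plan is to rearrange the claim as $\tilde{V}_d(\bm{Q}) = \Var_d(\hat{\tau}) + \{\bm{Y}(1)-\bm{Y}(0)\}^\top \bm{Q} \{\bm{Y}(1)-\bm{Y}(0)\}$ and verify this identity by direct expansion, organizing contributions by pair of units $(i, i')$ and matching coefficients with the stated formula for $\tilde{V}_d(\bm{Q})$.

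First, I would compute $\Var_d(\hat{\tau})$ from the Horvitz-Thompson representation $\hat{\tau} = \hat{\mu}(1) - \hat{\mu}(0)$, writing it as $\Var_d(\hat{\mu}(1)) + \Var_d(\hat{\mu}(0)) - 2\Cov_d(\hat{\mu}(1), \hat{\mu}(0))$ and invoking the elementary second-moment identities $\Cov(W_i, W_{i'}) = p_{ii'}(1,1) - \pi_i\pi_{i'}$, $\Cov(W_i, 1-W_{i'}) = p_{ii'}(1,0) - \pi_i(1-\pi_{i'})$, and $\Cov(1-W_i, 1-W_{i'}) = p_{ii'}(0,0) - (1-\pi_i)(1-\pi_{i'})$ for $i \neq i'$, together with $\Var(W_i) = \pi_i(1-\pi_i)$ on the diagonal. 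The diagonal pieces collapse via the algebraic identity
\[
\frac{Y_i^2(1)(1-\pi_i)}{\pi_i} + \frac{Y_i^2(0)\pi_i}{1-\pi_i} + 2 Y_i(1)Y_i(0) = \frac{Y_i^2(1)}{\pi_i} + \frac{Y_i^2(0)}{1-\pi_i} - [Y_i(1)-Y_i(0)]^2,
\]
producing the first summand of $\tilde{V}_d(\bm{Q})$ minus $(1/N^2)\sum_i[Y_i(1)-Y_i(0)]^2$.

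Next, I would expand $\{\bm{Y}(1)-\bm{Y}(0)\}^\top \bm{Q}\{\bm{Y}(1)-\bm{Y}(0)\}$ (noting that $\bm{Q}$ is symmetric by condition (i)) into diagonal and off-diagonal parts. Its diagonal is $\sum_i q_{ii}[Y_i(1)-Y_i(0)]^2$, which by condition (ii) equals $(1/N^2)\sum_i[Y_i(1)-Y_i(0)]^2$ and exactly cancels the residual diagonal term from $\Var_d(\hat{\tau})$. Its off-diagonal part equals $2\sum_{i<i'}q_{ii'}[Y_i(1)Y_{i'}(1) + Y_i(0)Y_{i'}(0) - Y_i(1)Y_{i'}(0) - Y_i(0)Y_{i'}(1)]$; when added to the off-diagonal contributions of $\Var_d(\hat{\tau})$, the $q_{ii'}$ coefficients slot in precisely as the ``$+q_{ii'}$'' terms inside the braces of $\tilde{V}_d(\bm{Q})$, completing the identity.

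The main obstacle is purely bookkeeping: one must carefully pass between ordered sums $\sum_{i \neq i'}$ and unordered sums $2\sum_{i<i'}$, especially for the mixed terms $Y_i(1)Y_{i'}(0)$ where the summand is not symmetric in $(i,i')$ pointwise and symmetry is restored only after pairing with the companion term $Y_{i'}(1)Y_i(0)$ via $p_{i'i}(1,0) = p_{ii'}(0,1)$. It is worth noting that conditions (i) and (iii) play no role in the algebraic identity itself; they ensure instead that the subtracted quadratic form in \eqref{eq_neyman_gen2} is non-negative and vanishes under treatment effect homogeneity, which is what makes the decomposition Neymanian in character.
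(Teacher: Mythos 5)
Your proof is correct, and it takes a somewhat more direct route than the paper's. The paper first proves a standalone lemma establishing the decomposition for the canonical Neymanian choice $\bm{Q}_0 = \{N(N-1)\}^{-1}(\bm{I} - N^{-1}\bm{J})$ — the key step there is a global identity that rewrites $\sum_i 2Y_i(1)Y_i(0)$ in terms of squares and off-diagonal products, obtained by expanding $\sum_i\{Y_i(1)-Y_i(0)-\tau\}^2$ — and then handles general $\bm{Q}$ by adding and subtracting $\bm{G} = \bm{Q} - \bm{Q}_0$, using $q_{ii}=1/N^2$ only to conclude $g_{ii}=0$. You instead match the general $\bm{Q}$ in one pass: your per-unit diagonal identity converts $\pi_i(1-\pi_i)u_i^2$ (with $u_i = Y_i(1)/\pi_i + Y_i(0)/(1-\pi_i)$) into $Y_i^2(1)/\pi_i + Y_i^2(0)/(1-\pi_i) - \{Y_i(1)-Y_i(0)\}^2$, and the residual $-(1/N^2)\sum_i\{Y_i(1)-Y_i(0)\}^2$ cancels exactly against the diagonal of the subtracted quadratic form via condition (ii). This avoids the intermediate lemma and the perturbation step, at the cost of having to verify all four off-diagonal coefficient matchings directly (which your covariance identities $\Cov(W_i,W_{i'}) = p_{ii'}(1,1)-\pi_i\pi_{i'} = p_{ii'}(0,0)-(1-\pi_i)(1-\pi_{i'}) = -\{p_{ii'}(1,0)-\pi_i(1-\pi_{i'})\}$ do handle correctly, including the asymmetric mixed terms once paired as you describe). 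Your closing observation that (i) and (iii) are irrelevant to the identity itself and serve only to make the subtracted form non-negative and vanishing under homogeneity is accurate and consistent with how the paper uses the result.
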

%\vspace{-1cm}
The proof of Proposition follows from that of Theorem 2 in \cite{mukerjee2018using}. See the Appendix for an alternative proof. Under a CRD, $\Tilde{V}_d(\bm{Q})$ boils down to the standard Neymanian decomposition in Equation \ref{eq_neyman} for $\bm{Q} = (\bm{I} - (1/N)\bm{J})/\{N(N-1)\}$, where $\bm{I}$ is the identity matrix of order $N$ and $\bm{J}$ is the $N \times N$ matrix of all 1's. By construction of $\bm{Q}$, $\{\bm{Y}(1) - \bm{Y}(0)\}^\top \bm{Q} \{\bm{Y}(1) - \bm{Y}(0)\} \geq 0$, and under treatment effect homogeneity, $\{\bm{Y}(1) - \bm{Y}(0)\}^\top\bm{Q}\{\bm{Y}(1) - \bm{Y}(0)\} = \bm{0}$. Hence, an unbiased estimator of $\Tilde{V}_d(\bm{Q})$ is conservative for $\Var_d(\hat{\tau})$ in general, and unbiased under homogeneity (in fact, for some choices of $\bm{Q}$, it is unbiased under a weaker condition than homogeneity). We call such an estimator a Neymanian decomposition-based estimator.

Now, regardless of $\bm{Q}$, $\Tilde{V}_d(\bm{Q})$ can be estimated unbiasedly using a Horvitz-Thompson-type estimator if all the pairwise probabilities of treatment assignments are strictly positive. Akin to survey sampling \cite{kish1965survey}, we call this design condition the measurability condition and the corresponding design a measurable design. More formally, a design $d$ is called \textit{measurable}, if for all $i,i' \in \{1,2,...,N\}$ such that $i \neq i'$, and for $w,w' \in \{0,1\}$ $p_{ii'}(w,w') :=\Pr_d(W_i = w, W_{i'} = w')>0$. 

However, for non-measurable designs, $\Tilde{V}_d(\bm{Q})$ is not estimable for all $\bm{Q}$, and a judicious choice of $\bm{Q}$ is needed to ensure estimability. For instance, suppose there exists $i\neq i'$ such that $\Pr_d(W_i = 1,W_{i'} = 1) = 0$. Then, from Propostition \ref{prop_decomp2}, it follows that $\bm{Q}$ must satisfy $q_{ii'} = 1/{N^2}$. In general, with non-measurable designs, ensuring the existence of a $\bm{Q}$ that fulfills all these conditions is not straightforward. Even if such a $\bm{Q}$ exists, constructing it may pose challenges.

With this consideration, in the following two sections, we present and analyze two alternative approaches to Neymanian inference, namely, the contrast approach and the imputation approach. We discuss the conditions under which the corresponding variance estimators are conservative and unbiased (or close to unbiased). We also discuss connections of these approaches to the Neymanian decomposition approach and the Neymanian estimator.

\section{The contrast approach}
\label{sec_substitution}
\subsection{Motivating idea}
\label{sec_motivating_substitution}

In this section, we illustrate the key idea of the contrast approach using a toy example.
Consider a randomized experiment with $N = 4$ units and a design $d$ that selects one of the four assignment vectors in the set $\mathcal{W} = \{(1,1,0,0)^\top, (0,0,1,1)^\top, (1,0,0,1)^\top, (0,1,1,0)^\top\}$ with probability $0.25$ each. 
For this design, the group sizes are equal, and each unit has a propensity score $\pi_i = 0.5$. Consequently, the Horvitz-Thompson estimator is algebraically equivalent to the difference-in-means estimator.
Moreover, this design is not measurable because $\Pr(W_1 = w, W_3 = w) = 0$ and $\Pr(W_2 = w, W_4 = w) = 0$, for $w \in \{0,1\}$. 

The contrast approach is primarily based on the simply identity that $\Var_d(\hat{\tau}) =  \E_d[(\hat{\tau} - \tau)^2] = \sum_{\bm{w} \in \mathcal{W}}p_{\bm{w}}(\hat{\tau}(\bm{w}) - \tau)^2$, where $\hat{\tau}(\bm{w})$ is the value that $\hat{\tau}$ takes when $\bm{W} = \bm{w}$. Using this representation, we can write the variance of $\hat{\tau}$ in our example as follows.
\begin{align}
 \Var_d(\hat{\tau}) & = \frac{1}{8}\left[\left\{\frac{Y_1(0) + Y_1(1)}{2} + \frac{Y_2(0) + Y_2(1)}{2} - \frac{Y_3(0) + Y_3(1)}{2}  - \frac{Y_4(0) + Y_4(1)}{2}\right\}^2 \right. \nonumber\\
&\hspace{0.5in} \left. + \left\{\frac{Y_1(0) + Y_1(1)}{2} - \frac{Y_2(0) + Y_2(1)}{2} - \frac{Y_3(0) + Y_3(1)}{2}  + \frac{Y_4(0) + Y_4(1)}{2} \right\}^2 \right]. \label{eq_example1_1}
\end{align}
To find an estimator of this variance, first, we consider the case where the treatment effect is homogeneous across units, i.e., $Y_i(1) - Y_i(0) = \tau$ for all $i$. In this case, the variance expression in Equation \ref{eq_example1_1} simplifies to $\Var_d(\hat{\tau}) = (1/8)[\{Y_1(0) + Y_2(0) - Y_3(0) - Y_4(0)\}^2 + \{Y_1(0) - Y_2(0) - Y_3(0) + Y_4(0)\}^2]$.
%We see that, $\Var_d(\hat{\tau})$ depends only on $\bm{Y}(0)$ and not on $\tau$. More generally, under homogeneity, the mean squared error of the difference-in-means statistic (for estimating $\tau$) can be expressed solely in terms of the control potential outcomes, and functionally independent of the treatment effect. 
The first term in the variance expression can be written as \begin{align}
    \{Y_1(0) + Y_2(0) - Y_3(0) - Y_4(0)\}^2 &= \{(Y_1(0) + \tau) + Y_2(0) - Y_3(0) - (Y_4(0) + \tau)\}^2 \nonumber\\
    &= \{Y_1(1) + Y_2(0) - Y_3(0) - Y_4(1)\}^2. \label{eq_example1_3}
\end{align}
The right-hand side of Equation \ref{eq_example1_3} is unbiasedly estimable from the observed data. In fact, $\{Y_1(1) + Y_2(0) - Y_3(0) - Y_4(1)\}^2 = \E[\mathbbm{1}\{\bm{W} = (1,0,0,1)^\top
\}(Y^{\text{obs}}_1 + Y^{\text{obs}}_2 - Y^{\text{obs}}_3 - Y^{\text{obs}}_4)^2/0.25 ]$. Similarly, we can write the second term in the variance expression as $\{Y_1(0) -Y_2(0) - Y_3(0) + Y_4(0)\}^2 = \{Y_1(1) - Y_2(1) - Y_3(0) + Y_4(0)\}^2$, where the right-hand side is estimable, i.e., $\{Y_1(1) - Y_2(1) - Y_3(0) + Y_4(0)\}^2 = \E[\mathbbm{1}\{\bm{W} = (1,1,0,0)^\top\}(Y^{\text{obs}}_1 - Y^{\text{obs}}_2 - Y^{\text{obs}}_3 + Y^{\text{obs}}_4)^2/0.25 ].$ Thus, under homogeneity, we can unbiasedly estimate the variance of $\hat{\tau}$, even though the design is not measurable. 

We call this approach the contrast approach, since here, a contrast of potential outcomes corresponding to an assignment vector is substituted by an contrast of observed outcomes under another assignment vector. For instance, in Equation \ref{eq_example1_3}, the contrast $\{Y_1(0) + Y_2(0) - Y_3(0) - Y_4(0)\}^2$ corresponds to the assignment vector $(1,1,0,0)^\top$ (or, equivalently, $(0,0,1,1)^\top$) in that this contrast is same (up to a proportionality constant) as $\{\hat{\tau}(\bm{w}) - \tau\}^2$ when $\bm{w} = (1,1,0,0)^\top$. In the contrast approach, we substitute this contrast by $\{Y_1(1)+ Y_2(0)-Y_3(0) - Y_4(1)\}^2$, which is same as $(Y^{\text{obs}}_1-Y^{\text{obs}}_2-Y^{\text{obs}}_3+Y^{\text{obs}}_4)^2$ if $\bm{W} = (1,0,0,1)^\top$. In this case, the assignment vector $(1,0,0,1)^\top$ act as a substitute for the vector $(1,1,0,0)^\top$.

Now, if treatment effects are heterogeneous, the current variance estimator is no longer unbiased, since the first term in Equation \ref{eq_example1_1} no longer equals $\{Y_1(0) + Y_2(0) - Y_3(0) - Y_4(0)\}^2$, and Equation \ref{eq_example1_3} does not hold.
However, under the contrast approach, we can further leverage the symmetry of the design to obtain a variance estimator that is both conservative in general and unbiased under homogeneity. 

To illustrate, we first note that the assignment vector $(0,1,1,0)^\top$ also acts as a substitute for $(1,1,0,0)^\top$, since, under homogeneity, $\{Y_1(0) + Y_2(0) - Y_3(0) - Y_4(0)\}^2 = \{Y_1(0) + Y_2(1) - Y_3(1) - Y_4(0)\}^2$. In fact, $(0,1,1,0)^\top$ and $(1,0,0,1)^\top$ are the only two substitutes of $(1,1,0,0)^\top$. Combining the contrasts from these two substitutes, we can write $\{Y_1(0) + Y_2(0) - Y_3(0) - Y_4(0)\}^2 = (1/2)[\{Y_1(1) + Y_2(0) - Y_3(0) - Y_4(1)\}^2 + \{Y_1(0) + Y_2(1) - Y_3(1) - Y_4(0)\}^2]$, where the right-hand side is unbiasedly estimable. Moreover, by Jensen's inequality, $[\{Y_1(0) + Y_1(1)\}/2 + \{Y_2(0) + Y_2(1)\}/{2} - \{Y_3(0) + Y_3(1)\}/{2}  - \{Y_4(0) + Y_4(1)\}/{2}]^2 \leq (1/2)\{(Y_1(1) + Y_2(0) - Y_3(0) - Y_4(1))^2 + (Y_1(0) + Y_2(1) - Y_3(1) - Y_4(0))^2\},$
 and this upper bound is attained under homogeneity. A similar argument applies to the second term of the variance expression in Equation \ref{eq_example1_1}. Therefore, using the contrast approach, we obtain an estimator of $\Var_d(\hat{\tau})$ that is conservative in general and unbiased under homogeneity.

%%%%%%%%%%%%%%%%%%%%%%%%%%%%%%%%%%
%%%%%%%%%%%%%%%%%%%%%%%%%%%%%%%%%%
\subsection{General formulation and properties}
\label{sec_substitution_general}

% substitution condition
We now formalize the contrast approach to a more general class of experimental designs. To this end, we consider designs that assign units to two groups of equal size and have constant propensity scores across units.
\begin{assumption}[Equal sized groups with constant propensity score] \normalfont
    For design $d$, $\sum_{i=1}^{N}W_i = N/2$ and $\pi_i$ is constant across $i \in \{1,2,...,N\}$. \label{assump_equalsize}
\end{assumption}
Assumption \ref{assump_equalsize} implies that the propensity score is half for each unit and hence, the Horvitz-Thompson estimator is algebraically equivalent to the difference-in-means estimator.
This assumption holds for any design with fixed (non-random) group sizes that is symmetric with respect to the labeling of the groups, i.e., $p_{\bm{w}} = p_{\bm{1} - \bm{w}}$. Common examples include complete and stratified randomized designs with equal allocation, matched-pair designs, rerandomization with Mahalanobis distance (or any imbalance criteria) and equal allocation \cite{morgan2012rerandomization}. 
In Section \ref{sec_substitution_extension} in the Appendix, we discuss the contrast approach for a class of designs with unequal (and possibly random) group sizes.

The contrast approach relies on two key conditions on the design $d$. First, for every assignment vector $\bm{w} \in \mathcal{W}$, there should exist a substitute assignment vector $\tilde{\bm{w}} \in \mathcal{W}$. 
%In the toy example, the vector $(1,0,0,1)^\top$ acts as a substitute for $(1,1,0,0)^\top$. 
The presence of at least one substitute ensures that the resulting variance estimator is unbiased for $\Var_d(\hat{\tau})$ under homogeneity. Second, if $\tilde{w} \in \mathcal{W}$ is a substitute for $\bm{w}$, then $\bm{1} - \bm{w}$ should also be a substitute for $\bm{w}$. In the toy example, both $(1,0,0,1)^\top$ and $(0,1,1,0)^\top$ are substitutes for $(1,1,0,0)^\top$. The presence of this pair of substitutes ensures that the resulting estimator is conservative for $\Var_d(\hat{\tau})$ in general. Below we formalize these two conditions in Assumptions \ref{assump_substitution} and \ref{assump_closed}.
\begin{assumption}[Substitution condition] \normalfont
Fix a design $d$ with support $\mathcal{W}$. For $\bm{w} \in \mathcal{W}$, suppose units $\{i_1,...,i_{N/2}\}$ are assigned to treatment and units $\{j_1,...,j_{N/2}\}$ are assigned to control, where $\{i_1,...,i_{N/2}\} \cup \{j_1,...,j_{N/2}\} = \{1,...,N\}$. Then, there exists $\{i_{r_1},...i_{r_{N/4}}\} \subset \{i_1,...,i_{N/2}\}$ and $\{j_{s_1},...j_{s_{N/4}}\} \subset \{j_1,...,j_{N/2}\}$, such that under $d$, units $\{i_{r_1},...i_{r_{N/4}}\} \cup \{j_{s_1},...j_{s_{N/4}}\}$ are assigned to treatment with positive probability.
%\footnote{We note that for Assumption \ref{assump_substitution} to hold, $N$ should be a multiple of four.} 
\label{assump_substitution}  
\end{assumption}
Assumption \ref{assump_substitution} also allows us to formalize the notion of a substitute. Fix $\bm{w}\in \mathcal{W}$ and let $\tilde{\bm{w}} = (\tilde{w}_1,...,\tilde{w}_N)^\top$ be a vector of assignments such that, $\tilde{w}_i = 1$ if $i \in \{i_{r_1},...i_{r_{N/4}}\} \cup \{j_{s_1},...j_{s_{N/4}}\}$ and $\tilde{w}_i = 0$ otherwise. $\tilde{\bm{w}}$ is called a substitute for $\bm{w}$. If Assumption \ref{assump_substitution} holds, we have $p_{\tilde{\bm{w}}}>0$, i.e., $\tilde{\bm{w}} \in \mathcal{W}$. Thus, it is possible to replace a contrast of potential outcomes corresponding to $\bm{w}$ by a contrast of observed outcomes under $\tilde{\bm{w}}$, as shown for the toy example in Section \ref{sec_motivating_substitution}. 

% closed under label switching
\begin{assumption}[Closed under label switching] \normalfont
For design $d$ with support $\mathcal{W}$, $\bm{w} \in \mathcal{W} \iff \bm{1} - \bm{w} \in \mathcal{W}$. \label{assump_closed}   
\end{assumption}
If Assumption \ref{assump_closed} holds, then for any substitute $\tilde{\bm{w}}$ of $\bm{w}$, it follows that $\bm{1}-\tilde{\bm{w}}$ is also a substitute of $\bm{w}$. 
Thus, under Assumptions \ref{assump_substitution} and \ref{assump_closed}, there exists more than one substitutes for $\bm{w} \in \mathcal{W}$. We denote $\mathcal{G}(\bm{w})$ as an arbitrary set of substitutes of $\bm{w}$. $\mathcal{G}(\bm{w})$ is said to be closed under label switching, if $\tilde{\bm{w}} \in \mathcal{G}(\bm{w}) \iff \bm{1} - \tilde{\bm{w}} \in \mathcal{G}(\bm{w})$. 

Given a set of substitutes $\mathcal{G}(\bm{w})$ for $\bm{w} \in \mathcal{W}$, we obtain the following closed-form expression for the estimator of $\Var_d(\hat{\tau})$ under the contrast approach. 
\begin{equation}
\hat{V}_{\text{sub}} = \frac{4}{N^2}\sum_{\bm{w}:\bm{W} \in \mathcal{G}(\bm{w})} \frac{p_{\bm{w}}}{p_{\bm{W}}} \frac{1}{|\mathcal{G}(\bm{w})|}\{\bm{l}(\bm{w})^\top \bm{Y}^{\text{obs}}\}^2,
\label{eq_subsgen}
\end{equation}
where $\bm{l}(\bm{w}) = (l_1(\bm{w}),...,l_N(\bm{w}))^\top$ is such that $l_i(\bm{w}) = 1$ if $w_i = 1$ and $l_i(\bm{w}) = -1$ if $w_i = 0$.
Theorem \ref{thm_substitution} formalizes conditions under which the estimator is conservative in general and unbiased under homogeneity.   
\begin{theorem} \normalfont
Let $d$ be a design with support $\mathcal{W}$, satisfying Assumptions \ref{assump_equalsize} and \ref{assump_substitution}. Consider the estimator in Equation \ref{eq_subsgen}. Under treatment effect homogeneity, $\E_d(\hat{V}_{\text{sub}}) = \Var_d(\hat{\tau})$. Moreover, if Assumption \ref{assump_closed} holds and for all $\bm{w}$, and $\mathcal{G}(\bm{w})$ is closed under label switching, then $\E_d(\hat{V}_{\text{sub}}) \geq \Var_d(\hat{\tau})$. 
Finally, by construction, $\hat{V}_{\text{sub}}$ is non-negative, thereby avoiding the negativity issues associated with standard Horvitz-Thompson-type variance estimators. 
\label{thm_substitution}
\end{theorem}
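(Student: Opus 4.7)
The plan is to reduce the theorem to one algebraic identity and a pointwise comparison of inner sums. The starting point is the elementary decomposition $\Var_d(\hat{\tau}) = \sum_{\bm{w}\in\mathcal{W}} p_{\bm{w}}\{\hat{\tau}(\bm{w})-\tau\}^2$, together with the key identity
\begin{equation}
\hat{\tau}(\bm{w}) - \tau = \frac{1}{N}\bm{l}(\bm{w})^\top\{\bm{Y}(1) + \bm{Y}(0)\}, \nonumber
\end{equation}
which follows from Assumption \ref{assump_equalsize} (under which $\hat{\tau}$ is the difference in means) combined with $\sum_i l_i(\bm{w}) = 0$. This recasts $\Var_d(\hat{\tau})$ as a weighted sum of squared linear functionals of $\bm{Y}(1)+\bm{Y}(0)$ with coefficient vectors $\bm{l}(\bm{w})$, which is precisely what $\hat{V}_{\text{sub}}$ is designed to mimic.

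Next, I would compute $\E_d(\hat{V}_{\text{sub}})$ by swapping the order of summation over $\bm{w}$ and the realized $\bm{W}$. Writing $\bm{Y}^{\text{obs}}(\bm{w}^*)$ for the observed outcome vector when $\bm{W} = \bm{w}^*$, the $p_{\bm{W}}$ in the denominator is absorbed against the marginal $p_{\bm{w}^*}$ from the expectation, yielding
\begin{equation}
\E_d(\hat{V}_{\text{sub}}) = \frac{4}{N^2}\sum_{\bm{w}\in\mathcal{W}} \frac{p_{\bm{w}}}{|\mathcal{G}(\bm{w})|} \sum_{\bm{w}^*\in\mathcal{G}(\bm{w})} \{\bm{l}(\bm{w})^\top \bm{Y}^{\text{obs}}(\bm{w}^*)\}^2. \nonumber
\end{equation}
The remainder of the proof is to compare, for each $\bm{w}$, the inner sum over substitutes with the target $|\mathcal{G}(\bm{w})| \cdot (N^2/4)\{\hat{\tau}(\bm{w})-\tau\}^2$.

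For the homogeneity claim, the crux is that any substitute $\bm{w}^*$ of $\bm{w}$ agrees with $\bm{w}$ on exactly $N/4$ of the originally treated units and differs on $N/4$ of them, and likewise for the originally controlled units, which is exactly what Assumption \ref{assump_substitution} encodes. Substituting $Y_i(1) = Y_i(0) + \tau$ into $\bm{l}(\bm{w})^\top\bm{Y}^{\text{obs}}(\bm{w}^*)$ then makes the $\tau$ contributions cancel by these balanced counts, leaving $\bm{l}(\bm{w})^\top\bm{Y}(0) = \tfrac{1}{2}\bm{l}(\bm{w})^\top\{\bm{Y}(1)+\bm{Y}(0)\} = (N/2)\{\hat{\tau}(\bm{w})-\tau\}$, independently of which $\bm{w}^*\in\mathcal{G}(\bm{w})$ was chosen. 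Squaring and plugging back into the displayed expectation yields the exact equality $\E_d(\hat{V}_{\text{sub}}) = \Var_d(\hat{\tau})$.

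For the general inequality, I would pair each $\bm{w}^* \in \mathcal{G}(\bm{w})$ with $\bm{1}-\bm{w}^* \in \mathcal{G}(\bm{w})$, a valid pairing under the label-switching closedness of $\mathcal{G}(\bm{w})$. The pointwise identity $Y_i^{\text{obs}}(\bm{w}^*) + Y_i^{\text{obs}}(\bm{1}-\bm{w}^*) = Y_i(1) + Y_i(0)$ forces $\bm{l}(\bm{w})^\top\bm{Y}^{\text{obs}}(\bm{w}^*) + \bm{l}(\bm{w})^\top\bm{Y}^{\text{obs}}(\bm{1}-\bm{w}^*) = N\{\hat{\tau}(\bm{w})-\tau\}$, after which the elementary bound $a^2+b^2 \geq (a+b)^2/2$ applied pairwise and summed over the $|\mathcal{G}(\bm{w})|/2$ pairs lower-bounds the inner sum by $(|\mathcal{G}(\bm{w})|N^2/4)\{\hat{\tau}(\bm{w})-\tau\}^2$, giving $\E_d(\hat{V}_{\text{sub}}) \geq \Var_d(\hat{\tau})$. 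Non-negativity is immediate since $\hat{V}_{\text{sub}}$ is a weighted sum of squares with non-negative weights. The main subtlety, rather than a computational obstacle, is the careful bookkeeping of the four overlap counts between $\bm{w}$ and $\bm{w}^*$ used in the $\tau$-cancellation step; once those counts are tracked, every remaining step is essentially linear algebra.
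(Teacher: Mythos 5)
Your proposal is correct and follows essentially the same route as the paper's proof: the same representation $\Var_d(\hat{\tau}) = \tfrac{4}{N^2}\sum_{\bm{w}}p_{\bm{w}}\{\bm{l}(\bm{w})^\top\bm{c}\}^2$ with $\bm{c}=\tfrac12\{\bm{Y}(1)+\bm{Y}(0)\}$, the same exchange of summation to compute $\E_d(\hat{V}_{\text{sub}})$, the same $N/4$--$N/4$ cancellation of the $\tau$ terms under homogeneity, and the same pairing of $\tilde{\bm{w}}$ with $\bm{1}-\tilde{\bm{w}}$ combined with two-point Jensen (your $a^2+b^2\ge(a+b)^2/2$) for conservativeness. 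No gaps; the bookkeeping of overlap counts you flag is exactly the content of Assumption \ref{assump_substitution} and is handled correctly.
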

Theorem \ref{thm_substitution} shows that using the contrast approach, we can conduct Neymanian inference for a class of designs satisfying Assumptions \ref{assump_equalsize}, \ref{assump_substitution}, and \ref{assump_closed}. 
Note that the variance estimator $\hat{V}_{\text{sub}}$ is conservative in general (and unbiased under homogeneity) for any set of substitutes $\mathcal{G}(\bm{w})$ that is closed under label switch. Thus, there can be multiple choices of such $\mathcal{G}(\bm{w})$, and as a result, multiple variance estimators under the contrast approach. Also, if unbiasedness under homogeneity is the only requirement, then we no longer require $\mathcal{G}(\bm{w})$ to be closed under label switch, and hence, this class of variance estimators can be enlarged further. In this case, the estimators are valid under weaker restrictions on the design in that Assumption \ref{assump_closed} is no longer required.

Now, let $\mathcal{G}^*(\bm{w})$ be the set of all substitutes of $\bm{w}$. The corresponding variance estimator is likely to be more informative than other variance estimators of this type since it utilizes the most information from the design. Moreover, when $\mathcal{G}(\bm{w}) = \mathcal{G}^*(\bm{w})$ for all $\bm{w} \in \mathcal{W}$, the variance estimator can be further simplified as
$\hat{V}_{\text{sub}} = (4/N^2)\sum_{\bm{w} \in \mathcal{G}^*(\bm{W})} (p_{\bm{w}}/p_{\bm{W}})\{\bm{l}(\bm{w})^\top \bm{Y}^{\text{obs}}\}^2/|\mathcal{G}^*(\bm{w})|$,  
where the above equality holds because $\bm{W} \in \mathcal{G}^*(\bm{w}) \iff \bm{w} \in \mathcal{G}^*(\bm{W})$.

We conclude this section by focusing on the contrast approach for CRD and the toy example in Section \ref{sec_motivating_substitution}.
In a CRD, Assumption \ref{assump_closed} holds when the group sizes are equal ($N_t = N_c$), and Assumption \ref{assump_substitution} holds when $N$ is a multiple of four. Moreover, for every $\bm{w} \in \mathcal{W}$, there are multiple substitutes. To illustrate, for $\bm{w} = (\bm{1}^\top,\bm{0}^\top)$, any assignment vector that treats an arbitrary subset of $N/4$ units among the first $N/2$ units and another arbitrary subset of $N/4$ units among the last $N/2$ units is a substitute. Now, if we consider the full set of substitutes $\mathcal{G}^*(\bm{w})$ for each $\bm{w}$, then the resulting variance estimator is shown to be algebraically same as the Neymanian variance estimator. Theorem \ref{thm_substitute_neyman} formalizes this result. 
\begin{theorem} \normalfont
Consider a completely randomized design with support $\mathcal{W}$, and assume that $N = 4m$ for some positive integer $m$. Moreover, consider the variance estimator $\hat{V}_{\text{sub}}$ with the full set of substitutes, i.e., $\mathcal{G}(\bm{w}) = \mathcal{G}^*(\bm{w})$ for all $\bm{w} \in \mathcal{W}$. Then, $\hat{V}_{\text{sub}} = \hat{V}_{\text{Neyman}}.$ \label{thm_substitute_neyman}
\end{theorem}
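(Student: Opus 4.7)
The plan is to unpack $\hat{V}_{\text{sub}}$ under the CRD and show it collapses algebraically to the pooled sample-variance form of $\hat{V}_{\text{Neyman}}$. First, under a CRD we have $p_{\bm{w}} = 1/\binom{N}{N/2}$ for every $\bm{w} \in \mathcal{W}$, so the weight $p_{\bm{w}}/p_{\bm{W}}$ in Equation \ref{eq_subsgen} equals $1$. The substitute relation is symmetric in a CRD, namely $\bm{W} \in \mathcal{G}^*(\bm{w}) \iff \bm{w} \in \mathcal{G}^*(\bm{W})$, so the outer sum in Equation \ref{eq_subsgen} can be rewritten as a sum over $\bm{w} \in \mathcal{G}^*(\bm{W})$. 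Writing $T = \{i : W_i = 1\}$ and $C = \{i : W_i = 0\}$, each of size $n := N/2 = 2m$, every $\bm{w} \in \mathcal{G}^*(\bm{W})$ corresponds uniquely to a pair $(A,B)$ with $A \subset T$, $B \subset C$, $|A| = |B| = n/2$; in particular $|\mathcal{G}^*(\bm{W})| = \binom{n}{n/2}^2$.

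Second, I set $S_A := 2\sum_{i \in A} Y^{\text{obs}}_i - \sum_{i \in T} Y^{\text{obs}}_i$ and $S_B := 2\sum_{i \in B} Y^{\text{obs}}_i - \sum_{i \in C} Y^{\text{obs}}_i$. A direct expansion shows $\bm{l}(\bm{w})^\top \bm{Y}^{\text{obs}} = S_A + S_B$ for the substitute indexed by $(A,B)$, so I expand $(S_A + S_B)^2$ and sum over the pairs $(A,B)$. The cross term vanishes: each $i \in T$ lies in exactly half of the $(n/2)$-subsets of $T$, which gives $\sum_A S_A = 0$ (and symmetrically $\sum_B S_B = 0$), hence $\sum_{A,B} S_A S_B = (\sum_A S_A)(\sum_B S_B) = 0$. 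Only the two diagonal sums $\sum_{A,B} S_A^2$ and $\sum_{A,B} S_B^2$ survive.

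The main combinatorial step is evaluating $\sum_A S_A^2$. Using the standard hypergeometric-type counts $|\{A : i \in A\}|/\binom{n}{n/2} = 1/2$ and $|\{A : i,j \in A\}|/\binom{n}{n/2} = (n/2 - 1)/(2(n-1))$ for $i \neq j$, expanding the square and collecting diagonal and off-diagonal terms should yield
\[
\frac{1}{\binom{n}{n/2}}\sum_A S_A^2 \;=\; \frac{n}{n-1}\sum_{i \in T}\bigl(Y^{\text{obs}}_i - \bar{Y}_t\bigr)^2,
\]
with the analogous identity for $B$. Substituting these into Equation \ref{eq_subsgen}, the prefactor becomes $\frac{4}{N^2 \binom{n}{n/2}^2} \cdot \binom{n}{n/2}^2 \cdot \frac{n}{n-1} = \frac{4n}{N^2(n-1)} = \frac{1}{(N/2)(N/2-1)}$, which reproduces exactly the two sample-variance terms defining $\hat{V}_{\text{Neyman}}$. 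The only non-routine step is the combinatorial identity above: after double expansion one must carefully separate diagonal from off-diagonal contributions, but it is ultimately a sampling-without-replacement calculation on the $n$-set $T$, and everything else is bookkeeping.
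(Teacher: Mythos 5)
Your proposal is correct, and I verified the key identity you flag as the "only non-routine step": writing $z_i = Y^{\text{obs}}_i - \bar{Y}_t$ for $i \in T$, one has $S_A = 2\sum_{i\in A} z_i$, and the hypergeometric counts give $\frac{1}{\binom{n}{n/2}}\sum_A S_A^2 = 4\bigl[\tfrac12 - \tfrac{n/2-1}{2(n-1)}\bigr]\sum_{i\in T} z_i^2 = \tfrac{n}{n-1}\sum_{i\in T} z_i^2$, exactly as claimed; the prefactor arithmetic then lands on $1/\{(N/2)(N/2-1)\}$. Your route differs from the paper's in how the combinatorics is organized: the paper writes both $\hat{V}_{\text{sub}}$ and $\hat{V}_{\text{Neyman}}$ as quadratic forms in $\bm{Y}^{\text{obs}}$ and matches coefficients monomial by monomial --- computing the coefficient of $(Y^{\text{obs}}_i)^2$, of $Y^{\text{obs}}_iY^{\text{obs}}_j$ within a group, and of $Y^{\text{obs}}_iY^{\text{obs}}_j$ across groups via three separate binomial counts. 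Your decomposition $\bm{l}(\bm{w})^\top\bm{Y}^{\text{obs}} = S_A + S_B$ with the cross term annihilated by $\sum_A S_A = 0$ replaces those three counts by a single sampling-without-replacement variance calculation, and it is more explanatory: it shows \emph{why} the treated and control halves of each substitute decouple into the two pooled sample-variance terms of $\hat{V}_{\text{Neyman}}$, rather than merely verifying that the cross-group coefficient happens to vanish. Both arguments are direct algebraic verifications of the same identity; yours is the tidier bookkeeping, the paper's generalizes more mechanically to settings (such as the matched-pair case in Theorem \ref{thm_substitute_matched}) where an analogous centering is less immediate.
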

%\vspace{-1cm}
Theorem \ref{thm_substitute_neyman} thus connects the contrast approach to the approach based on traditional Neymanian decomposition and provides an alternative interpretation of the Neymanian variance estimator $\hat{V}_{\text{Neyman}}$. Concretely, under a CRD with equal group sizes, the $\hat{V}_{\text{Neyman}}$ is the variance estimator that uses the largest class of substitutes. 

The contrast approach is also connected to the general Neymanian decomposition approach, as described in Theorem \ref{prop_decomp2}. To see this, we revisit the contrast approach for the toy example. The corresponding design satisfies Assumptions \ref{assump_equalsize}, \ref{assump_substitution}, and \ref{assump_closed}. By considering the full set of substitutes, the resulting variance estimator is shown to be algebraically equivalent to a Neymanian decomposition-based variance estimator. Proposition \ref{prop_subs_example} formalizes this result.
\begin{proposition} \normalfont
Let $d$ be the design in the toy example in Section \ref{sec_motivating_substitution}. The variance estimator under the contrast approach is the same as the Neymanian decomposition-based estimator with
$\bm{Q} = (1/16)\begin{psmallmatrix}
    1 & -1 & 1 & -1 \\
    -1 & 1 & -1 & -1 \\
    1 & -1 & 1 & -1 \\
    -1 & 1 & -1 & 1 \\
\end{psmallmatrix}.$
\label{prop_subs_example}
\end{proposition}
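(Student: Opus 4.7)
The plan is a case-by-case verification, exploiting the small size of the example ($N=4$, $|\mathcal{W}|=4$). First I would evaluate $\hat V_{\text{sub}}$ on each realization of $\bm{W}$. From Section \ref{sec_motivating_substitution}, for every $\bm{w}\in\mathcal{W}$ the full set of substitutes is $\mathcal{G}^*(\bm{w}) = \mathcal{W} \setminus \{\bm{w}, \bm{1}-\bm{w}\}$, which has cardinality two, and since all four assignments have probability $1/4$, the ratios $p_{\bm{w}}/p_{\bm{W}}$ in (\ref{eq_subsgen}) are identically one. The two substitute contrasts contributing to a given branch yield the same square, so $\hat V_{\text{sub}}$ collapses to $\tfrac14(\bm{l}(\bm{w}^*)^\top\bm{Y}^{\text{obs}})^2$ for an appropriate $\bm{w}^*$, which I would then expand in terms of the observed potential outcomes for each realized $\bm{W}$.

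Next, I would construct the unbiased Horvitz-Thompson-type estimator of $\tilde V_d(\bm{Q})$ dictated by Proposition \ref{prop_decomp2}. The design has $\pi_i=1/2$ for every $i$, and the pairwise probabilities are $p_{ii'}(w,w') = 1/4$ for every $(w,w')\in\{0,1\}^2$ whenever $(i,i') \in \{(1,2),(1,4),(2,3),(3,4)\}$, while $p_{13}(w,w) = p_{24}(w,w) = 0$ for $w\in\{0,1\}$. The specific sign pattern of the stated $\bm{Q}$ is chosen precisely so that the coefficients of the non-identifiable pair products $Y_1(w)Y_3(w)$ and $Y_2(w)Y_4(w)$ in $\tilde V_d(\bm{Q})$ cancel, rendering $\tilde V_d(\bm{Q})$ estimable despite the design being non-measurable. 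For every surviving pair product the Horvitz-Thompson weight equals $4$, and each squared term $Y_i^2(w)$ is weighted by $2$. Along the way I would confirm that $\bm{Q}$ meets the hypotheses of Proposition \ref{prop_decomp2}: constant diagonal $1/N^2$, zero row sums, and non-negative definiteness (the last follows from $\bm{Q}$'s near rank-one structure).

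The final step is to match coefficients of $(Y^{\text{obs}}_i)^2$ and $Y^{\text{obs}}_i Y^{\text{obs}}_{i'}$ in the two estimators, branch by branch. The symmetry of the design under $\bm{w}\leftrightarrow\bm{1}-\bm{w}$ leaves both estimators invariant under this swap, so the verification reduces to two representative cases, for example $\bm{W}=(1,1,0,0)^\top$ and $\bm{W}=(1,0,0,1)^\top$. The main obstacle is bookkeeping: the decomposition estimator is a long sum of signed cross-products, and a clean way to organize the comparison is to rewrite $\tfrac14(\bm{l}(\bm{w}^*)^\top\bm{Y}^{\text{obs}})^2 = \tfrac14(\bm{Y}^{\text{obs}})^\top \bm{l}(\bm{w}^*)\bm{l}(\bm{w}^*)^\top\bm{Y}^{\text{obs}}$ and compare the resulting rank-one matrix $\tfrac14\bm{l}(\bm{w}^*)\bm{l}(\bm{w}^*)^\top$ entry by entry with the coefficient matrix of the Neymanian decomposition estimator on that branch.
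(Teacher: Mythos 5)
Your proposal is correct and follows essentially the same route as the paper: evaluate both the contrast estimator (via the full substitute sets, e.g.\ $\mathcal{G}^*((1,1,0,0)^\top)=\{(1,0,0,1)^\top,(0,1,1,0)^\top\}$) and the Horvitz--Thompson estimator of $\tilde V_d(\bm{Q})$ on the two branches $\bm{W}\in\{(1,1,0,0)^\top,(0,0,1,1)^\top\}$ and $\bm{W}\in\{(1,0,0,1)^\top,(0,1,1,0)^\top\}$, and check that both collapse to the same quarter-square of a single observed contrast. One incidental point: when you carry out your verification of the hypotheses of Proposition \ref{prop_decomp2}, you will find the displayed $\bm{Q}$ has a sign typo in its $(2,4)$ entry (it should be $+1/16$, making $\bm{Q}=\frac{1}{16}\bm{v}\bm{v}^\top$ with $\bm{v}=(1,-1,1,-1)^\top$, hence symmetric, rank one, with zero row sums); with that correction the cancellation of the non-identifiable products $Y_1(w)Y_3(w)$ and $Y_2(w)Y_4(w)$ goes through exactly as you describe.
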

%\vspace{-1cm}
It is straightforward to verify that the matrix $\bm{Q}$ in Proposition \ref{prop_subs_example} satisfies all the required conditions in Proposition \ref{prop_decomp2}. Thus, in this example, the contrast approach provides an alternative way to obtain a Neymanian decomposition-based variance estimator by implicitly constructing a suitable $\bm{Q}$. In general, however, the contrast approach can yield estimators that fall outside the class of Neymanian decomposition-based variance estimators; see the Appendix for details.

Finally, the design in this example is equivalent to a matched-pair design, with units $(1,3)$ and $(2,4)$ forming the two pairs. It is straightforward to verify that, in this case, the variance estimator under the contrast approach coincides with the standard variance estimator for matched-pair experiments (see, e.g., \citealt{imbens2015causal}, Chapter 10). In fact, Theorem \ref{thm_substitute_matched} shows that this equivalence holds for more general matched-pair designs. In a matched-pair design, the standard estimator of $\Var_d(\hat{\tau})$ is given by  $\hat{V}_{\text{pair}} = \frac{4}{N(N-2)}\sum_{j=1}^{N/2}\{(Y^{\text{obs}}_{jt}-Y^{\text{obs}}_{jc})- (\bar{Y}_t - \bar{Y}_c)\}^2$, where $Y^{\text{obs}}_{jt}$ and $Y^{\text{obs}}_{jc}$ are observed outcomes of the treated and control units in pair $j$, respectively.
\begin{theorem} \normalfont
Consider a matched-pair design with $N = 4k$ for an integer $k\geq 1$. Moreover, consider the variance estimator $\hat{V}_{\text{sub}}$ with the full set of substitutes, i.e., $\mathcal{G}(\bm{w}) = \mathcal{G}^*(\bm{w})$ for all $\bm{w} \in \mathcal{W}$. Then, $\hat{V}_{\text{sub}} = \hat{V}_{\text{pair}}.$ 
\label{thm_substitute_matched}
\end{theorem}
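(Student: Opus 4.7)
The plan is to reduce both sides of the claimed equality to affine functions of $\sum_j D_j^2$ and $\bigl(\sum_j D_j\bigr)^2$, where $D_j := Y^{\text{obs}}_{jt} - Y^{\text{obs}}_{jc}$ is the within-pair treated-minus-control difference. First I would characterize $\mathcal{G}^*(\bm{W})$. Because the support $\mathcal{W}$ of a matched-pair design consists of vectors that treat exactly one unit per pair, any substitute $\bm{w}$ of $\bm{W}$ must, within each pair, either agree with $\bm{W}$ (treat the same unit $\bm{W}$ treats) or flip (treat the other unit instead); and for $N/4$ treated units of $\bm{W}$ plus $N/4$ control units of $\bm{W}$ to form the treated set of $\bm{w}$, the number of agreements must equal $N/4$. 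Encoding each substitute by a sign vector $\epsilon \in \{-1,+1\}^{N/2}$ with $\epsilon_j = +1$ iff $\bm{w}$ agrees with $\bm{W}$ on pair $j$, the substitutes of $\bm{W}$ are parametrized by $\mathcal{E} := \{\epsilon \in \{-1,+1\}^{N/2} : \sum_j \epsilon_j = 0\}$, of cardinality $\binom{N/2}{N/4}$, and $\bm{l}(\bm{w})^\top \bm{Y}^{\text{obs}} = \sum_{j=1}^{N/2} \epsilon_j D_j$.

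Since $p_{\bm{w}} = 2^{-N/2}$ is constant on $\mathcal{W}$ and $|\mathcal{G}^*(\bm{w})|$ does not depend on $\bm{w}$, the simplified form of $\hat{V}_{\text{sub}}$ at the end of Section~\ref{sec_substitution_general} collapses to
$$\hat{V}_{\text{sub}} = \frac{4}{N^2 \binom{N/2}{N/4}} \sum_{\epsilon \in \mathcal{E}} \Bigl(\sum_{j=1}^{N/2} \epsilon_j D_j\Bigr)^2.$$
Expanding the square, $\sum_j D_j^2$ carries weight $\sum_\epsilon \epsilon_j^2 = |\mathcal{E}|$, while each cross term $D_j D_{j'}$ with $j \neq j'$ carries a common weight $c := \sum_\epsilon \epsilon_j \epsilon_{j'}$ (by the permutation symmetry of $\mathcal{E}$). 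The crucial step is to pin down $c$ cleanly: since $\sum_j \epsilon_j = 0$ identically on $\mathcal{E}$, expanding $0 = (\sum_j \epsilon_j)^2$ and summing over $\epsilon \in \mathcal{E}$ yields $(N/2)|\mathcal{E}| + (N/2)(N/2-1) c = 0$, so $c = -|\mathcal{E}|/(N/2 - 1)$. This short symmetry argument is the main combinatorial content of the proof and avoids direct binomial manipulation.

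Combining these with $\sum_{j \neq j'} D_j D_{j'} = (\sum_j D_j)^2 - \sum_j D_j^2$ and using $N = 4k$, routine algebra gives
$$\hat{V}_{\text{sub}} = \frac{4}{N(N-2)} \sum_{j=1}^{N/2} D_j^2 \;-\; \frac{8}{N^2(N-2)} \Bigl(\sum_{j=1}^{N/2} D_j\Bigr)^2.$$
On the other side, $\bar{Y}_t - \bar{Y}_c = (2/N)\sum_j D_j$, so expanding $\hat{V}_{\text{pair}} = \frac{4}{N(N-2)}\sum_j (D_j - (\bar{Y}_t - \bar{Y}_c))^2$ produces the identical closed form, establishing $\hat{V}_{\text{sub}} = \hat{V}_{\text{pair}}$. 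The principal obstacle is identifying the value of $c$; the variance-of-zero trick above dispatches it without casework on binomial coefficients.
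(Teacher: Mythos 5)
Your proof is correct and reaches the result by the same overall route as the paper: both arguments identify $\mathcal{G}^*(\bm{W})$ with the $\binom{N/2}{N/4}$ assignments obtained by flipping the treatment label in exactly half of the pairs, reduce $\hat{V}_{\text{sub}}$ to a quadratic form in the within-pair differences $D_j$, and compare with $\hat{V}_{\text{pair}}$ expressed in the same variables. Where you genuinely depart from the paper is in evaluating the cross-term weight $c=\sum_{\epsilon\in\mathcal{E}}\epsilon_j\epsilon_{j'}$: the paper does explicit casework with binomial coefficients, counting substitutes according to whether pairs $j$ and $j'$ are both flipped, both kept, or split (arriving at $2\binom{2k-2}{k-2}+2\binom{2k-2}{k}-4\binom{2k-2}{k-1}$ for twice the cross-term sum and simplifying), whereas you exploit the identity $\sum_j\epsilon_j\equiv 0$ on $\mathcal{E}$ together with permutation symmetry to obtain $c=-|\mathcal{E}|/(N/2-1)$ in one line. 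Your version dispenses with the binomial manipulation, produces clean closed forms for both estimators rather than a term-by-term coefficient match, and would adapt more easily to settings where the fraction of flipped pairs is some other fixed value. The supporting claims all check out: the characterization of substitutes is forced by $\bm{w}\in\mathcal{W}$ (one treated unit per pair) combined with the $N/4$-overlap requirement of Assumption \ref{assump_substitution}; the constancy of $p_{\bm{w}}$ and of $|\mathcal{G}^*(\bm{w})|$ legitimately collapses $\hat{V}_{\text{sub}}$ to the stated average over $\mathcal{E}$; and $\bar{Y}_t-\bar{Y}_c=(2/N)\sum_j D_j$ gives the matching expansion of $\hat{V}_{\text{pair}}$.
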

Thus, similar to the Neymanian variance estimator $\hat{V}_{\text{Neyman}}$ in completely randomized designs, under matched-pair designs, $\hat{V}_{\text{pair}}$ can be regarded as a special case of variance estimators under the contrast approach.

%%%%%%%%%%%%%%%%%%%%%%%%%%%%%
%%%%%%%%%%%%%%%%%%%%%%%%%%%%%

\section{The imputation approach}
\label{sec_imputation}

\subsection{Formulation and properties}
\label{sec_imputation_general}
% start with the shortcomings of the previous approach
While the contrast approach enables us to obtain variance estimators with desirable properties for a class of non-measurable designs, it does not apply to all possible non-measurable designs. To this end, we now formally propose and analyze the imputation approach to variance estimation. As a starting point, we focus on a general class of designs that satisfy $\pi_i = 0.5$ for all $i$, i.e., the designs assign each unit to treatment or control with equal probability. This class is larger than the class accommodated by the contrast approach discussed in Section \ref{sec_substitution_general} since, among others, the substitution condition (Assumption \ref{assump_substitution}) is not required. We discuss the extensions of this approach to arbitrary designs in Section \ref{sec_imputation_extension}.

The imputation approach proceeds by imputing the missing potential outcomes, similar to a Fisher randomization test. In essence, it conducts Neymanian inference by leveraging techniques from Fisherian randomization-based inference. Specifically, we first impute the potential outcomes for unit $i$ as follows, 
\begin{equation}
    \hat{Y}_i(1) = W_iY^{\text{obs}}_i + (1-W_i)(Y^{\text{obs}}_i + \beta_i), \hspace{0.5cm}     \hat{Y}_i(0) = W_i (Y^{\text{obs}}_i - \beta_i) + (1-W_i) Y^{\text{obs}}_i. \label{eq_ai}
\end{equation}
In other words, the potential outcomes are imputed as if the true unit-level treatment effect for unit $i$ is $\beta_i$, where $\beta_i$ is a known value set by the investigator. 
Note that, although we observe one potential outcome for unit $i$, we can still conceptualize imputing both the potential outcomes, where the observable potential outcome of unit $i$ is simply imputed by $Y^{\text{obs}}_i$.
After imputing these potential outcomes, $\Var_d(\hat{\tau})$ is estimated simply by plugging in the imputed potential outcomes in its expression, as provided in the following proposition.
\begin{proposition} \normalfont
Let $c_i = \{Y_i(0) + Y_i(1)\}/{2}$. For a design $d$ satisfying $\pi_i = 0.5$ for all $i \in \{1,2,...,N\}$,
\begin{align}
    \Var_d(\hat{\tau}) &= \sum_{\bm{w}} p_{\bm{w}} \left\{\frac{c_{i_1} + ... + c_{i_{N_t(\bm{w})}}}{N/2} - \frac{c_{j_1} + ... + c_{j_{N_c(\bm{w})}}}{N/2}\right\}^2 =: \psi(\bm{c}),
\end{align}
    where under assignment vector $\bm{w}$, units $\{i_1,...,i_{N_t(\bm{w})}\}$ are assigned to treatment and units $\{j_1,...,j_{N_c(\bm{w})}\}$ are assigned to control, where $\{i_1,...,i_{N_t(\bm{w})}\} \cup \{j_1,...,j_{N_c(\bm{w})}\} = \{1,...,N\}$. Moreover, $\psi(\cdot)$ is a convex function. \label{prop_c}
\end{proposition}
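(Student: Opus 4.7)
The plan is to reduce the variance computation to an expression purely in the "midpoint" vector $\bm{c}$ by decomposing each potential outcome into its midpoint and half the unit-level effect. Concretely, writing $\delta_i = Y_i(1) - Y_i(0)$, we have $Y_i(1) = c_i + \delta_i/2$ and $Y_i(0) = c_i - \delta_i/2$. Because $\pi_i = 1/2$ for every $i$, the Horvitz--Thompson estimator simplifies to
\begin{equation*}
\hat{\tau}(\bm{w}) \;=\; \frac{1}{N/2}\sum_{i \in T(\bm{w})} Y_i(1) \;-\; \frac{1}{N/2}\sum_{j \in C(\bm{w})} Y_j(0),
\end{equation*}
where $T(\bm{w})$ and $C(\bm{w})$ denote the treated and control sets under $\bm{w}$.

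Next I would substitute the midpoint decomposition into this expression. The $c_i$ terms collect into $(\sum_{T} c_i - \sum_{C} c_j)/(N/2)$, while the $\delta_i$ terms collect into $(\sum_T \delta_i/2 + \sum_C \delta_j/2)/(N/2) = (\sum_{i=1}^N \delta_i)/N = \tau$. Hence
\begin{equation*}
\hat{\tau}(\bm{w}) - \tau \;=\; \frac{c_{i_1} + \cdots + c_{i_{N_t(\bm{w})}}}{N/2} \;-\; \frac{c_{j_1} + \cdots + c_{j_{N_c(\bm{w})}}}{N/2}.
\end{equation*}
The variance formula then follows immediately from the identity $\Var_d(\hat{\tau}) = \E_d[(\hat{\tau} - \tau)^2] = \sum_{\bm{w}} p_{\bm{w}}(\hat{\tau}(\bm{w}) - \tau)^2$, which establishes $\Var_d(\hat{\tau}) = \psi(\bm{c})$.

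For convexity, I would observe that each summand $\bigl\{(c_{i_1} + \cdots + c_{i_{N_t(\bm{w})}})/(N/2) - (c_{j_1} + \cdots + c_{j_{N_c(\bm{w})}})/(N/2)\bigr\}^2$ is the square of an affine (in fact linear) functional of $\bm{c}$, so it is a convex function of $\bm{c}$. Since $p_{\bm{w}} \geq 0$ for every $\bm{w}$, $\psi(\bm{c})$ is a non-negative linear combination of convex functions and is therefore convex.

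The proof is essentially algebraic, and the only substantive step is verifying that the $\delta_i$ contributions collapse cleanly to $\tau$, which relies crucially on $\pi_i = 1/2$ so that both the treated and control weights equal $2/N$, letting the half-effects from units in $T(\bm{w})$ and $C(\bm{w})$ combine into a single sum over all units. No deeper obstacle is expected.
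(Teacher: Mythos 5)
Your proof is correct and follows essentially the same route as the paper's: the paper also rewrites the Horvitz--Thompson estimator as an affine function of $\bm{c}$ (so that $\hat{\tau}(\bm{w})-\tau$ is exactly the displayed contrast in $\bm{c}$ when $\pi_i=1/2$) and then applies $\Var_d(\hat{\tau})=\sum_{\bm{w}}p_{\bm{w}}\{\hat{\tau}(\bm{w})-\tau\}^2$, obtaining convexity from the fact that $\psi$ is a non-negative combination of squares of linear functionals (equivalently, a non-negative definite quadratic form in $\bm{c}$). Your midpoint decomposition $Y_i(w)=c_i\pm\delta_i/2$ is just a slightly different bookkeeping of the same algebra, and it correctly handles the case of random group sizes since the denominators $N/2$ come from $\pi_i=1/2$ rather than from $N_t(\bm{w})=N/2$.
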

Therefore, the variance of $\hat{\tau}$ under $d$ depends on the potential outcomes through their average $\bm{c} = (c_1,...,c_N)^\top$. Let $\hat{\bm{c}} = (\hat{c}_1,...,\hat{c}_N)^\top$ be the corresponding imputed vector of average potential outcomes, where $\hat{c}_i = \{\hat{Y}_i(0) + \hat{Y}_i(1)\}/{2}$.
The imputation estimator of $\Var_d(\hat{\tau})$ simply plugs in the imputed $\bm{c}$, i.e., $\widehat{\Var}_d(\hat{\tau}) = \psi(\hat{\bm{c}}).$ 
Notice that, by construction, $\widehat{\Var}_d(\hat{\tau})$ is non-negative.

Even when the expression of $\Var_d(\hat{\tau})$ is not available in closed form, the variance estimator can be obtained using Monte Carlo methods, akin to those used to approximate p-values in Fisher randomization tests (see, e.g., \cite{imbens2015causal}, Chapter 5). In particular, after imputing the missing potential outcomes, we can estimate the variance as follows.
\begin{enumerate}
    \item For unit $i$, set its potential outcomes as
        $\tilde{Y}_i(1) = W_i Y^{\text{obs}}_i + (1-W_i) \hat{Y}_i(1)$, $\tilde{Y}_i(0) = W_i\hat{Y}_i(0) + (1-W_i)Y^{\text{obs}}_i$.  

    \item Draw $M$ independent vectors of assignments $\bm{W}^{(1)},...,\bm{W}^{(M)}$ under $d$.
    
    \item For $m \in \{1,2,...,M\}$, compute the estimator $\hat{\tau}_m = (2/N)\sum_{i:W^{(m)}_i = 1}\tilde{Y}_i(1) - (2/N)\sum_{i:W^{(m)}_i = 0}\tilde{Y}_i(0)$.

\item Estimate $\Var_d(\hat{\tau})$ by the sample variance of the $\hat{\tau}_m$s, i.e.,  $1/(M-1)\sum_{m=1}^{M}(\hat{\tau}_m - \bar{\hat{\tau}})^2$, where $\bar{\hat{\tau}} = (1/M)\sum_{m = 1}^{M}\hat{\tau}_m$.
\end{enumerate}
Unlike the Neymanian decomposition and the contrast approach, the imputation approach does not necessarily require knowledge of the assignment mechanism. Thus, for a complex design where the joint probabilities of treatment assignments are difficult to obtain, the imputation approach can provide a computationally simpler alternative to variance estimation, even if the design is measurable.

Now, it is straightforward to check that $\E_d(\hat{c}_i) = c_i$, i.e., $\hat{\bm{c}}$ is unbiased for $\bm{c}$. Moreover, since $\psi(\cdot)$ is convex, by Jensen's inequality, we have $\E_d\{\psi(\hat{\bm{c}})\} \geq \psi\{\E_d(\hat{\bm{c}})\} = \psi(\bm{c})$. Therefore, for any design $d$ satisfying $\pi_i = 0.5$ for all $i$, the imputation approach always yields a conservative variance estimator. This result is true regardless of the value of $(\beta_1,...,\beta_N)^\top$. As a special case, the imputation estimator is conservative when $\beta_i = \beta$ for all $i$, i.e., when the potential outcomes are imputed assuming the sharp null hypothesis of constant treatment effect $\beta$, even though the true treatment effects may be heterogeneous.
% parallel related results for Crd in asymptotic sense (peng ding paradox paper)

Throughout the rest of the section, we assume that the $\beta_i$ is set to a common known value $\beta$.
Under treatment effect homogeneity, we can explicitly characterize the upward bias of the resulting variance estimator $\psi(\hat{\bm{c}})$. Proposition \ref{thm_imputation1} formalizes this result.
\begin{proposition} \normalfont
    Let $d$ be a design satisfying $\pi_i = 0.5$ for all $i$. Consider the imputation approach, where the missing potential outcomes are imputed as if the true unit-level treatment effect is $\beta$. Then, the corresponding imputation-based variance estimator $\psi(\hat{\bm{c}})$
    satisfies $\E_d\{\psi(\hat{\bm{c}})\} \geq \Var_d(\hat{\tau})$. Moreover, under treatment effect homogeneity, i.e., $Y_i(1) - Y_i(0) = \tau$, $\E_d\{\psi(\hat{\bm{c}})\} = \Var_d(\hat{\tau}) + (\tau - \beta)^2\E_d\{\psi(\bm{W})\}.$
\label{thm_imputation1}
\end{proposition}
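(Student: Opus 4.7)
The conservativeness claim is a one-line Jensen argument. Direct computation from Equation \ref{eq_ai} gives $\E_d[\hat{c}_i] = \pi_i(Y_i(1) - \beta/2) + (1-\pi_i)(Y_i(0) + \beta/2)$, which equals $c_i$ whenever $\pi_i = 1/2$, for any $\beta$. Since $\psi$ is convex by Proposition \ref{prop_c}, Jensen's inequality yields $\E_d[\psi(\hat{\bm{c}})] \geq \psi(\E_d[\hat{\bm{c}}]) = \psi(\bm{c}) = \Var_d(\hat{\tau})$.

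For the homogeneity claim, my plan is to exploit that $\psi$ is a positive semidefinite quadratic form in its argument. First I would write $\hat{\bm{c}}$ in closed form by substituting $W_i \in \{0,1\}$ into Equation \ref{eq_ai}: $\hat{c}_i = W_i(Y_i(1) - \beta/2) + (1-W_i)(Y_i(0) + \beta/2)$. Under homogeneity $c_i = Y_i(0) + \tau/2$, so a short rearrangement produces the key identity
\[
\hat{\bm{c}} = \bm{c} + (\tau - \beta)\bigl(\bm{W} - \tfrac{1}{2}\bm{1}\bigr).
\]

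Next, since each summand of $\psi(\bm{x})$ is a squared linear functional of $\bm{x}$ (Proposition \ref{prop_c}), there is a symmetric positive semidefinite bilinear form $B$ with $\psi(\bm{x} + \bm{y}) = \psi(\bm{x}) + 2B(\bm{x}, \bm{y}) + \psi(\bm{y})$. Plugging in the identity above and taking $\E_d$,
\[
\E_d[\psi(\hat{\bm{c}})] = \psi(\bm{c}) + 2(\tau - \beta)\, B\bigl(\bm{c},\, \E_d[\bm{W} - \tfrac{1}{2}\bm{1}]\bigr) + (\tau - \beta)^2\, \E_d\bigl[\psi(\bm{W} - \tfrac{1}{2}\bm{1})\bigr].
\]
The cross term vanishes because $\E_d[\bm{W}] = \tfrac{1}{2}\bm{1}$ — this is precisely where the $\pi_i = 1/2$ assumption enters. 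Identifying $\psi(\bm{c}) = \Var_d(\hat{\tau})$ leaves $\Var_d(\hat{\tau}) + (\tau - \beta)^2\, \E_d[\psi(\bm{W} - \tfrac{1}{2}\bm{1})]$; to match the stated form, observe that each linear functional defining $\psi$ has zero-sum coefficients on the designs considered (since $\sum_i w_i$ is constant on the support), so $\psi$ is invariant under adding any scalar multiple of $\bm{1}$ and $\E_d[\psi(\bm{W} - \tfrac{1}{2}\bm{1})] = \E_d[\psi(\bm{W})]$. The main obstacle is extracting the quadratic-form structure of $\psi$ cleanly enough that the cross term is eliminated by a single use of $\E_d[\bm{W}] = \tfrac{1}{2}\bm{1}$; everything else is bookkeeping.
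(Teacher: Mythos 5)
Your proof is correct and follows essentially the same route as the paper's: the conservativeness claim is the identical Jensen argument given in the main text just before the proposition, and the homogeneity claim is the specialization to $\pi_i = 1/2$ of the paper's general bias decomposition (Theorem \ref{thm_impute_general} and Corollary \ref{corollary_imputation}), which likewise reduces to $\hat{\bm{c}} = \bm{c} + (\tau-\beta)(\bm{W} - \frac{1}{2}\bm{1})$, expands the quadratic form $\psi$, and kills the cross term by taking expectations of the $\bm{W}$-dependent factor using $\E_d(W_i) = \pi_i$. The only delicate point is your final identification $\E_d\{\psi(\bm{W} - \frac{1}{2}\bm{1})\} = \E_d\{\psi(\bm{W})\}$, which requires $\bm{l}(\bm{w})^\top\bm{1} = 0$, i.e.\ $N_t(\bm{w}) = N/2$ for every $\bm{w}$ in the support; this is not among the proposition's stated hypotheses (only $\pi_i = 0.5$ is assumed), so for designs with variable group sizes the exact bias term is $(\tau-\beta)^2\E_d\{\psi(\bm{W}-\frac{1}{2}\bm{1})\}$ --- a looseness the paper itself shares, since Corollary \ref{corollary_imputation} makes the same identification, so you should simply flag the equal-group-size requirement rather than assert that $\sum_i w_i$ is constant on the support.
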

%\vspace{-1cm}
By Proposition \ref{thm_imputation1} it follows that, when the true unit-level treatment effect is $\tau$, $\E_d\{\psi(\hat{\bm{c}})\}$ can be decomposed into the true variance and a bias term that is quadratic in the difference between the true effect $\tau$ and the assumed effect $\beta$. Indeed, the bias decreases as $\beta$ gets closer to $\tau$ and vanishes when $\beta = \tau$. Therefore, in practice, if researchers have prior information regarding the magnitude of the average treatment effect (e.g., $\Gamma_1\leq \tau \leq \Gamma_2$ for constants $\Gamma_1$ and $\Gamma_2$), they can leverage the information in the choice of $\beta$ to ensure that the bias of $\psi(\hat{\bm{c}})$ is small. 

The (upward) bias of $\E_d\{\psi(\hat{\bm{c}})\}$ in Proposition \ref{thm_imputation1} under homogeneity also depends on the term $\E_d\{\psi(\bm{W})\}$. If $\E_d\{\psi(\bm{W})\}$ converges to zero as $N$ gets large, then the additive bias of the imputation estimator, $\E_d\{\psi(\hat{\bm{c}})\} - \Var_d(\hat{\tau})$, also goes to zero. Proposition \ref{prop_limit} provides a sufficient design condition for $\E_d\{\psi(\bm{W})\}$ to converge to zero and shows that the condition is satisfied under completely randomized designs.
\begin{proposition} \normalfont
Denote $\pi_{ij} = \Pr_d(W_i = 1, W_j  =1)$ and let the imputation approach be as in Proposition \ref{thm_imputation1}. For every $\bm{w} \in \mathcal{W}$, suppose $\sum_{i=1}^{N}w_i = N/2$ and $(16/N^2)\mathop{\sum\sum}_{i<j:w_i = w_j}[\pi_{ij} - (N-2)/\{4(N-1)\}] + 1/(N-1) = o(1)$. Then, $\E_d\{\psi(\hat{\bm{c}})\} - \Var_d(\hat{\tau}) = (\tau - \beta)^2 o\left(1\right).$
In particular, for a completely randomized design with equal group sizes, $\E_d\{\psi(\hat{\bm{c}})\} - \Var_d(\hat{\tau}) = (\tau - \beta)^2 /(N-1).$
     \label{prop_limit}
\end{proposition}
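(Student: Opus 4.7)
The plan is to reduce everything to computing $\E_d\{\psi(\bm{W})\}$ and then appealing to Proposition \ref{thm_imputation1}, which already establishes $\E_d\{\psi(\hat{\bm{c}})\} - \Var_d(\hat{\tau}) = (\tau - \beta)^2 \E_d\{\psi(\bm{W})\}$. So the work is entirely in analyzing $\E_d\{\psi(\bm{W})\}$ under the two assumptions: (i) every $\bm{w} \in \mathcal{W}$ satisfies $\sum_i w_i = N/2$ (so the group sizes are fixed at $N/2$) and (ii) the stated $o(1)$ condition.

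First, under (i), Proposition \ref{prop_c} simplifies to $\psi(\bm{c}) = (4/N^2)\sum_{\bm{w}} p_{\bm{w}} \bigl[\sum_i (2w_i - 1) c_i\bigr]^2$, since $N_t(\bm{w}) = N_c(\bm{w}) = N/2$. Substituting $\bm{c} \mapsto \bm{W}$ and taking the outer expectation under $d$ gives
\begin{equation*}
\E_d\{\psi(\bm{W})\} = \frac{4}{N^2}\sum_{\bm{w}}p_{\bm{w}}\,\E_d\!\left[\Bigl(\sum_{i=1}^{N}(2w_i - 1)W_i\Bigr)^{\!2}\right].
\end{equation*}
Next I would expand the inner square: since $\pi_i = 1/2$ gives $\E_d[W_i^2] = 1/2$ and $\E_d[W_iW_j] = \pi_{ij}$ for $i\neq j$, the inner expectation equals $N/2 + 2\sum_{i<j}(2w_i-1)(2w_j-1)\pi_{ij}$. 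Using $(2w_i-1)(2w_j-1) = 2\mathbbm{1}(w_i = w_j) - 1$ together with the identity $\sum_{i<j}\pi_{ij} = N(N-2)/8$ (which follows from $\sum_i W_i = N/2$ almost surely), this rearranges to $N/2 - N(N-2)/4 + 4\sum_{i<j:\,w_i=w_j}\pi_{ij}$.

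The main algebraic step, and the place where I expect the bookkeeping to be delicate, is to verify that after multiplication by $4/N^2$ the constant $N/2 - N(N-2)/4$ combines with a re-centering of $\pi_{ij}$ around $(N-2)/\{4(N-1)\}$ to produce exactly
\begin{equation*}
\frac{4}{N^2}\,\E_d\!\left[\Bigl(\sum_{i}(2w_i-1)W_i\Bigr)^{\!2}\right] \;=\; \frac{16}{N^2}\sum_{i<j:\,w_i=w_j}\!\left[\pi_{ij} - \frac{N-2}{4(N-1)}\right] + \frac{1}{N-1}.
\end{equation*}
The calculation reduces to checking that $(4-N)/N = (5N - N^2 - 4)/(N(N-1))$, which is elementary. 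Once this identity is in place, the hypothesis says the right-hand side is $o(1)$ for every $\bm{w}\in\mathcal{W}$; averaging against the weights $p_{\bm{w}}\geq 0$ (which sum to one) preserves the $o(1)$ bound, so $\E_d\{\psi(\bm{W})\} = o(1)$. Plugging into Proposition \ref{thm_imputation1} yields the general claim $\E_d\{\psi(\hat{\bm{c}})\} - \Var_d(\hat{\tau}) = (\tau-\beta)^2 o(1)$.

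For the CRD special case, I would observe that equal group sizes force $\pi_{ij} = (N/2)(N/2-1)/\{N(N-1)\} = (N-2)/\{4(N-1)\}$ for every $i\neq j$, so the first term in the boxed identity vanishes for every $\bm{w}$ and the inner expression equals $1/(N-1)$ uniformly. Hence $\E_d\{\psi(\bm{W})\} = 1/(N-1)$ exactly, and Proposition \ref{thm_imputation1} gives the bias $(\tau-\beta)^2/(N-1)$ as claimed.
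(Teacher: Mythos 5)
Your proposal is correct and follows essentially the same route as the paper: reduce to $\E_d\{\psi(\bm{W})\}$ via Proposition \ref{thm_imputation1}, expand the inner quadratic form using $\E_d(W_iW_j)=\pi_{ij}$ and the constraint $\sum_i W_i = N/2$, and recenter $\pi_{ij}$ around $(N-2)/\{4(N-1)\}$; the elementary identity you flag does check out. The only cosmetic difference is that you use $(2w_i-1)(2w_j-1)=2\mathbbm{1}(w_i=w_j)-1$ together with $\sum_{i<j}\pi_{ij}=N(N-2)/8$ where the paper splits the pair sum into same-sign and different-sign parts, but these are the same computation.
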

%\vspace{-0.8cm}
% CRD case + superpopulation 1/4 prob
In other words, the additive bias of $\psi(\hat{\bm{c}})$ goes to zero if design $d$ admits two groups of equal size and a condition on the pairwise probabilities of treatment. 
This condition is satisfied in a CRD, where $\pi_{ij} = (N-2)/\{4(N-1)\}$ for all $i \neq j$. The term $(16/N^2)\mathop{\sum\sum}_{i<j:w_i = w_j}[\pi_{ij} - (N-2)/\{4(N-1)\}]$ measures an average difference between the pairwise probabilities under design $d$ and those under a CRD. 
In this sense, this condition can also be interpreted as a form of exchangeability condition on pairs. Overall, Proposition \ref{prop_limit} shows that for a class of designs with equal-sized groups that satisfy this exchangeability condition, the bias of the imputation-based variance estimator tends to zero even if the true unit-level treatment effect $\tau$ is imputed incorrectly by $\beta$.

In general, since Proposition \ref{thm_imputation1} hints at choosing a $\beta$ that is close to $\tau$ (or is a reasonable guess for $\tau$), it is tempting to use $\beta = \hat{\tau}$. The resulting approach differs from the imputation approach discussed earlier in that the missing potential outcomes are now imputed by a random (as opposed to fixed) quantity. More importantly, under this approach, the imputed $\hat{\bm{c}}$ is no longer unbiased for $\bm{c}$ and hence, there is no guarantee that the resulting variance estimator $\psi(\hat{\bm{c}})$ is conservative for $\Var_d(\hat{\tau})$. 
However, as we discuss below, this approach still yields reasonable variance estimators for a large class of designs.
In particular, Proposition \ref{prop_imputation_neyman} shows that under CRD, $\psi(\hat{\bm{c}})$ is asymptotically equivalent to the Neymanian estimator.
\begin{proposition} \normalfont
 Consider the imputation approach for a completely randomized design with equal group sizes, where the missing potential outcomes are imputed as if the true unit-level treatment effect is $\hat{\tau}$. Then the corresponding imputation-based variance estimator $\psi(\hat{\bm{c}})$ satisfies $\psi(\hat{\bm{c}}) = \hat{V}_{\text{Neyman}}\times (N-2)/(N-1).$
 \label{prop_imputation_neyman}
\end{proposition}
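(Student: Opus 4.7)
The plan is to reduce $\psi(\hat{\bm{c}})$ to a sample-variance-type expression in $\hat{\bm{c}}$, compute the imputed $\hat{c}_i$ in closed form, and then line up the terms with $\hat{V}_{\text{Neyman}}$.

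First I would specialize Proposition \ref{prop_c} to a completely randomized design with equal group sizes. Under such a design, pretending the two potential outcomes of unit $i$ both equal $c_i$ gives a treatment effect of zero, so the Neyman decomposition \eqref{eq_neyman} applied to the constant-potential-outcome vector $\bm{c}$ yields
\begin{equation*}
\psi(\bm{c}) \;=\; \frac{S_c^2}{N/2} + \frac{S_c^2}{N/2} \;=\; \frac{4 S_c^2}{N},
\qquad S_c^2 = \frac{1}{N-1}\sum_{i=1}^N (c_i - \bar{c})^2,
\end{equation*}
so it suffices to compute $S_c^2$ with $\bm{c}$ replaced by $\hat{\bm{c}}$.

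Next I would write $\hat{c}_i$ explicitly. Using the imputation formula \eqref{eq_ai} with $\beta_i = \hat{\tau}$ gives $\hat{c}_i = Y^{\text{obs}}_i - \hat{\tau}/2$ when $W_i=1$ and $\hat{c}_i = Y^{\text{obs}}_i + \hat{\tau}/2$ when $W_i=0$; in one line, $\hat{c}_i = Y^{\text{obs}}_i + (1/2 - W_i)\hat{\tau}$. Since $N_t = N_c = N/2$, the adjustment has mean zero across $i$, so $\bar{\hat{c}} = \bar{Y}^{\text{obs}} = (\bar{Y}_t + \bar{Y}_c)/2$. A short calculation then shows that for every treated unit $\hat{c}_i - \bar{\hat{c}} = Y^{\text{obs}}_i - \bar{Y}_t$ and for every control unit $\hat{c}_i - \bar{\hat{c}} = Y^{\text{obs}}_i - \bar{Y}_c$; the two $\hat{\tau}/2$ contributions cancel exactly against the $(\bar{Y}_t - \bar{Y}_c)/2$ piece of $\bar{\hat{c}} - \bar{Y}_t$ (respectively $\bar{\hat{c}} - \bar{Y}_c$).

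Summing gives $\sum_i (\hat{c}_i - \bar{\hat{c}})^2 = \sum_{W_i=1}(Y^{\text{obs}}_i - \bar{Y}_t)^2 + \sum_{W_i=0}(Y^{\text{obs}}_i - \bar{Y}_c)^2$. Finally I would substitute into $\psi(\hat{\bm{c}}) = 4/(N(N-1))$ times this sum and compare with $\hat{V}_{\text{Neyman}}$, which for $N_t = N_c = N/2$ equals $4/(N(N-2))$ times the same sum of squares. Taking the ratio yields the claimed factor $(N-2)/(N-1)$.

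There is no real obstacle here beyond bookkeeping; the only place one must be careful is the cancellation in Step 2, i.e., verifying that $\bar{\hat{c}}$ lands exactly at $\bar{Y}^{\text{obs}}$ and that the residuals $\hat{c}_i - \bar{\hat{c}}$ collapse to the within-group residuals. Everything else follows from plugging into $\psi(\bm{c}) = 4S_c^2/N$ and matching the denominators $N(N-1)$ versus $N(N-2)$.
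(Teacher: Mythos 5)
Your proposal is correct, and it takes a genuinely different and substantially shorter route than the paper's proof. The paper expands $\psi(\hat{\bm{c}})$ as a quadratic form in $\bm{Y}^{\text{obs}}$, computes every coefficient via binomial-coefficient sums such as $\sum_{\bm{w}} r(\bm{w})$ and $\sum_{\bm{w}} r^2(\bm{w})$, and matches each one against the corresponding coefficient of $\hat{V}_{\text{Neyman}}$ to extract the factor $(N-2)/(N-1)$. You instead observe that for a CRD with equal groups, $\psi$ evaluated at \emph{any} vector $\bm{x}$ is the randomization variance of the difference in means of $\bm{x}$ between the two arms, which by Neyman's own identity (Equation \ref{eq_neyman} with zero unit-level effects, so $S^2_{10}=0$) equals $4S_x^2/N$; the whole problem then reduces to computing the finite-population variance of $\hat{\bm{c}}$. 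Your key cancellation — that $\bar{\hat{c}}=\bar{Y}^{\text{obs}}$ because $N_t=N_c$, and that $\hat{c}_i-\bar{\hat{c}}$ collapses to the within-group residual $Y^{\text{obs}}_i-\bar{Y}_t$ or $Y^{\text{obs}}_i-\bar{Y}_c$ precisely because $\hat{\tau}=\bar{Y}_t-\bar{Y}_c$ here — is verified correctly, and the final ratio of $4/\{N(N-1)\}$ to $4/\{N(N-2)\}$ gives the stated factor. What your approach buys is brevity and interpretability (the factor is exposed as purely the $(N-1)$-versus-$(N-2)$ degrees-of-freedom mismatch between $S^2_{\hat{c}}$ and the pooled within-group sums of squares); what the paper's coefficient-matching buys is a template that does not rely on the closed-form Neymanian decomposition of $\psi$ and hence adapts more readily to designs where no such closed form is available. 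No gaps.
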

%\vspace{-0.8cm}
Proposition \ref{prop_imputation_neyman} connects the imputation approach to the Neymanian estimator under CRD and shows that the imputation estimator $\psi(\hat{\bm{c}})$ (based on $\hat{\tau}$) is algebraically equivalent to a scaled version of the Neymanian estimator, where the scaling factor is $(N-2)/(N-1)<1$. 
Thus, under treatment effect homogeneity, $\E_d\{\psi(\hat{\bm{c}})\} <\Var_d(\hat{\tau})$, and hence $\psi(\hat{\bm{c}})$ is slightly anti-conservative in finite samples. In large samples, however, $\psi(\hat{\bm{c}})$ is equivalent to the Neymanian estimator. Thus, for sufficiently large $N$, $\psi(\hat{\bm{c}})$ is approximately unbiased for $\Var_d(\hat{\tau})$ under homogeneity. 
Finally, this asymptotic equivalence also shows that the Neymanian estimator can be alternatively derived from a Fisherian mode of inference. See \cite{samii2012equivalencies} for an equivalent result, where they established a connection between the homoskedastic variance estimator from OLS regression and $\Var(\hat{\tau})$ under constant treatment effects equal to $\hat{\tau}$.

For designs beyond CRD, $\psi(\hat{\bm{c}})$ is not guaranteed to be anti-conservative (or conservative) in finite samples. However, as shown in Proposition \ref{prop_imp_tauhat_asymp}, the additive bias of $\psi(\hat{\bm{c}})$ goes to zero under mild conditions on the design and the potential outcomes.
\begin{proposition} \normalfont
Assume that the control potential outcomes satisfy $|Y_i(0)|\leq B$ for some $B>0$ and consider a design $d$ such that $\hat{\tau} - \tau = o_P(1)$. Then, under treatment effect homogeneity, $\E_d\{\psi(\hat{\bm{c}})\} - \Var_d(\hat{\tau}) = o(1).$
    \label{prop_imp_tauhat_asymp}
\end{proposition}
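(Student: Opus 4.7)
The argument hinges on the observation that, under homogeneity and with $\beta_i = \hat{\tau}$, the imputed vector $\hat{\bm{c}}$ is an explicit $o_P(1)$-perturbation of $\bm{c}$, which can then be propagated through the quadratic functional $\psi$.

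First, I establish the key identity. From $\hat{Y}_i(0) + \hat{Y}_i(1) = 2Y_i^{\text{obs}} + (1-2W_i)\hat{\tau}$ and, under $Y_i(1)-Y_i(0)=\tau$, $c_i = Y_i(0) + \tau/2$ together with $Y_i^{\text{obs}} = Y_i(0) + W_i \tau$, a short computation gives
$$\hat{c}_i - c_i = \bigl(W_i - \tfrac{1}{2}\bigr)(\tau - \hat{\tau}).$$
Writing $\gamma := \tau - \hat{\tau}$ and $\bm{u} := \bm{W} - \tfrac{1}{2}\bm{1}$, this is $\hat{\bm{c}} = \bm{c} + \gamma \bm{u}$, and $\gamma = o_P(1)$ by hypothesis.

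Next I exploit the quadratic-form structure of $\psi$. By Proposition \ref{prop_c}, $\psi(\bm{x}) = \bm{x}^\top M \bm{x}$ for a positive semidefinite matrix $M$ with $M_{ij} = (4/N^2)\E_d[(2W'_i-1)(2W'_j-1)]$, where $\bm{W}'$ is an independent copy of the assignment. Expanding and applying Cauchy--Schwarz in the semi-inner product induced by $M$ yields
$$|\psi(\hat{\bm{c}}) - \psi(\bm{c})| \;\leq\; 2|\gamma|\sqrt{\psi(\bm{c})\,\psi(\bm{u})} + \gamma^{2}\,\psi(\bm{u}).$$
Now $\psi(\bm{u}) = \E_{\bm{W}'}[R^2\mid\bm{W}]$ with $R = (1/N)\sum_i (2W_i-1)(2W'_i-1)$ satisfying $|R|\leq 1$, so $\psi(\bm{u})\leq 1$; and by construction $\psi(\bm{c}) = \Var_d(\hat{\tau})$. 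Hence
$$|\psi(\hat{\bm{c}}) - \Var_d(\hat{\tau})| \;\leq\; 2|\gamma|\sqrt{\Var_d(\hat{\tau})} + \gamma^{2}.$$

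The last step is to take expectations and invoke bounded convergence. Under $|Y_i(0)|\leq B$, a short calculation using the explicit form of $\hat{\tau}$ when $\pi_i = 1/2$ yields $|\gamma| \leq 2B + |\tau|$; together with the standing design-based convention that $|\tau|$ is uniformly bounded in $N$, this makes $\gamma$ uniformly bounded and renders $\Var_d(\hat{\tau})$ uniformly bounded as well. Since $\gamma = o_P(1)$ is uniformly bounded, bounded convergence gives $\E[|\gamma|] \to 0$ and $\E[\gamma^{2}] \to 0$, whence
$$|\E_d[\psi(\hat{\bm{c}})] - \Var_d(\hat{\tau})| \;\leq\; \E\bigl[|\psi(\hat{\bm{c}}) - \psi(\bm{c})|\bigr] \;\leq\; 2\sqrt{\Var_d(\hat{\tau})}\,\E[|\gamma|] + \E[\gamma^{2}] \;=\; o(1).$$
The main obstacle is precisely this last uniform-boundedness step: the stated hypothesis bounds only $Y_i(0)$ and not $\tau$, so care is needed either to import the standard asymptotic convention on potential-outcome magnitudes or to strengthen the hypothesis (e.g.\ by also bounding $Y_i(1)$, which under homogeneity forces $|\tau| \leq 2B$). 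Once this is in hand, the remainder is routine Cauchy--Schwarz plus bounded convergence, and no further structure on the design beyond $\hat{\tau} - \tau = o_P(1)$ is used.
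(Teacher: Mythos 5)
Your proof is correct and takes essentially the same route as the paper's: your expansion $\psi(\hat{\bm{c}}) = \Var_d(\hat{\tau}) + 2\gamma\,\bm{c}^\top M\bm{u} + \gamma^{2}\psi(\bm{u})$ is exactly the paper's decomposition $\Var_d(\hat{\tau}) + A_2 + A_1$ (obtained there by specializing Theorem \ref{thm_impute_general} to $\beta = \hat{\tau}$), and the ensuing Cauchy--Schwarz bound on the cross term, the bound $\psi(\bm{u})\leq 1$, and dominated convergence are the same three steps the paper uses. Your closing worry about $|\tau|$ dissolves in the relevant setting of equal group sizes, where under homogeneity $\hat{\tau}-\tau = (2/N)\sum_{i}(2W_i-1)Y_i(0)$ is bounded by $2B$ using only the stated hypothesis on $Y_i(0)$ --- which is also the charitable reading of the paper's own ``since the potential outcomes are bounded'' step.
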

%\vspace{-1cm}
Therefore, if the potential outcomes are bounded and if $\hat{\tau}$ is consistent for $\tau$ under design $d$, then the variance estimator under the imputation approach (based on $\hat{\tau}$) has negligible bias in large samples. This result complements our observations from Proposition \ref{thm_imputation1} by providing a justification for setting $\beta = \hat{\tau}$.

%%%%%%%%%%%%%%%%%%%%%%%%%%%%%
%%%%%%%%%%%%%%%%%%%%%%%%%%%%%
\subsection{Extensions to general experimental designs: direct imputation}
\label{sec_imputation_extension}

In this section, we extend the imputation approach to a general design $d$. To this end, we first modify the definition of the average potential outcome as $c_i  = (1-\pi_i)Y_i(1) + \pi_i Y_i(0)$. Note that, with $\pi_i = 0.5$, $c_i$ boils down to the simple average of the two potential outcomes, as defined previously. Proposition \ref{prop_impute_gen} shows that the variance of $\hat{\tau}$ depends on the potential outcomes only through $\bm{c}$. 
\begin{proposition} \normalfont
Let $c_i  = (1-\pi_i)Y_i(1) + \pi_i Y_i(0)$. For an arbitrary design $d$,
    \begin{align}
    \Var_d(\hat{\tau}) = \frac{1}{N^2}\sum_{\bm{w}} p_{\bm{w}} \left(\sum_{i:w_{i} = 1}\frac{c_{i}}{\pi_{i}} - \sum_{i:w_{i} = 0}\frac{c_{i}}{1-\pi_{i}} \right)^2 =: \psi(\bm{c}). 
\end{align}
\label{prop_impute_gen}
\end{proposition}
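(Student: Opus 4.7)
The plan is to prove the identity pointwise: I will establish that for every $\bm{w}$ in the support of $d$,
\begin{equation}
N\{\hat{\tau}(\bm{w}) - \tau\} = \sum_{i:w_i = 1}\frac{c_i}{\pi_i} - \sum_{i:w_i = 0}\frac{c_i}{1-\pi_i}. \label{eq_plan_pointwise}
\end{equation}
Once this is in hand, the conclusion is immediate: since $\hat{\tau}$ is unbiased for $\tau$ under Assumption \ref{assump_positivity}, we have $\Var_d(\hat{\tau}) = \E_d[(\hat{\tau} - \tau)^2] = \sum_{\bm{w}} p_{\bm{w}} \{\hat{\tau}(\bm{w}) - \tau\}^2$, so squaring both sides of \eqref{eq_plan_pointwise}, dividing by $N^2$, and taking the $p_{\bm{w}}$-weighted average yields exactly the claimed formula for $\psi(\bm{c})$.

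To verify \eqref{eq_plan_pointwise}, I would start from the Horvitz-Thompson representation
\begin{equation*}
N\{\hat{\tau}(\bm{w}) - \tau\} = \sum_{i=1}^N \left\{\frac{w_i}{\pi_i} - 1\right\} Y_i(1) - \sum_{i=1}^{N}\left\{\frac{1 - w_i}{1-\pi_i} - 1\right\} Y_i(0),
\end{equation*}
use the one-line simplifications $w_i/\pi_i - 1 = (w_i - \pi_i)/\pi_i$ and $(1-w_i)/(1-\pi_i) - 1 = -(w_i - \pi_i)/(1-\pi_i)$ to collapse the two sums into $\sum_i (w_i - \pi_i)\{Y_i(1)/\pi_i + Y_i(0)/(1-\pi_i)\}$, and then recognize the bracketed factor as $c_i/[\pi_i(1-\pi_i)]$ by direct algebra, giving $N\{\hat{\tau}(\bm{w}) - \tau\} = \sum_{i=1}^N (w_i - \pi_i)\, c_i/[\pi_i(1-\pi_i)]$. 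Finally, the same elementary manipulation shows $w_i/\pi_i - (1-w_i)/(1-\pi_i) = (w_i - \pi_i)/[\pi_i(1-\pi_i)]$, so this combined expression equals precisely the signed sum on the right-hand side of \eqref{eq_plan_pointwise}.

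There is no genuine obstacle here; the argument is a short algebraic rewrite of the Horvitz-Thompson deviation, and the positivity assumption guarantees that every denominator is well-defined. The real content of Proposition \ref{prop_impute_gen} is the recognition that $c_i = (1-\pi_i) Y_i(1) + \pi_i Y_i(0)$ is exactly the scalar aggregate of the two potential outcomes through which $\hat{\tau} - \tau$, and hence $\Var_d(\hat{\tau})$, depend. This justifies estimating the variance under an arbitrary design by plugging in any unbiased surrogate for $\bm{c}$, which is the mechanism the direct imputation approach exploits in the sequel.
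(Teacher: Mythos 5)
Your proof is correct and follows essentially the same route as the paper: both arguments reduce to the observation that the centered Horvitz--Thompson deviation collapses to $\sum_i (w_i-\pi_i)c_i/\{\pi_i(1-\pi_i)\}$, which equals the signed sum $\sum_{i:w_i=1}c_i/\pi_i - \sum_{i:w_i=0}c_i/(1-\pi_i)$, after which squaring and averaging over $p_{\bm{w}}$ gives the result. The only cosmetic difference is that you center $\hat{\tau}$ at $\tau$ pointwise while the paper centers the rewritten sum at its expectation; these coincide by unbiasedness.
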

%\vspace{-0.8cm}
Thus, following similar steps as before, we can estimate the variance of $\hat{\tau}$ using the imputation estimator $\psi(\hat{\bm{c}})$, where $\hat{c}_i  = (1-\pi_i)\hat{Y}_i(1) + \pi_i \hat{Y}_i(0)$ and $\hat{Y}_i(0)$ and $\hat{Y}_i(1)$ are as in Equation \ref{eq_ai}. However, it is straightforward to see that, unless $\pi_i = 0.5$, $\hat{c}_i$ is not unbiased for $c_i$. So, in general, we cannot ensure that $\psi(\hat{\bm{c}})$ is conservative for $\Var_d(\hat{\tau})$.
Nevertheless, under treatment effect homogeneity and with some additional design conditions, we can show that $\psi(\hat{c})$ is conservative. See Appendix \ref{appsec_additional_theory} in the Appendix for details.

 Now, since $\Var_d(\hat{\tau})$ is a function of the potential outcomes only through $\bm{c}$, we can alternatively consider directly imputing $\bm{c}$, bypassing the steps to impute the potential outcomes $\bm{Y}(1)$ and $\bm{Y}(0)$. We term this the \textit{direct imputation} approach.
More concretely, we estimate the variance by $\psi(\hat{\bm{c}})$, where $\hat{\bm{c}} = (\hat{c}_1,...,\hat{c}_N)^\top$ is some unbiased estimator of $\bm{c}$. Since $\psi(\cdot)$ is a convex function, it follows that $\E_d\{\psi(\hat{\bm{c}})\} \geq \psi\{\E_d(\hat{\bm{c}})\} = \Var_d(\hat{\tau}).$
Thus, the direct imputation approach leads to a conservative estimator of the variance of $\hat{\tau}$. 
Moreover, even when $\psi(\bm{c})$ is not known or difficult to obtain in closed form, the variance estimator using this approach can be obtained using Monte Carlo methods.

To obtain a suitable estimator of $c_i$, we consider a class of linear estimators of the form,
    \begin{equation}
  \hat{c}_i =
    \begin{cases}
    
      \frac{1-\pi_i}{\pi_i} Y^{\text{obs}}_i - (1-\pi_i)\gamma_i & \text{if $W_i = 1$}\\
    \frac{\pi_i}{1-\pi_i} Y^{\text{obs}}_i + \pi_i\gamma_i & \text{if $W_i = 0$},
    \end{cases}  
    \label{eq_ci}
\end{equation}
where $\gamma_i$ is a (deterministic/random) number to be set by the investigator. In fact, if $\gamma_i$s are deterministic, then the above is the unique class of linear estimators that is unbiased for $c_i$ (see Proposition \ref{prop_altimpute} in the Appendix). 
Moreover, when $\pi_i = 0.5$, the direct imputation approach boils down to the standard imputation approach discussed in Section \ref{sec_imputation_general}, with $\beta_i = \gamma_i$.

\subsection{Jackknifed imputation}
\label{sec_jackknife}

What is a reasonable choice of $\gamma_i$ in practice? As shown in Section \ref{sec_imputation_general}, even for designs with $\pi_i = 0.5$, choosing $\gamma_i = \hat{\tau}$ may lead to anti-conservative variance estimators in finite samples. 
In this section, we propose a fix to this problem using a Jackknife approach. Roughly speaking, instead of setting $\gamma_i=\hat{\tau}$, this approach sets $\gamma_i$ as a leave-one-out version of $\hat{\tau}$ that excludes unit $i$, which in turn allows us to unbiasedly estimate $c_i$.

To formalize, consider the class of linear estimators in $\hat{c}_i$ in Equation \ref{eq_ci}. Proposition \ref{prop_bias} derives the bias of $\hat{c}_i$.
\begin{proposition} \normalfont
For an arbitrary and possibly random $\gamma_i$, the bias of $\hat{c}_i$ is $\E_d(\hat{c}_i) - c_i  = \pi_i(1-\pi_i)\{\E_d(\gamma_i|W_i = 0) - \E_d(\gamma_i|W_i = 1)\}.$
\label{prop_bias}
\end{proposition}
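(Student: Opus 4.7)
The plan is to compute $\E_d(\hat{c}_i)$ directly using the law of total expectation, conditioning on the value of $W_i$, and then subtract $c_i = (1-\pi_i)Y_i(1) + \pi_i Y_i(0)$. The piecewise definition of $\hat{c}_i$ in Equation \ref{eq_ci} is tailored exactly so that on each branch the observed outcome is a deterministic potential outcome, which makes conditioning on $W_i$ the natural device.

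Concretely, I would write
\begin{equation*}
\E_d(\hat{c}_i) = \pi_i \, \E_d(\hat{c}_i \mid W_i = 1) + (1-\pi_i)\, \E_d(\hat{c}_i \mid W_i = 0).
\end{equation*}
On the event $\{W_i = 1\}$ we have $Y^{\text{obs}}_i = Y_i(1)$ (a constant under the design-based view), so $\E_d(\hat{c}_i \mid W_i = 1) = \frac{1-\pi_i}{\pi_i} Y_i(1) - (1-\pi_i)\E_d(\gamma_i \mid W_i = 1)$. Symmetrically, $\E_d(\hat{c}_i \mid W_i = 0) = \frac{\pi_i}{1-\pi_i} Y_i(0) + \pi_i \E_d(\gamma_i \mid W_i = 0)$. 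Note that the only place possible randomness of $\gamma_i$ enters is through these conditional expectations, which is why the claim is stated in terms of $\E_d(\gamma_i \mid W_i = w)$ rather than $\gamma_i$ itself.

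Multiplying out and combining the two branches, the coefficients are designed so that $\pi_i \cdot \frac{1-\pi_i}{\pi_i} = 1-\pi_i$ and $(1-\pi_i) \cdot \frac{\pi_i}{1-\pi_i} = \pi_i$, which reproduces the potential-outcome part $c_i = (1-\pi_i)Y_i(1) + \pi_i Y_i(0)$ exactly. The remaining terms collapse to $\pi_i(1-\pi_i)\{\E_d(\gamma_i \mid W_i = 0) - \E_d(\gamma_i \mid W_i = 1)\}$, giving the stated bias after subtracting $c_i$.

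There is no real obstacle here; the result is essentially an algebraic identity and no further assumptions on $d$ or on the potential outcomes are needed. The only thing worth flagging in the write-up is that the identity is valid even when $\gamma_i$ is random and dependent on $\bm{W}$, since we never pull $\gamma_i$ out of an expectation without conditioning on $W_i$ first. This is exactly what motivates the Jackknife construction in Section \ref{sec_jackknife}: one wants to choose $\gamma_i$ so that its conditional distribution given $W_i = 1$ matches that given $W_i = 0$, making the bias vanish.
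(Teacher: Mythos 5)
Your proposal is correct and follows essentially the same route as the paper: condition on $W_i$, use that $Y^{\text{obs}}_i$ reduces to the corresponding fixed potential outcome on each branch, and combine with weights $\pi_i$ and $1-\pi_i$ so that the potential-outcome terms reproduce $c_i$ and the $\gamma_i$ terms give the stated bias. Your remark that randomness of $\gamma_i$ only enters through its conditional expectations given $W_i$ is exactly the point the paper relies on.
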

%\vspace{-1cm}
Thus, a necessary and sufficient condition for the bias to be zero is that $\E_d(\gamma_i|W_i = 0) = \E_d(\gamma_i|W_i = 1)$, i.e., $\gamma_i$ is mean-independent of $W_i$.
As alluded to earlier, the bias term is zero if $\gamma_i$ is deterministic but may not be zero when $\gamma_i$ is random. In particular, under CRD with equal group sizes, setting $\gamma_i = \hat{\tau}$ implies $\E_d(\gamma_i|W_i = 0) - \E_d(\gamma_i|W_i = 1) = \{4/(N-1)\}(\bar{c} - c_i),$ where $\bar{c}$ is the mean of $c_i$ across the $N$ units. Thus, under CRD, the bias of this $\hat{c}_i$ vanishes if and only if $c_i = \bar{c}$, i.e., the average potential outcome is constant across units.   

To obtain a suitable unbiased estimator of $c_i$, we first note that the direct imputation approach can be conceptualized as imputing $Y_i(0)$ and $Y_i(1)$ implicitly. More concretely, suppose the missing potential outcomes are imputed as if the true unit-level effect for unit $i$ is $\beta_i$, where
    \begin{equation}
 {\beta}_i =
    \begin{cases}
      \frac{2\pi_i - 1}{\pi^2_i} Y^{\text{obs}}_i + \frac{1-\pi_i}{\pi_i}\gamma_i & \text{if $W_i = 1$}\\
    \frac{2\pi_i - 1}{(1-\pi_i)^2} Y^{\text{obs}}_i + \frac{\pi_i}{1-\pi_i}\gamma_i & \text{if $W_i = 0$}, \label{eq_betai}
    \end{cases}       
\end{equation}
It is straightforward to see that, with this choice of $\beta_i$, the resulting $\hat{c}_i$ is algebraically equivalent to that in Equation \ref{eq_ci}. As a special case, when $\pi_i = 0.5$, $\beta_i$ boils down to $\gamma_i$. Now, the expected value of this \textit{assumed} unit-level effect is $\E_d(\beta_i) = \{Y_i(1) - Y_i(0)\} + \E_d(\gamma_i - [\{(1-\pi_i)/\pi_i\}Y_i(1) - \{\pi_i/(1-\pi_i)\}Y_i(0)])$, i.e., in expectation, the assumed unit-level treatment effect equals the true unit-level effect $\{Y_i(1)-Y_i(0)\}$ and a residual term $ \E_d(\gamma_i - [\{(1-\pi_i)/\pi_i\}Y_i(1) - \{\pi_i/(1-\pi_i)\}Y_i(0)])$. Thus, one may choose $\gamma_i$ to be a reasonable estimator of $\theta = (1/N)\sum_{i=1}^{N}\left[\{(1-\pi_i)/\pi_i\}Y_i(1) - \{\pi_i/(1-\pi_i)\}Y_i(0)\right].$
Notice that, when $\pi_i = 0.5,$ $\theta = \tau$. A natural estimator of $\theta$ is the Horvitz-Thompson estimator $\hat{\theta} = (1/N)\sum_{j}W_jY^{\text{obs}}_j(1-\pi_j)/{\pi^2_j} - (1/N)\sum_{j}{(1-W_j)Y^{\text{obs}}_j\pi_j}/{(1-\pi_j)^2}$. To ensure unbiasedness of $\hat{c}_i$, we set $\gamma_i = \hat{\theta}_{(-i)}$, where
\begin{equation}
    \hat{\theta}_{(-i)} = \frac{1}{N-1}\sum_{j\neq i}\frac{W_jY^{\text{obs}}_j(1-\pi_j)}{\tilde{\pi}_j \pi_j} - \frac{1}{N-1}\sum_{j\neq i}\frac{(1-W_j)Y^{\text{obs}}_j {\pi}_j}{(1-\tilde{\pi}_j)(1-\pi_j)}, \label{eq_jack3}
\end{equation}
where $\tilde{\pi}_j = \Pr_d(W_j = 1|W_i = 1)$ if $W_i = 1$, and $\tilde{\pi}_j = \Pr_d(W_j=1|W_i = 0)$ if $W_i = 0$. 
In other words, $\hat{\theta}_{(-i)}$ computes a leave-one-out version of $\hat{\theta}$ that excludes unit $i$.
Notice that the weights in $\hat{\theta}_{(-i)}$ are adjusted according to the treatment assignment of unit $i$, i.e., instead of weighting unit $j$ by the inverse of $\Pr_d(W_j = 1)$, we weight it by the inverse of $\Pr_d(W_j  = 1|W_i)$.
As a special case, under a CRD with equal group sizes, $\hat{\theta}_{(-i
)}$ boils down to the standard difference-in-means statistic, leaving out unit $i$.

Now, Proposition \ref{prop_jack_general2} shows that this choice of $\gamma_i$ indeed leads to an unbiased estimator of $c_i$.
\begin{proposition}\normalfont
Let $d$ be an arbitrary design with $\pi_i \in (0,1)$. Consider a direct imputation estimator $\hat{c}_i$ with $\gamma_i = \hat{\theta}_{(-i)}$, where $\hat{\theta}_{(-i)}$ is as defined in Equation \ref{eq_jack3}. It follows that,
$$\E_d(\hat{\theta}_{(-i)}|W_i = 1) = \E_d(\hat{\theta}_{(-i)}|W_i = 0) = \frac{1}{N-1}\sum_{j \neq i}\left(\frac{1-\pi_j}{\pi_j}Y_j(1) - \frac{\pi_j}{1-\pi_j}Y_j(0)\right),$$
and hence $\E_d(\hat{c}_i) = c_i.$
    \label{prop_jack_general2}    
\end{proposition}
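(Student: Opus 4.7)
The plan is to verify the identity for $\E_d(\hat{\theta}_{(-i)}|W_i=1)$ by direct computation, observing that the leave-one-out conditional weights $\tilde{\pi}_j$ were specifically engineered to cancel against the conditional probabilities arising in the expectation. The calculation for $W_i=0$ is symmetric, and the final equality $\E_d(\hat{c}_i) = c_i$ then follows immediately from Proposition \ref{prop_bias}.

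First I would rewrite the summands using the identities $W_j Y^{\text{obs}}_j = W_j Y_j(1)$ and $(1-W_j) Y^{\text{obs}}_j = (1-W_j) Y_j(0)$, which hold deterministically by SUTVA. Conditioning on $W_i = 1$, for each $j \neq i$ the only random quantity in $W_j Y^{\text{obs}}_j$ is the indicator $W_j$, so
\[
\E_d(W_j Y^{\text{obs}}_j \mid W_i = 1) \;=\; Y_j(1)\,\Pr_d(W_j = 1 \mid W_i = 1) \;=\; Y_j(1)\,\tilde{\pi}_j,
\]
and similarly $\E_d((1-W_j)Y^{\text{obs}}_j \mid W_i = 1) = Y_j(0)(1 - \tilde{\pi}_j)$. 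Substituting these into the definition of $\hat{\theta}_{(-i)}$, the factor $\tilde{\pi}_j$ in the numerator cancels the $\tilde{\pi}_j$ in the denominator of the first sum, and $(1-\tilde{\pi}_j)$ cancels in the second sum, leaving exactly
\[
\E_d(\hat{\theta}_{(-i)} \mid W_i = 1) \;=\; \frac{1}{N-1}\sum_{j \neq i}\left(\frac{1-\pi_j}{\pi_j}Y_j(1) \;-\; \frac{\pi_j}{1-\pi_j}Y_j(0)\right).
\]

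Next I would repeat the identical calculation conditioning on $W_i = 0$, noting that the definition of $\tilde{\pi}_j$ in \eqref{eq_jack3} switches to $\Pr_d(W_j = 1 \mid W_i = 0)$. Because the cancellation above depends only on matching the conditional probability to the weight, and not on the value of $W_i$, the two conditional expectations agree and equal the right-hand side of the displayed identity. Finally, since $\E_d(\gamma_i \mid W_i = 1) = \E_d(\gamma_i \mid W_i = 0)$, Proposition \ref{prop_bias} gives $\E_d(\hat{c}_i) - c_i = 0$.

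There is no real obstacle here: the proof is essentially the verification that the adjusted denominators $\tilde{\pi}_j$ and $1-\tilde{\pi}_j$ in \eqref{eq_jack3} are precisely the inverse-probability weights that correct for the conditional bias identified in Proposition \ref{prop_bias}. The only subtlety worth flagging is that this cancellation mechanism works uniformly across all designs with $\pi_j \in (0,1)$ provided the conditional probabilities $\Pr_d(W_j = 1 \mid W_i = w)$ are themselves positive; under Assumption \ref{assump_positivity} alone this is not automatic in non-measurable designs, so the proof should implicitly restrict to indices $j$ for which $\tilde{\pi}_j \in (0,1)$, which is the tacit convention already used in the definition \eqref{eq_jack3}.
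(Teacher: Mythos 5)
Your proof is correct and follows essentially the same route as the paper's: write $W_jY^{\text{obs}}_j = W_jY_j(1)$ and $(1-W_j)Y^{\text{obs}}_j=(1-W_j)Y_j(0)$, take the conditional expectation given $W_i$, and let the conditional probabilities $\tilde{\pi}_j$ and $1-\tilde{\pi}_j$ cancel the corresponding denominators, after which Proposition \ref{prop_bias} yields $\E_d(\hat{c}_i)=c_i$. Your closing remark that the argument tacitly requires $\tilde{\pi}_j\in(0,1)$ is a valid observation the paper leaves implicit, but it does not change the substance of the argument.
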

Thus, by Proposition \ref{prop_jack_general2}, $\psi(\hat{\bm{c}})$ based on the Jackknife estimator $\hat{\theta}_{(-i)}$ is conservative. As shown in Theorem \ref{thm_jack2_bias} below, the upward bias of this estimator can be explicitly characterized under complete randomization and treatment effect homogeneity. 
\begin{theorem} \normalfont
Let $d$ be a completely randomized design with equal group sizes. Denote $\hat{V}_{\text{Jack}}$ as the variance estimator under the direct imputation approach with $\gamma_i = \hat{\theta}_{(-i)}$, where $\hat{\theta}_{(-i)}$ is as defined in Equation \ref{eq_jack3}. It follows that, if $Y_i(1) - Y_i(0) = \tau$ for all $i \in \{1,2,...,N\}$, $\E_d(\hat{V}_{\text{Jack}}) = \Var_d(\hat{\tau}) \times (N-1)/(N-2).$
\label{thm_jack2_bias}
\end{theorem}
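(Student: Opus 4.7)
The plan is to combine two ingredients: a clean closed-form for $\psi$ under a CRD with equal groups, and the conditional mean-zero property of the jackknife residual $\hat\tau_{(-i)} - \tau$ supplied by Proposition \ref{prop_jack_general2}. First I would exploit the identity $\E_d[(2W_i - 1)(2W_j - 1)] = \mathbbm{1}(i=j) - (1-\mathbbm{1}(i=j))/(N-1)$, valid under any CRD with $N/2$ treated, inserted into the representation of $\psi$ in Proposition \ref{prop_c}. This yields the compact formula
\[
\psi(\bm{x}) = \frac{4}{N(N-1)} \sum_{i=1}^{N} (x_i - \bar{x})^2,
\]
so $\hat V_{\text{Jack}} = \psi(\hat{\bm c})$ is a rescaled sample variance of the $\hat c_i$'s, and under homogeneity $\Var_d(\hat\tau) = \psi(\bm c) = \frac{4}{N(N-1)}\sum_i(Y_i(0) - \bar Y(0))^2$ (using $c_i = Y_i(0) + \tau/2$).

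Next, under $\pi_i = 1/2$ and $\gamma_i = \hat\tau_{(-i)}$, equation \ref{eq_ci} simplifies to $e_i := \hat c_i - c_i = -(W_i - 1/2)(\hat\tau_{(-i)} - \tau)$, and a short, explicit algebraic check on the CRD shows $\sum_i e_i = 0$ pointwise under homogeneity, so $\bar{\hat c} = \bar c$ on every realization. Consequently
\[
\psi(\hat{\bm c}) - \psi(\bm c) = \frac{4}{N(N-1)} \sum_i \big(2 e_i (c_i - \bar c) + e_i^2\big).
\]
Taking expectations, the cross term vanishes because Proposition \ref{prop_jack_general2} gives $\E_d[\hat\tau_{(-i)} - \tau \mid W_i] = 0$ under homogeneity, so $\E_d[e_i] = 0$. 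For the quadratic term, conditional on $W_i$ the leave-one-out estimator $\hat\tau_{(-i)}$ is the difference-in-means for a CRD on the remaining $N-1$ units with group sizes $N/2-1$ and $N/2$; under homogeneity its variance equals $\frac{4(N-1)}{N(N-2)} S^2_{-i}$, where $S^2_{-i}$ is the finite-population variance of $\{Y_j(0)\}_{j \neq i}$. The standard leave-one-out identity summed over $i$ gives $\sum_i S^2_{-i} = \frac{N}{N-1}\sum_j(Y_j(0)-\bar Y(0))^2$, and substituting yields $\E_d[\psi(\hat{\bm c}) - \psi(\bm c)] = \Var_d(\hat\tau)/(N-2)$, which rearranges to the claim.

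The hard part will be verifying the pointwise identity $\bar{\hat c} = \bar c$ under homogeneity, not merely the expectation: any residual mean shift would introduce an extra $N(\bar{\hat c} - \bar c)^2$ term whose expectation is not as clean. The key observation is that $\sum_{W_i=1}\hat\tau_{(-i)} = \sum_{W_i=0}\hat\tau_{(-i)}$ pointwise under homogeneity on a CRD — an elementary but crucial bookkeeping identity that is what makes the whole calculation collapse to the clean constant factor $(N-1)/(N-2)$.
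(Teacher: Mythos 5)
Your proof is correct, and it takes a genuinely different route from the paper's. The paper's argument first rewrites $\hat{\theta}_{(-i)}$ as $\hat{\tau}$ plus an explicit leave-one-out correction, factors out $(N-1)^2/(N-2)^2$, and reduces $\hat{V}_{\text{Jack}}$ to a quadratic form in $(\hat{\tau}-\tau)$ and the overlap statistic $\sum_{i:\tilde{w}_i=1}W_i-\sum_{i:\tilde{w}_i=0}W_i$, whose expectations are then computed term by term from pairwise assignment probabilities (including an iterated-expectation step with an independent copy of $\bm{W}$). You instead use the closed form $\psi(\bm{x})=\tfrac{4}{N(N-1)}\sum_i(x_i-\bar{x})^2$ together with the error decomposition $\hat{c}_i=c_i+e_i$, $e_i=-(W_i-\tfrac12)(\hat{\tau}_{(-i)}-\tau)$, and exploit three facts: the pointwise centering $\sum_i e_i=0$ (which in fact holds even without homogeneity, since $\sum_{i:W_i=1}\hat{\tau}_{(-i)}$ and $\sum_{i:W_i=0}\hat{\tau}_{(-i)}$ both equal the difference of the treated and control outcome totals); $\E_d[e_i]=0$ via Proposition \ref{prop_jack_general2}; and the conditional CRD structure of $\hat{\tau}_{(-i)}$ given $W_i$, whose variance $\tfrac{4(N-1)}{N(N-2)}S^2_{-i}$ combines with the leave-one-out identity $\sum_i S^2_{-i}=\tfrac{N}{N-1}\sum_j\{Y_j(0)-\bar{Y}(0)\}^2$ to give $\sum_i\E_d[e_i^2]\cdot\tfrac{4}{N(N-1)}=\Var_d(\hat{\tau})/(N-2)$. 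I checked each of these intermediate claims and the constants; they are all correct. What each approach buys: yours is more modular and makes the source of the inflation factor transparent (the cross terms vanish by conditional unbiasedness of the jackknife residual, and the entire excess is the accumulated squared imputation error), while the paper's keeps the computation in terms of $\hat{\tau}$ and pairwise probabilities, which connects more directly to the machinery used for Proposition \ref{prop_imputation_neyman}.
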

%\vspace{-1cm}
We recall that, when $\gamma_i = \hat{\tau}$, the resulting variance estimator $\hat{V}$ satisfied $\E_d(\hat{V}) = \Var_d(\hat{\tau}){(N-2)}/{(N-1)}$. Therefore, using the Jackknifed version of $\hat{\tau}$ reverses the scaling factor $(N-2)/(N-1)$ and produces a conservative estimator for $\Var_d(\hat{\tau})$. 

In Appendix \ref{appsec_alternative}, we discuss alternative choices of $\gamma_i$ that lead to conservative variance estimators. In a simulation study in Appendix \ref{sec_simulation}, we compare the relative biases of the variance estimators for different choices of $\gamma_i$ under completely randomized designs. The results indicate that the jackknifed variance estimator with $\gamma_i = \hat{\theta}_{(-i)}$ performs reasonably well across scenarios, especially when treatment effects are homogeneous.

In the remainder of this section, we evaluate the performance of this jackknifed variance estimator $\hat{V}_{\text{Jack}}$ with other variance estimators using two simulation studies (A and B). For comparison, we consider the commonly used design-based variance estimator of \cite{aronow2013class}, $\hat{V}_{\text{AM}}$, which uses a variance expansion (similar to those discussed in Section \ref{sec_neyman_decomposition}) and bounds each term that is non-identifiable using Young's inequality. In addition, we consider the imputation-based variance estimator with $\gamma_i = \hat{\tau}$, denoted by $\hat{V}_{\hat{\tau}}$. 

In simulation study A, we consider a rerandomized design with
$N = 12$, $N_t = N_t = 6$, and a single covariate $X_i$ such that $X_i \overset{iid}{\sim} \mathcal{N}(10,1)$ for $i \in \{1,2\}$, and $X_i \overset{iid}{\sim} \mathcal{N}(0,1)$ for $i \in \{3,...,N\}$. To implement the design, we randomly draw an assignment vector under complete randomization and rerandomize until the mean imbalance in $X$ is small enough, in particular, until the absolute standardized mean difference (ASMD) in $X$, $|\bar{X}_t - \bar{X}_c|/\sqrt{(s^2_t + s^2_c)/2}$ is smaller than 0.2.\footnote{Here, $\bar{X}_t$ and $\bar{X}_c$ denote the covariate means, and $s^2_t$ and $s^2_c$ the variances, in the treated and control groups, respectively.} The choice of the covariate and the balancing criterion is chosen specifically to ensure that $\Pr(W_1 = 1,W_2 = 1) = 0$, making the design non-measurable.

In simulation study B, we consider a rerandomized design with
$N=50$, $N_t = N_c = 25$, and $6$ covariates, generated according to the simulation design in   \cite{hainmueller2012balancing} (see also, \cite{chattopadhyay2022balanced}).
\begin{equation}
%\vspace{-1.8\topsep}
\begin{psmallmatrix}
X_{1}\\
X_{2}\\
X_3
\end{psmallmatrix} \sim
\mathcal{N}_3\left\{\begin{psmallmatrix}
0\\
0\\
0
\end{psmallmatrix},\begin{psmallmatrix}
2 & 1 & -1\\
1 & 1 & -0.5\\
-1 & -0.5 & 1
\end{psmallmatrix}\right\},\hspace{0.1cm}
 X_4 \sim \text{Unif}(-3,3), \hspace{0.1cm} X_5 \sim \chi^2_1, \hspace{0.1cm} X_6 \sim \text{Bernoulli}(0.5). \label{dgp}
\end{equation}
Here, $X_4$, $X_5$, and $X_6$ are mutually independent and separately independent of $(X_1,X_2,X_3)^\top$. To implement the design, we randomly draw an assignment vector under complete randomization and rerandomize until the maximum of the ASMDs across the six covariates is smaller than 0.2.

Under each simulation study, we consider four different generative models for the potential outcomes, as shown below.
 \begin{enumerate}
        \item No effect whatsoever: $Y_i(0) \overset{iid}{\sim} \text{Unif}(0,10)$, $Y_i(1) = Y_i(0)$ for all $i$. 

        \item Constant effect, fixed across simulations: $Y_i(0) \overset{iid}{\sim} \text{Unif}(0,10)$, $Y_i(1) = Y_i(0) + 5$ for all $i$.

        \item Constant effect, varying across simulations: $Y_i(0) \overset{iid}{\sim} \text{Unif}(0,10)$, $Y_i(1) = Y_i(0) + \tau$ for all $i$, where $\tau \sim \text{Unif}(-5,5)$. 

        \item Heterogeneous effects: $Y_i(0) \overset{iid}{\sim} \text{Unif}(0,10)$, $Y_i(1) = Y_i(0) + \tau_i$ for all $i$, where $\tau_i \overset{iid}{\sim} \text{Unif}(-5,5)$. 
    \end{enumerate}
For each of the above scenarios, and for a variance estimator $\hat{V}$, we compute its relative bias, $\frac{\E_d(\hat{V}) - \Var_d(\hat{\tau})}{\Var_d(\hat{\tau})}$, and its standard deviation. This process is then replicated, each time independently generating the potential outcomes according to the specified data-generating process. 
Figure \ref{fig:simu1} and \ref{fig:simu2} displays the resulting distributions of relative bias and standard deviation of each estimator under simulation studies A and B, respectively.
\begin{figure}[!ht]
    \centering
    \includegraphics[width=\linewidth]{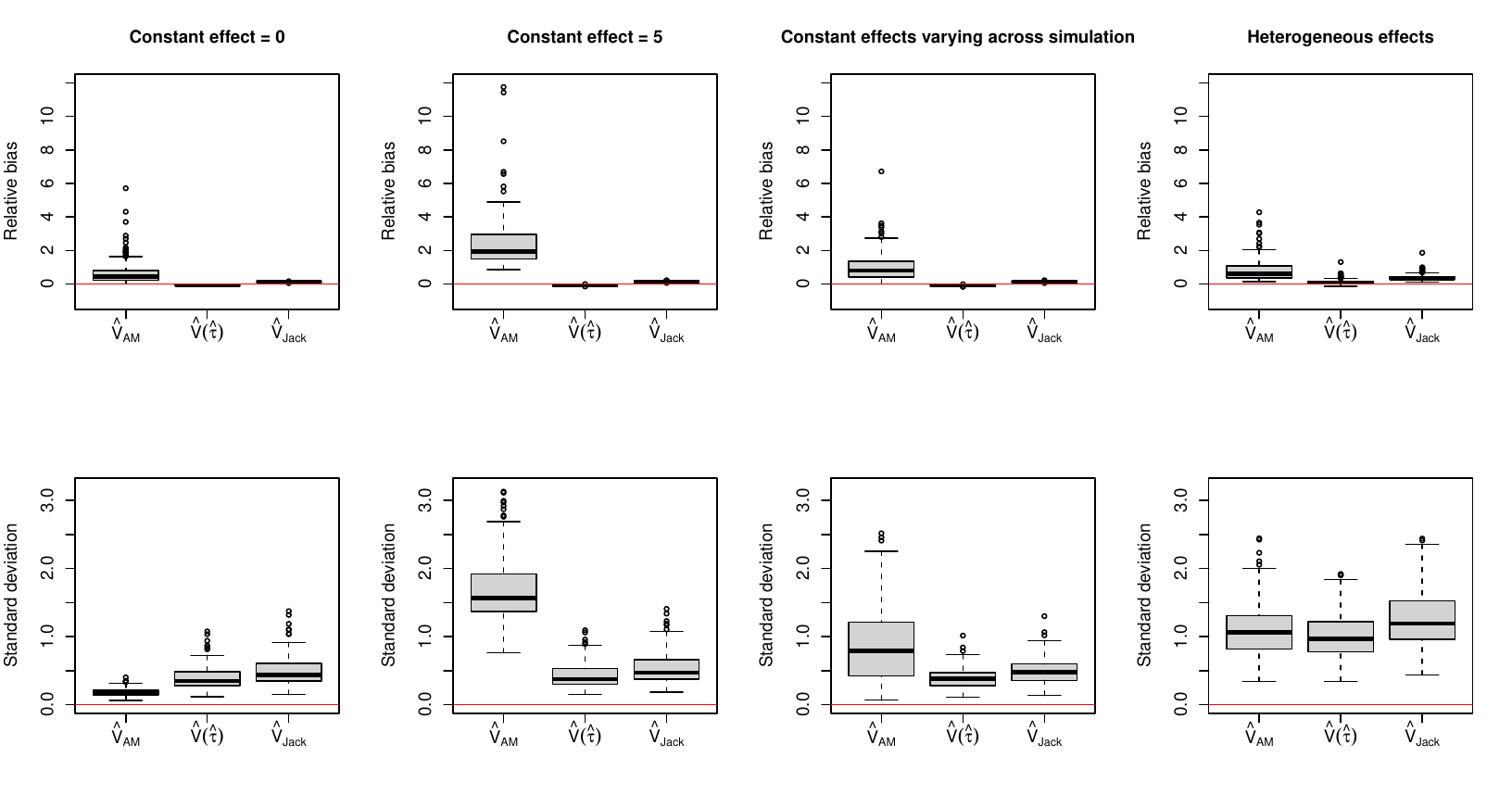}
    \caption{Relative bias and standard error of variance estimators across different scenarios under simulation study A}
    \label{fig:simu1}
\end{figure}

\begin{figure}[!ht]
    \centering
    \includegraphics[width=\linewidth]{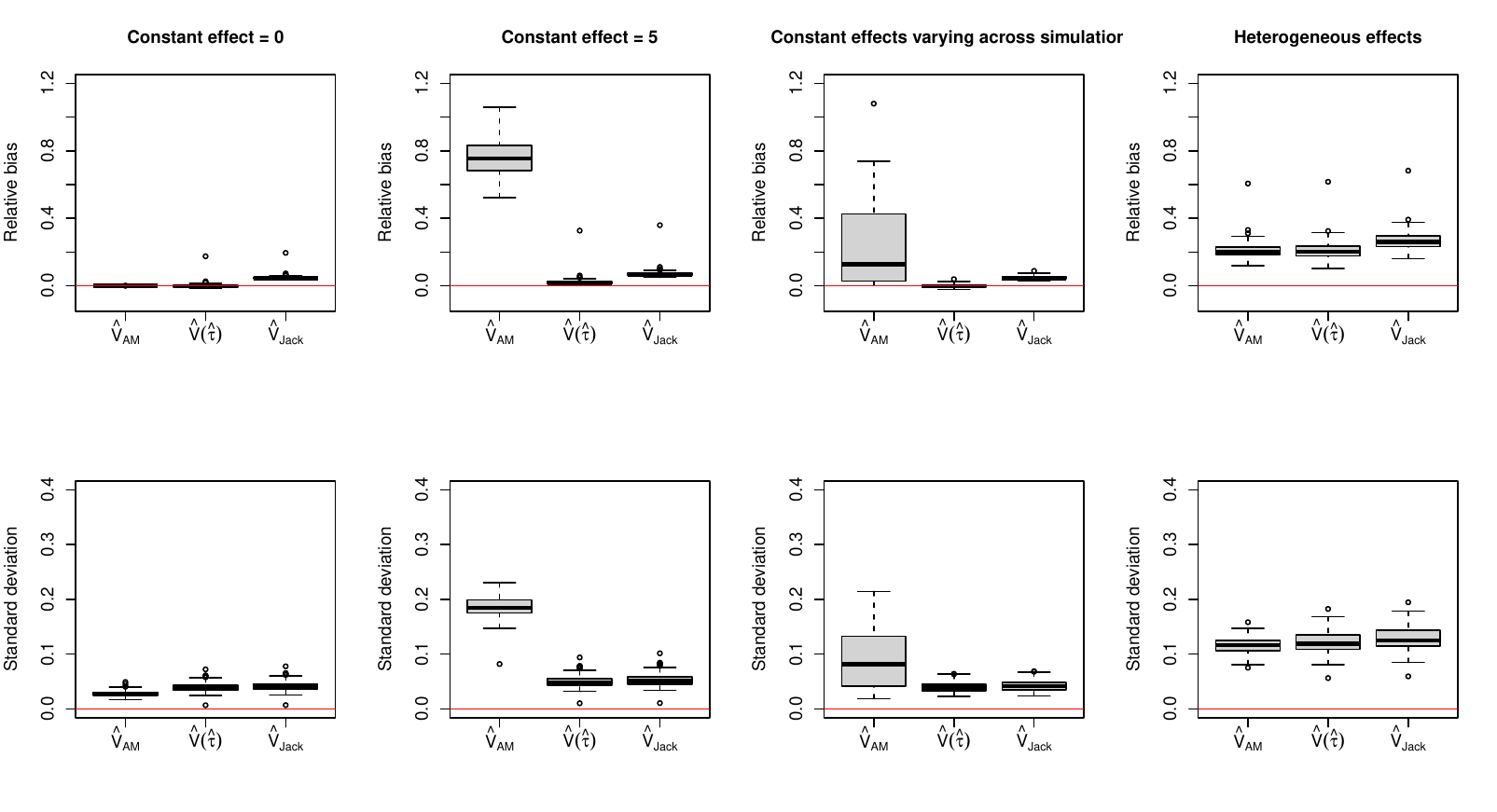}
    \caption{Relative bias and standard error of variance estimators across different scenarios under simulation study B}
    \label{fig:simu2}
\end{figure}
Figures \ref{fig:simu1} and \ref{fig:simu2} show that, as expected, both $\hat{V}_{\text{AM}}$ and $\hat{V}_{\text{Jack}}$ are conservative estimators (i.e., their relative bias is non-negative), whereas $\hat{V}_{\hat{\tau}}$ is not necessarily conservative. Across all scenarios, $\hat{V}_{\text{Jack}}$ performs well in terms of both relative bias and standard deviation, outperforming the other estimators when treatment effects are homogeneous.

%%%%%%%%%%%%%%%%%%%%%%%
%%%%%%%%%%%%%%%%%%%%%%%
%%%%%%%%%%%%%%%%%%%%%%%
\section{Final thoughts}
\label{sec_final}

In this paper, we explored the problem of Neymanian inference for average treatment effects in general experimental designs, offering insights into existing principles and introducing two novel approaches: the contrast approach and the imputation approach, both of which offer new perspectives to interpret Neyman's original approach. Each of these approaches produces a class of variance estimators that, although distinct in general, align with Neyman's estimator under complete randomization.
We analyzed the theoretical properties of both approaches, demonstrating that each yields conservative variance estimators that are unbiased (exactly or approximately) under homogeneous treatment effects for a broad class of designs.
While our focus in this paper was on elucidating the principles that allow us to conduct Neymanian inference for designs beyond complete randomization, the choice of an estimator matters in practice. In this paper, through simulation studies, we found that the Jackknife imputation-based variance estimator performs robustly across different scenarios.
%Selecting an optimal variance estimator under each approach (along the lines of \cite{harshaw2021optimized}) and establishing their asymptotic validity for a given design are promising directions for future research.
An important direction for future work is to identify optimal variance estimators within each approach (along the lines of \cite{harshaw2021optimized}) and to establish their asymptotic validity under specific experimental designs.

% imputation is nonquadratic: harshaw's method do not directly apply
% asymptotic properties
% generalizing the connections to neymanian stuff

%%%%%%%%%%%%%%%%%%%%%%
\bibliographystyle{unsrt}
\bibliography{mybibliography23}
%%%%%%%%%%%%%%%%%%%%%%%
%%%%%%%%%%%%%%%%%%%%%%%
%%%%%%%%%%%%%%%%%%%%%%%
\newpage
\appendix

\section*{Appendix}

\addcontentsline{toc}{section}{Appendix}

% Section, subsection, and theorem - numbering
\def\thesection{\Alph{section}}
\renewcommand*{\thetheorem}{\Alph{section}\arabic{theorem}}
\renewcommand*{\thelemma}{\Alph{section}\arabic{lemma}}

\setcounter{table}{0}
\renewcommand{\thetable}{A\arabic{table}}
\setcounter{figure}{0}
\renewcommand\thefigure{A\arabic{figure}}
\setcounter{theorem}{0}
\renewcommand\thetheorem{A\arabic{theorem}}
\setcounter{lemma}{0}
\renewcommand\thelemma{A\arabic{lemma}}
\setcounter{equation}{0}
\renewcommand\theequation{A\arabic{equation}}
\setcounter{corollary}{0}
\renewcommand\thecorollary{A\arabic{corollary}}

%%%%%%%%%%%%%%%%%%%%%%%%%%%%%%%%%%%%%%%%%%
%%%%%%%%%%%%%%%%%%%%%%%%%%%%%%%%%%%%%%%%%%
%%%%%%%%%%%%%%%%%%%%%%%%%%%%%%%%%%%%%%%%%%
\section{Additional theoretical results}
\label{appsec_additional_theory}

\subsection{Bias of standard imputation under homogeneity}

In this section, we characterize the bias of the imputation-based variance estimator $\psi(\hat{\bm{c}})$ for a general design $d$, when the potential outcomes are imputed by assuming that the unit-level effects are equal to a common, deterministic value $\beta$. This characterization also reveals the design conditions required to ensure that $\psi(\hat{\bm{c}})$ is conservative. Theorem \ref{thm_impute_general} formalizes this result.
\begin{theorem} \normalfont
Let $d$ be an arbitrary design with support $\mathcal{W}$. Also, for $\bm{w} \in \mathcal{W}$, let $N_t(\bm{w})$ and $N_c(\bm{w})$ be the sizes of the treatment and control groups, respectively. 
Consider the imputation approach, where the missing potential outcomes are imputed as if the true unit-level treatment effect is $\beta$. Then, under treatment effect homogeneity, the corresponding variance estimator $\psi(\hat{\bm{c}})$ satisfies,
$$\psi(\hat{\bm{c}}) = \Var_d(\hat{\tau}) + A_1 + A_2,$$
where, 
$$A_1  = \frac{(\tau - \beta)^2}{N^2}\sum_{\bm{w}}p_{\bm{w}}\left\{\left(\sum_{i:w_i = 1}\frac{W_i}{\pi_i} - \sum_{i:w_i = 0}\frac{W_i}{1-\pi_i}\right) - \left(\sum_{i:w_i = 1}\frac{1}{\pi_i} - N \right)\right\}^2,$$
and 
\begin{align*}
  A_2 & =   2\frac{(\tau - \beta)}{N^2}\sum_{\bm{w}}p_{\bm{w}}\left\{ \left(\sum_{i:w_i = 1}\frac{Y_i(0)}{\pi_i} - \sum_{i:w_i = 0}\frac{Y_i(0)}{1-\pi_i} \right) + \tau \left( \sum_{i:w_i = 1}\frac{1-\pi_i}{\pi_i} - N_c(\bm{w}) \right) \right\} \nonumber \\
 & \hspace{3cm} \times \left\{\left(\sum_{i:w_i = 1}\frac{W_i}{\pi_i} - \sum_{i:w_i = 0}\frac{W_i}{1-\pi_i}\right) - \left(\sum_{i:w_i = 1}\frac{1-\pi_i}{\pi_i} - N_c(\bm{w}) \right)\right\}.
\end{align*} \label{thm_impute_general}
\end{theorem}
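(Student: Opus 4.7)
The plan is to prove Theorem \ref{thm_impute_general} via a direct expansion of the quadratic $\psi(\hat{\bm{c}})$ around the true $\bm{c}$. First I would establish the key pointwise identity: under homogeneity $Y_i(1) = Y_i(0) + \tau$, so inspecting Equation \ref{eq_ai} in the two cases $W_i=0$ and $W_i=1$ separately yields
$$\hat{c}_i - c_i \;=\; (\tau - \beta)\,(W_i - 1 + \pi_i).$$
Crucially, this residual is an \emph{affine} function of the realized assignment $W_i$ alone (it does not depend on the potential outcomes), so plugging it into the linear contrast $T(\bm{w}, \bm{a}) := \sum_{i:w_i=1} a_i/\pi_i - \sum_{i:w_i=0} a_i/(1-\pi_i)$ that appears inside $\psi$ in Proposition \ref{prop_impute_gen} gives, by linearity,
$$T(\bm{w},\hat{\bm{c}}) \;=\; T(\bm{w},\bm{c}) \;+\; (\tau - \beta)\, D(\bm{w}),$$
where $D(\bm{w}) := T(\bm{w}, \bm{W} + \bm{\pi} - \bm{1})$ depends only on $\bm{w}$ and the realized $\bm{W}$.

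Next I would rearrange $D(\bm{w})$ into the form appearing in $A_1$. Splitting sums, the treated-side piece of $D(\bm{w})$ becomes $\sum_{i:w_i=1} W_i/\pi_i - \sum_{i:w_i=1}(1-\pi_i)/\pi_i$, while the control-side piece becomes $\sum_{i:w_i=0} W_i/(1-\pi_i) - N_c(\bm{w})$. Combining, and then using the one-line identity $\sum_{i:w_i=1}(1-\pi_i)/\pi_i - N_c(\bm{w}) = \sum_{i:w_i=1} 1/\pi_i - N$ (which follows by adding and subtracting $N_t(\bm{w})$), produces the exact offset $\sum_{i:w_i=1} 1/\pi_i - N$ appearing in $A_1$.

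Finally I would expand the square
$$T(\bm{w},\hat{\bm{c}})^2 \;=\; T(\bm{w},\bm{c})^2 \;+\; 2(\tau-\beta)\,T(\bm{w},\bm{c})\,D(\bm{w}) \;+\; (\tau-\beta)^2\, D(\bm{w})^2,$$
weight by $p_{\bm{w}}/N^2$, and sum over $\bm{w}\in\mathcal{W}$. The first piece is $\psi(\bm{c}) = \Var_d(\hat{\tau})$ by Proposition \ref{prop_impute_gen}, and the third piece is $A_1$ by inspection. For the cross piece I would invoke homogeneity once more to write $c_i = (1-\pi_i) Y_i(1) + \pi_i Y_i(0) = Y_i(0) + (1-\pi_i)\tau$, so that $T(\bm{w},\bm{c})$ splits as the $Y_i(0)$-contrast plus $\tau\bigl(\sum_{i:w_i=1}(1-\pi_i)/\pi_i - N_c(\bm{w})\bigr)$, which is precisely the first brace of $A_2$. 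Attaching the factor $2(\tau-\beta)D(\bm{w})/N^2$ and summing recovers $A_2$ verbatim, completing the proof. The main obstacle is purely notational: one must keep the summation dummy $\bm{w}$ strictly distinct from the realized assignment $\bm{W}$ throughout, since $\hat{c}_i$ depends on the latter while the outer sum ranges over the former — beyond that, the argument is two line-identities and a completed square.
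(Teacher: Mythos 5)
Your proposal is correct and takes essentially the same route as the paper's proof: both compute $\hat{c}_i - c_i = (\tau-\beta)(W_i - 1 + \pi_i)$ under homogeneity, expand the resulting linear contrast inside $\psi$, complete the square, and use the identity $\sum_{i:w_i=1}(1-\pi_i)/\pi_i - N_c(\bm{w}) = \sum_{i:w_i=1}1/\pi_i - N$ to match the stated form of $A_1$. No gaps.
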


\begin{proof}
We follow the notations as in the proof of Proposition \ref{prop_impute_gen}. If treatment effect homogeneity holds, i.e., if $Y_i(1) -Y_i(0) = \tau$ for all $i$, then $c_i = Y_i(0) + (1-\pi_i)\tau$. Hence,
\begin{align}
   \Var_d(\hat{\tau}) & =   \frac{1}{N^2}\sum_{\bm{w}} p_{\bm{w}} \left\{ \left(\sum_{i:w_{i} = 1}\frac{Y_{i}(0)}{\pi_{i}} - \sum_{i:w_{i} = 0}\frac{Y_{i}(0)}{1-\pi_{i}} \right) + \tau \left(\sum_{i:w_i=1}\frac{1 - \pi_i}{\pi_i} - N_c(\bm{w}) \right) \right\}^2, \label{eq_A_impute2}
\end{align}
where $N_c(\bm{w}) = \sum_{i=1}^{n}(1-w_i)$ is the number of control units corresponding to the assignment vector $\bm{w}$. Similarly, let $N_t(\bm{w}) = \sum_{i=1}^{n}w_i$ be the number of treated units corresponding to $\bm{w}$.
Now, under the imputation approach, the imputed potential outcomes $\hat{Y}_i(0)$ and $\hat{Y}_i(1)$ can be written as
\begin{equation}
  \hat{Y}_i(0) = 
    \begin{cases}
     Y_i(1) - \beta & \text{if $W_i = 1$}\\
    Y_i(0) & \text{if $W_i = 0$},
    \end{cases}       
\end{equation}
\begin{equation}
  \hat{Y}_i(1) =
    \begin{cases}
     Y_i(1) & \text{if $W_i = 1$}\\
    Y_i(0) + \beta & \text{if $W_i = 0$}.
    \end{cases}       
\end{equation}
So, the imputed $c_i$ is given by,
\begin{align}
    \hat{c}_i & = (1-\pi_i)\hat{Y}_i(1) + \pi_i \hat{Y}_i(0) \nonumber \\
    & = W_i(Y_i(1) -\pi_i \beta) + (1-W_i)\{Y_i(0) + (1-\pi_i)\beta\} \nonumber \\
    & = Y_i(0) + W_i(\tau - \beta) + (1 - \pi_i)\beta \quad \text{(under homogeneity)}.
\end{align}
Now, for the estimated variance of $\hat{\tau}$ under the imputation approach is given by,
\begin{align}
    \psi(\hat{\bm{c}}) & = \frac{1}{N^2}\sum_{\bm{w}} p_{\bm{w}} \left(\sum_{i:w_{i} = 1}\frac{\hat{c}_{i}}{\pi_{i}} - \sum_{i:w_{i} = 0}\frac{\hat{c}_{i}}{1-\pi_{i}} \right)^2 \nonumber \\
    & = \frac{1}{N^2} \sum_{\bm{w}}p_{\bm{w}} \left\{ \left(\sum_{i:w_i = 1}\frac{Y_i(0)}{\pi_i} - \sum_{i:w_i = 0}\frac{Y_i(0)}{1-\pi_i} \right) + (\tau - \beta) \left(\sum_{i:w_i = 1}\frac{W_i}{\pi_i} - \sum_{i:w_i = 0}\frac{W_i}{1-\pi_i} \right) \right. \nonumber \\
    &+ \left. \beta \left(\sum_{i:w_i = 1}\frac{1-\pi_i}{\pi_i} - N_c(\bm{w}) \right)\right\}^2 \nonumber \\
    & = \frac{1}{N^2} \sum_{\bm{w}}p_{\bm{w}} \left\{ \left(\sum_{i:w_i = 1}\frac{Y_i(0)}{\pi_i} - \sum_{i:w_i = 0}\frac{Y_i(0)}{1-\pi_i} \right) + \tau \left( \sum_{i:w_i = 1}\frac{1-\pi_i}{\pi_i} - N_c(\bm{w}) \right) \right. \nonumber \\
 & + \left. (\tau - \beta) \left(\sum_{i:w_i = 1}\frac{W_i}{\pi_i} - \sum_{i:w_i = 0}\frac{W_i}{1-\pi_i}\right) - \left(\sum_{i:w_i = 1}\frac{1-\pi_i}{\pi_i} - N_c(\bm{w}) \right)\right\}^2 \nonumber \\
 & = \psi(\bm{c}) + \frac{(\tau - \beta)^2}{N^2}\sum_{\bm{w}}p_{\bm{w}}\left\{\left(\sum_{i:w_i = 1}\frac{W_i}{\pi_i} - \sum_{i:w_i = 0}\frac{W_i}{1-\pi_i}\right) - \left(\sum_{i:w_i = 1}\frac{1-\pi_i}{\pi_i} - N_c(\bm{w}) \right)\right\}^2 \nonumber \\
 & + 2\frac{(\tau - \beta)}{N^2}\sum_{\bm{w}}p_{\bm{w}}\left\{ \left(\sum_{i:w_i = 1}\frac{Y_i(0)}{\pi_i} - \sum_{i:w_i = 0}\frac{Y_i(0)}{1-\pi_i} \right) + \tau \left( \sum_{i:w_i = 1}\frac{1-\pi_i}{\pi_i} - N_c(\bm{w}) \right) \right\} \nonumber \\
 & \hspace{3cm} \times \left\{\left(\sum_{i:w_i = 1}\frac{W_i}{\pi_i} - \sum_{i:w_i = 0}\frac{W_i}{1-\pi_i}\right) - \left(\sum_{i:w_i = 1}\frac{1-\pi_i}{\pi_i} - N_c(\bm{w}) \right)\right\} \nonumber \\
 & = \psi(\bm{c}) + A_1 + A_2, \label{eq_A_impute3}
\end{align}
where,
\begin{align}
A_1 &=  \frac{(\tau - \beta)^2}{N^2}\sum_{\bm{w}}p_{\bm{w}}\left\{\left(\sum_{i:w_i = 1}\frac{W_i}{\pi_i} - \sum_{i:w_i = 0}\frac{W_i}{1-\pi_i}\right) - \left(\sum_{i:w_i = 1}\frac{1-\pi_i}{\pi_i} - N_c(\bm{w}) \right)\right\}^2    \nonumber \\
& = \frac{(\tau - \beta)^2}{N^2}\sum_{\bm{w}}p_{\bm{w}}\left\{\left(\sum_{i:w_i = 1}\frac{W_i}{\pi_i} - \sum_{i:w_i = 0}\frac{W_i}{1-\pi_i}\right) - \left(\sum_{i:w_i = 1}\frac{1}{\pi_i} - N) \right)\right\}^2,
\end{align}
and $A_2 = \psi(\hat{\bm{c}}) -\psi(\bm{c}) - A_1$. 
This completes the proof.
\end{proof}

Using Theorem \ref{thm_impute_general}, we can decompose the bias of $\psi(\hat{\bm{c}})$ in two terms $\E_d(A_1)$ and $\E_d(A_2)$. While the former is non-negative, the latter can take arbitrary values depending on the design. Assumption \ref{assump_fixed} provides a sufficient condition under which $\E_d(A_2)$ vanishes.
\begin{assumption}[Fixed total weight condition] \normalfont
    Fix a design $d$ with support $\mathcal{W}$. For every $\bm{w} \in \mathcal{W}$, the total inverse-probability weight satisfies $\sum_{i:w_i = 1}(1/\pi_i) + \sum_{i:w_i = 0}1/(1-\pi_i) = 2N.$ \label{assump_fixed}
\end{assumption}
Assumption \ref{assump_fixed} is satisfied if, e.g., $\pi_i = 0.5$ for all $i$. It is also satisfied for EPSEM designs with fixed (and possibly unequal) group sizes, e.g., a CRD with $N_t \neq N_c$. Now, if Assumption \ref{assump_fixed} holds, then under homogeneity, we can show that the imputation estimator is indeed conservative for $\Var_d(\hat{\tau})$.
\begin{corollary} \normalfont
Let $d$ be a design satisfying Assumption \ref{assump_fixed}. Then, the imputation estimator $\psi(\bm{c})$ satisfies
   $$\E_d\{\psi(\hat{\bm{c}})\} = \Var_d(\hat{\tau}) + (\tau - \beta)^2\E_d\{\psi(\bm{W})\}.$$ \label{corollary_imputation}
\end{corollary}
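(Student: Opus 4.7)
The plan is to obtain this as an immediate corollary of Theorem \ref{thm_impute_general}, which already gives the pointwise decomposition $\psi(\hat{\bm{c}}) = \Var_d(\hat{\tau}) + A_1 + A_2$ under homogeneity. Taking the randomization expectation of both sides reduces the claim to verifying two facts: (i) $\E_d(A_2) = 0$, and (ii) $\E_d(A_1) = (\tau - \beta)^2 \E_d\{\psi(\bm{W})\}$. Both will hinge on a single algebraic consequence of Assumption \ref{assump_fixed}.

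For (i), the first observation is that in the expression for $A_2$, the outer factor $\{\sum_{i:w_i=1}Y_i(0)/\pi_i - \sum_{i:w_i=0}Y_i(0)/(1-\pi_i) + \tau(\sum_{i:w_i=1}(1-\pi_i)/\pi_i - N_c(\bm{w}))\}$ is deterministic (it depends only on the summation dummy $\bm{w}$ and on the potential outcomes), while the inner factor $B_{\bm{w}} - a_{\bm{w}}$ is random, where I write $B_{\bm{w}} := \sum_{i:w_i=1}W_i/\pi_i - \sum_{i:w_i=0}W_i/(1-\pi_i)$ and $a_{\bm{w}} := \sum_{i:w_i=1}(1-\pi_i)/\pi_i - N_c(\bm{w})$. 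So $\E_d(A_2) = 0$ will follow as soon as I show $\E_d[B_{\bm{w}}] = a_{\bm{w}}$ for every $\bm{w} \in \mathcal{W}$. A direct computation gives $\E_d[B_{\bm{w}}] = N_t(\bm{w}) - \sum_{i:w_i=0}\pi_i/(1-\pi_i)$, and Assumption \ref{assump_fixed}, after subtracting $N_t(\bm{w}) + N_c(\bm{w}) = N$ from both sides, is equivalent to $\sum_{i:w_i=1}(1-\pi_i)/\pi_i + \sum_{i:w_i=0}\pi_i/(1-\pi_i) = N$. Substituting this into the definition of $a_{\bm{w}}$ yields $a_{\bm{w}} = N_t(\bm{w}) - \sum_{i:w_i=0}\pi_i/(1-\pi_i)$, which matches $\E_d[B_{\bm{w}}]$ exactly.

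For (ii), my starting point is $\E_d(A_1) = (\tau-\beta)^2 N^{-2}\sum_{\bm{w}}p_{\bm{w}}\E_d[(B_{\bm{w}} - a_{\bm{w}})^2]$. I would then recognize that $B_{\bm{w}} - a_{\bm{w}}$ is precisely the Horvitz–Thompson contrast produced by substituting the random vector $\bm{W}$ into the formula defining $\psi$, modulo the constant shift handled in (i). Concretely, under homogeneity one can check that $\hat{c}_i - c_i = (\tau-\beta)[W_i - (1-\pi_i)]$, and plugging this into the contrast appearing in $\psi$ yields $B_{\bm{w}} - a_{\bm{w}}$ up to the same constant shift; combining with the identity $a_{\bm{w}} = \E_d[B_{\bm{w}}]$ from (i), the sum $N^{-2}\sum_{\bm{w}}p_{\bm{w}}\E_d[(B_{\bm{w}} - a_{\bm{w}})^2]$ reassembles into $\E_d\{\psi(\bm{W})\}$. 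The main obstacle will be the bookkeeping in part (i) — namely, translating Assumption \ref{assump_fixed} into the precise equivalent form needed to identify $a_{\bm{w}}$ with $\E_d[B_{\bm{w}}]$ and matching the resulting expression to the rather cumbersome one appearing in Theorem \ref{thm_impute_general}. Everything else is mechanical substitution.
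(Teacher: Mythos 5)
Your part (i) is correct and is exactly the paper's argument: the paper likewise computes the conditional expectation of the inner factor of $A_2$, rewrites the result as $2N - \bigl(\sum_{i:w_i=1}1/\pi_i + \sum_{i:w_i=0}1/(1-\pi_i)\bigr)$, and invokes Assumption \ref{assump_fixed} to kill it; your reformulation of the assumption as $\sum_{i:w_i=1}(1-\pi_i)/\pi_i + \sum_{i:w_i=0}\pi_i/(1-\pi_i) = N$ is a clean equivalent way of saying the same thing, and the observation that the outer factor is deterministic so that $\E_d(A_2)$ reduces to $\E_d[B_{\bm{w}}-a_{\bm{w}}]=0$ is sound.

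Part (ii) is where the argument does not go through as written. Having established $a_{\bm{w}} = \E_d[B_{\bm{w}}]$, you get $\E_d[(B_{\bm{w}}-a_{\bm{w}})^2] = \Var_d(B_{\bm{w}}) = \E_d[B_{\bm{w}}^2] - a_{\bm{w}}^2$, whereas $\E_d\{\psi(\bm{W})\} = N^{-2}\sum_{\bm{w}}p_{\bm{w}}\E_d[B_{\bm{w}}^2]$ (the plug-in is uncentered, as the proof of Proposition \ref{prop_limit} makes explicit). So your ``reassembly'' is off by $N^{-2}\sum_{\bm{w}}p_{\bm{w}}\,a_{\bm{w}}^2$, and the identity $a_{\bm{w}}=\E_d[B_{\bm{w}}]$ cannot cancel this term --- it is precisely what creates it. The discrepancy vanishes only if $a_{\bm{w}}=0$, i.e.\ $\sum_{i:w_i=1}1/\pi_i = N$ for every $\bm{w}\in\mathcal{W}$, which is strictly stronger than Assumption \ref{assump_fixed}: a Bernoulli$(1/2)$ design satisfies the fixed-total-weight condition but has $a_{\bm{w}} = N_t(\bm{w})-N_c(\bm{w})\neq 0$. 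What is true in general is $A_1 = (\tau-\beta)^2\,\psi(\bm{W}-\bm{1}+\bm{\pi})$ (plug $\hat c_i - c_i = (\tau-\beta)\{W_i-(1-\pi_i)\}$ into the quadratic form), which coincides with $(\tau-\beta)^2\psi(\bm{W})$ only under that extra support condition --- a condition that does hold in every design to which the paper actually applies the corollary (CRDs, EPSEM designs with fixed group sizes, and the $\pi_i=1/2$ equal-allocation designs of Section \ref{sec_imputation_general}). For what it is worth, the paper's own proof stops at $\E_d\{\psi(\hat{\bm{c}})\} = \Var_d(\hat{\tau}) + \E_d(A_1)$ and never verifies the identification of $\E_d(A_1)$ with $(\tau-\beta)^2\E_d\{\psi(\bm{W})\}$ either; you are attempting to close a step the paper leaves implicit, but the step genuinely requires either the additional condition $a_{\bm{w}}=0$ or restating the remainder as $(\tau-\beta)^2\E_d\{\psi(\bm{W}-\bm{1}+\bm{\pi})\}$.
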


\begin{proof}
    
Using the notations as in the proof of Theorem \ref{thm_impute_general} we get
\begin{align}
    \E_d(A_2) & = 2\frac{(\tau - \beta)}{N^2}\sum_{\bm{w}}p_{\bm{w}}\left\{ \left(\sum_{i:w_i = 1}\frac{Y_i(0)}{\pi_i} - \sum_{i:w_i = 0}\frac{Y_i(0)}{1-\pi_i} \right) + \tau \left( \sum_{i:w_i = 1}\frac{1-\pi_i}{\pi_i} - N_c(\bm{w}) \right) \right\} \nonumber \\
 & \hspace{3cm} \times \left\{\left(N_t(\bm{w}) - \sum_{i:w_i = 0}\frac{\pi_i}{1-\pi_i}\right) - \left(\sum_{i:w_i = 1}\frac{1-\pi_i}{\pi_i} - N_c(\bm{w}) \right)\right\} \nonumber \\
 & = 2\frac{(\tau - \beta)}{N^2}\sum_{\bm{w}}p_{\bm{w}}\left\{ \left(\sum_{i:w_i = 1}\frac{Y_i(0)}{\pi_i} - \sum_{i:w_i = 0}\frac{Y_i(0)}{1-\pi_i} \right) + \tau \left( \sum_{i:w_i = 1}\frac{1-\pi_i}{\pi_i} - N_c(\bm{w}) \right) \right\} \nonumber \\
 & \hspace{3cm} \times \left\{2N - \left(\sum_{i:w_i = 1}\frac{1}{\pi_i} + \sum_{i:w_i = 0}\frac{1}{1-\pi_i} \right)\right\}.
\end{align}
By the fixed total weight condition, $\sum_{i:w_i = 1}(1/\pi_i) + \sum_{i:w_i = 0}1/(1-\pi_i) = 2N.$. This implies, $\E_d(A_2) = 0$.
Therefore,
\begin{align}
    \E_d\{\psi(\hat{\bm{c}})\} & = \psi(\bm{c}) + \E_d(A_1) \geq \E_d(A_1),
\end{align}
since $A_1 \geq 0$. Thus, the variance estimator is conservative under homogeneity. This completes the proof.
\end{proof}

Corrolary \ref{corollary_imputation} directly generalizes Proposition \ref{thm_imputation1} to a class of designs satisfying Assumption \ref{assump_fixed}. As before, the upward bias in $\psi(\hat{\bm{c}})$ decreases as $\beta$ gets closer to the true treatment effect $\tau$.
% tau_hat case?

Note that, even if a design satisfies Assumption \ref{assump_fixed}, the imputation estimator is not guaranteed to be conservative unless treatment effects are homogeneous across units. As discussed in Section \ref{sec_imputation_extension}, this happens because the imputed values of $\hat{\bm{c}}$ may not be unbiased for $\bm{c}$.

\subsection{On the class of linear direct imputation eastimators}

In Proposition \ref{prop_altimpute}, we consider a class of linear imputation estimators of $\bm{c}$ and provide necessary and sufficient conditions under which these estimators are unbiased. 

\begin{proposition} \normalfont
Consider the following class of linear imputation estimators of $\bm{c}$.
\begin{equation}
  \hat{c}_i =
    \begin{cases}
      \alpha_i Y^{\text{obs}}_i + \zeta_i & \text{if $W_i = 1$}\\
    \tilde{\alpha}_i Y^{\text{obs}}_i + \tilde{\zeta}_i & \text{if $W_i = 0$},
    \end{cases}       
\end{equation}
where $\alpha_i,\zeta_i,\tilde{\alpha}_i,\tilde{\zeta}_i$ are constants. $\hat{c}_i$ is unbiased for $c_i$ if and only if $\alpha_i = \frac{1-\pi_i}{\pi_i}$, $\tilde{\alpha}_i = \frac{\pi_i}{1-\pi_i}$, and $\pi_i\zeta_i = -(1-\pi_i)\tilde{\zeta}_i$.
\label{prop_altimpute}
\end{proposition}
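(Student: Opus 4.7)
The plan is to establish both directions by directly computing $\E_d(\hat c_i)$ as a linear function of the (fixed, but otherwise arbitrary) potential outcomes $Y_i(0),Y_i(1)$, and then matching coefficients with $c_i=(1-\pi_i)Y_i(1)+\pi_i Y_i(0)$.

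First, I would observe that $W_i$ is Bernoulli with mean $\pi_i$ and that $Y^{\text{obs}}_i=W_i Y_i(1)+(1-W_i)Y_i(0)$, so by conditioning on $W_i$,
\begin{align*}
\E_d(\hat c_i) &= \pi_i\bigl[\alpha_i Y_i(1)+\zeta_i\bigr]+(1-\pi_i)\bigl[\tilde\alpha_i Y_i(0)+\tilde\zeta_i\bigr] \\
&= \pi_i\alpha_i\, Y_i(1)+(1-\pi_i)\tilde\alpha_i\, Y_i(0)+\bigl[\pi_i\zeta_i+(1-\pi_i)\tilde\zeta_i\bigr].
\end{align*}
For the sufficiency direction, plugging in $\alpha_i=(1-\pi_i)/\pi_i$, $\tilde\alpha_i=\pi_i/(1-\pi_i)$, and $\pi_i\zeta_i+(1-\pi_i)\tilde\zeta_i=0$ reduces the right-hand side to $(1-\pi_i)Y_i(1)+\pi_i Y_i(0)=c_i$, which is the claimed unbiasedness.

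For the necessity direction, since the coefficients $\alpha_i,\tilde\alpha_i,\zeta_i,\tilde\zeta_i$ are fixed by the investigator and cannot depend on the unknown potential outcomes, the identity $\E_d(\hat c_i)=c_i$ must hold as a polynomial identity in $Y_i(0)$ and $Y_i(1)$. Matching coefficients of $Y_i(1)$ forces $\pi_i\alpha_i=1-\pi_i$; matching coefficients of $Y_i(0)$ forces $(1-\pi_i)\tilde\alpha_i=\pi_i$; and matching the constant term forces $\pi_i\zeta_i+(1-\pi_i)\tilde\zeta_i=0$. Positivity of $\pi_i$ and $1-\pi_i$ (Assumption \ref{assump_positivity}) allows us to solve uniquely for $\alpha_i$ and $\tilde\alpha_i$, yielding the stated conditions.

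There is no real obstacle here beyond formalizing the coefficient-matching step; the only subtlety worth flagging explicitly is the standard design-based interpretation that unbiasedness must hold for every admissible configuration of potential outcomes, since otherwise one could trivially achieve unbiasedness at a single $(Y_i(0),Y_i(1))$ using infinitely many choices of $(\alpha_i,\tilde\alpha_i,\zeta_i,\tilde\zeta_i)$. Under that interpretation, the three stated conditions are both necessary and sufficient, completing the proof.
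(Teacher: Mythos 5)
Your proof is correct and follows essentially the same route as the paper's: compute $\E_d(\hat c_i)$ by conditioning on $W_i$ and match coefficients of $Y_i(1)$, $Y_i(0)$, and the constant term against $c_i=(1-\pi_i)Y_i(1)+\pi_i Y_i(0)$. Your explicit remark that necessity requires the identity to hold for every configuration of potential outcomes is a worthwhile clarification that the paper leaves implicit, but it does not change the argument.
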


\begin{proof}
    Let $Y_i(0),Y_i(1), c_i$ be defined as before. To find an unbiased estimator of $c_i$, we consider the following class of linear estimators.
\begin{equation}
  \tilde{c}_i = 
    \begin{cases}
     \alpha_i Y^{\text{obs}}_i + \zeta_i & \text{if $W_i = 1$}\\
    \tilde{\alpha}_i Y^{\text{obs}}_i + \tilde{\zeta}_i & \text{if $W_i = 0$},
    \end{cases}       
\end{equation}
where are $\alpha_i,\zeta_i,\tilde{\alpha}_i,\tilde{\zeta}_i$ are constants to be determined. Since we want the estimator $\tilde{c}_i$ to be unbiased for $c_i$, we require
\begin{align}
    &\E(\tilde{c}_i) = c_i \nonumber \\
   &\iff \alpha_i\pi_iY_i(1) + (1-\pi_i)\tilde{\alpha}_iY_i(0) + \pi_i\zeta_i + (1-\pi_i)\tilde{\zeta}_i = (1-\pi_i)Y_i(1) + \pi_iY_i(0).
\end{align}
Comparing the coefficients on both sides, we get the following necessary conditions for unbiasedness.
\begin{enumerate}
    \item $\pi_i\zeta_i + (1-\pi_i)\tilde{\zeta}_i = 0$
    \item $\alpha_i = \frac{1-\pi_i}{\pi_i}$
    \item $\tilde{\alpha}_i = \frac{\pi_i}{1 - \pi_i}$.
\end{enumerate}
The estimator considered in Section \ref{sec_imputation_extension} sets $\alpha_i = \frac{1-\pi_i}{\pi_i}$, $\tilde{\alpha}_i = \frac{\pi_i}{1 - \pi_i}$, and $\tilde{\zeta}_i  = \pi_i\gamma_i$, and $\zeta_i = -(1-\pi_i)\gamma_i$.
\end{proof}

\subsection{Alternative jackknife imputation estimators}
\label{appsec_alternative}

An alternative Jackknife estimator is given by $\gamma_i = \hat{\tau}_{(-i)}$ where $\hat{\tau}_{(-i)}$ is the Horvitz-Thompson estimator, leaving out the $i$th unit, i.e., 
    \begin{equation}
  \hat{\tau}_{(-i)} = \frac{1}{N-1}\sum_{j \neq i}\frac{W_jY^{\text{obs}}_j}{\tilde{\pi}_{j}} - \frac{1}{N-1}\sum_{j \neq i}\frac{(1-W_j)Y^{\text{obs}}_j}{1-\tilde{\pi}_{j}},
    \label{eq_jack2}
\end{equation}
where,
\begin{equation}
     \tilde{\pi}_j =
    \begin{cases}
      \Pr_d(W_j = 1|W_i = 1) & \text{if $W_i = 1$}\\
    \Pr_d(W_j=1|W_i = 0) & \text{if $W_i = 0$}.
    \end{cases}  
\end{equation}
When $\pi = 0.5,$ this estimator is equivalent to the Jackknifed estimator $\hat{\theta}_{(-i)}$ in Section \ref{sec_jackknife}. Moreover, under a CRD, this estimator boils down to the Jackknifed difference-in-means statistic.

Proposition \ref{prop_jack_general} shows that, conditional on $W_i = w\in \{0,1\}$, the Jackknife estimator $\hat{\tau}_{(-i)}$ is unbiased for the average treatment effect in a population of $N-1$ units that excludes unit $i$. 
\begin{proposition}\normalfont
Let $d$ be an arbitrary design with $\pi_i \in (0,1)$. Consider a direct imputation estimator $\hat{c}_i$ with $\gamma_i = \hat{\tau}_{(-i)}$, where $\hat{\tau}_{(-i)}$ is as defined in Equation \ref{eq_jack2}. It follows that,
$$\E_d(\hat{\tau}_{(-i)}|W_i = 1) = \E_d(\hat{\tau}_{(-i)}|W_i = 0) = \frac{1}{N-1}\sum_{j \neq i}(Y_j(1) - Y_j(0)),$$
and hence $\E_d(\hat{c}_i) = c_i.$
    \label{prop_jack_general}    
\end{proposition}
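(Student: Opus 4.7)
The plan is to establish the two conditional expectation identities first, and then derive the unbiasedness of $\hat{c}_i$ from Proposition~\ref{prop_bias} (or directly by conditioning on $W_i$).

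For the conditional expectations, the key observation is that $\tilde{\pi}_j$ in Equation~\ref{eq_jack2} is defined precisely as the conditional treatment probability of unit $j$ given the realized value of $W_i$. So $\hat{\tau}_{(-i)}$, viewed conditionally on $\{W_i = w\}$, is a Horvitz--Thompson-type estimator whose inverse-probability weights match the conditional propensities. Concretely, I would condition on $W_i = 1$ and compute, for each $j \neq i$,
\begin{align*}
\E_d\!\left(\frac{W_j Y_j^{\text{obs}}}{\tilde{\pi}_j}\,\Big|\,W_i = 1\right)
&= \frac{Y_j(1)}{\Pr_d(W_j=1\mid W_i=1)}\,\Pr_d(W_j=1\mid W_i=1) = Y_j(1),\\
\E_d\!\left(\frac{(1-W_j) Y_j^{\text{obs}}}{1-\tilde{\pi}_j}\,\Big|\,W_i = 1\right)
&= \frac{Y_j(0)}{1-\Pr_d(W_j=1\mid W_i=1)}\,\Pr_d(W_j=0\mid W_i=1) = Y_j(0).
\end{align*}
Summing over $j \neq i$ and dividing by $N-1$ gives $\E_d(\hat{\tau}_{(-i)}\mid W_i = 1) = \frac{1}{N-1}\sum_{j\neq i}\{Y_j(1)-Y_j(0)\}$. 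The argument for $W_i = 0$ is identical after swapping the conditioning event, because $\tilde{\pi}_j$ is redefined to be $\Pr_d(W_j = 1 \mid W_i = 0)$ in that case. So both conditional expectations take the same value.

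Given the two identities, unbiasedness of $\hat{c}_i$ is immediate from Proposition~\ref{prop_bias}: the bias is $\pi_i(1-\pi_i)\{\E_d(\gamma_i \mid W_i = 0) - \E_d(\gamma_i \mid W_i = 1)\}$, and with $\gamma_i = \hat{\tau}_{(-i)}$ the two conditional expectations just shown to coincide make this bias equal to zero. Alternatively, one can do this by hand: write $\E_d(\hat{c}_i) = \pi_i \E_d(\hat{c}_i \mid W_i = 1) + (1-\pi_i)\E_d(\hat{c}_i \mid W_i = 0)$, plug in the two branches of Equation~\ref{eq_ci}, and verify that the $Y_i(1), Y_i(0)$ terms reconstruct $c_i = (1-\pi_i) Y_i(1) + \pi_i Y_i(0)$ while the $\gamma_i$ terms cancel since the conditional expectations agree.

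There is no substantial obstacle here; the whole content of the statement is that the weights $\tilde{\pi}_j$ are designed so that the conditional inverse-probability sum correctly unbiases $Y_j(1)$ and $Y_j(0)$ separately. The only thing one must be careful about is that Assumption~\ref{assump_positivity} together with $\pi_i \in (0,1)$ ensures $\tilde{\pi}_j \in (0,1)$ whenever the conditioning event has positive probability, so the weights are well-defined. Once that is noted, the proof reduces to two short conditional expectation calculations and an application of Proposition~\ref{prop_bias}.
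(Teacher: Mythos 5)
Your proof is correct and follows essentially the same route as the paper: the paper's proof is exactly the two-line conditional-expectation computation you carry out (the weights $\tilde{\pi}_j$ are by construction the conditional propensities given $W_i$, so the inverse-probability terms unbias $Y_j(1)$ and $Y_j(0)$ separately), followed by the implicit appeal to Proposition \ref{prop_bias}, which you make explicit. One small caveat: marginal positivity does not by itself guarantee $\tilde{\pi}_j \in (0,1)$ for non-measurable designs (e.g., $\Pr_d(W_j = 1 \mid W_i = 1)$ can be $0$), but this well-definedness issue is implicit in the paper's own statement and proof as well, so it is not a gap specific to your argument.
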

\begin{proof}
    
\begin{align}
\E_d(\hat{\tau}_{(-i)}|W_i = 1) &= \E_d \left\{\frac{1}{N-1}\sum_{j \neq i}\frac{W_jY_j(1)}{\Pr_d(W_j = 1|W_i=1)} - \frac{1}{N-1}\sum_{j \neq i}\frac{(1-W_j)Y_j(0)}{\Pr_d(W_j = 0|W_i=1)}  \Bigg|W_i = 1 \right\}   \nonumber\\
& = \frac{1}{N-1}\sum_{j \neq i}(Y_j(1) - Y_j(0)).
\end{align}
Similarly, we can show that
\begin{align}
\E_d(\hat{\tau}_{(-i)}|W_i = 0) = \frac{1}{N-1}\sum_{j \neq i}(Y_j(1) - Y_j(0)).   
\end{align}
This completes the proof.
\end{proof}
Thus, setting $\gamma_i$ as the Jackknifed version of the Horvitz-Thompson estimator, we can obtain a conservative variance estimator for an arbitrary design.

%%%%%%%%%%%%%%%%%%%%%%%%%%%
%%%%%%%%%%%%%%%%%%%%%%%%%%%

\subsection{Extensions of the contrast approach}
\label{sec_substitution_extension}
A crucial design requirement in the contrast approach discussed thus far is Assumption \ref{assump_equalsize}, which implies that the Horvitz-Thompson estimator is equivalent to the simple difference-in-means statistic. Essentially, the contrast approach is tailored towards unweighted (or self-weighted) statistics such as the difference-in-means statistic. However, the equivalence between the Horvitz-Thompson estimator and the difference-in-means statistic does not hold in general for designs where the group sizes are unequal and/or the propensity scores vary across units. 
%Also, for these designs, the difference-in-means statistic may be biased for $\tau$. 
Thus, applying the contrast approach to estimate $\Var_d(\hat{\tau})$ for such designs is not straightforward. 

Nevertheless, when the design has constant propensity scores, the difference-in-means statistic is algebraically the same as the Hajek estimator, defined as,
\begin{align}
    \hat{\tau}_{\text{Hajek}} = \frac{\sum_{i:W_i = 1}Y^{\text{obs}}_i/\pi_i}{\sum_{i:W_i = 1}1/{\pi_i}} - \frac{\sum_{i:W_i = 0}Y^{\text{obs}}_i/(1- \pi_i)}{\sum_{i:W_i = 0}1/(1 - \pi_i)}.
\end{align}
While the Hajek estimator is biased in finite samples, the bias typically tends to zero as the sample size grows. Thus, for large enough sample size, the MSE of $\hat{\tau}_{\text{Hajek}}$ is approximately the same as $\Var(\hat{\tau}_{\text{Hajek}})$. Now, the contrast approach can be used to estimate the MSE of the Hajek estimator for a class of designs, assuming treatment effect homogeneity. The primary requirement for the designs is that the propensity scores are constant across units. Following terminologies from sample surveys, we call such designs EPSEM (equal probability of selection method) designs \cite{kish1965survey}.
\begin{assumption}[EPSEM] \normalfont
    For design $d$, $\pi_i$ is constant across $i \in \{1,2,..,,N\}$. \label{assump_epsem}
\end{assumption}
Note that Assumption \ref{assump_epsem} relaxes Assumption \ref{assump_equalsize} by allowing the group sizes to be different as well as random. In the special case where $d$ has fixed (i.e., non-random) treatment and control group sizes, then $N_t(\bm{w}) = N_t$ and $N_c(\bm{w}) = N_c$. 

Before presenting the general formulation of the contrast approach for EPSEM designs, we illustrate the approach using a simpler example. Throughout, we assume that treatment effect homogeneity holds.
Under homogeneity, the MSE of the Hajek estimator can be written as,
\begin{align}
\text{MSE}(\hat{\tau}) & = \sum_{\bm{w} \in \mathcal{W}} p_{\bm{w}} \left(\frac{Y_{i_1}(0) + Y_{i_2}(0) + ... + Y_{i_{N_t(\bm{w})}}(0)}{N_t(\bm{w})} - \frac{Y_{j_1}(0) + Y_{j_2}(0) + ... + Y_{j_{N_c(\bm{w})}}(0)}{N_c(\bm{w})} \right)^2, \label{eq_mse1}
\end{align}
where, under assignment $\bm{w}$, units $\{i_1,...,i_{N_t(\bm{w})}\}$ are assigned to treatment and units $\{j_1,...,j_{N_c(\bm{w})}\}$ are assigned to control, with $\{i_1,...,i_{N_t(\bm{w})}\} \cup \{j_1,...,j_{N_c(\bm{w})}\} = \{1,2,...,N\}$. Note that, here the group sizes are allowed to vary with $\bm{w}$. 
Now, for illustration, consider an $\bm{w}\in \mathcal{W}$ puts unit $\{1,2,...,6\}$ in the treatment group and units $\{7,8,...,18\}$. The corresponding term in Equation \ref{eq_mse1} is $[\{Y_{1}(0) + Y_{2}(0) + ... + Y_{6}(0)\}/6 - \{Y_{8}(0) + Y_{9}(0) + ... + Y_{18}(0)\}/{12}]^2$.
To use the contrast approach on this contrast, we need to find an assignment vector $\tilde{w} \in \mathcal{W}$ that assigns two of the first 6 units to treatment and four of the last 12 units to control. Without loss of generality, suppose that one such assignment vector is $\tilde{\bm{w}} = (1,1,0,0,0,0,1,1,1,1,0,0,...,0)^\top$. Here, units $\{1,2,8,9,10,11\}$ are treated. Then, under homogeneity, we can write,
\begin{align}
  &  \left(\frac{Y_{1}(0) + Y_{2}(0) + Y_{3}(0) + ... + Y_{6}(0)}{6} - \frac{Y_{8}(0) + Y_{9}(0) + Y_{10}(0) + Y_{11}(0) + Y_{12}(0)+... + Y_{18}(0)}{12} \right)^2  \nonumber \\
  & = \left(\frac{Y_{1}(1) + Y_{2}(1) + Y_{3}(0) + ... + Y_{6}(0)}{6} - \frac{Y_{8}(1) + Y_{9}(1) + Y_{10}(1) + Y_{11}(1) + Y_{12}(0) + ... + Y_{18}(0)}{12} \right)^2,
\end{align}
where the right-hand side is unbiasedly estimable. Applying a similar technique to all $\bm{w} \in \mathcal{W}$, we get an estimator of $\text{MSE}(\hat{\tau}_{\text{Hajek}})$ that is unbiased under homogeneity. 

For a general EPSEM design $d$, Assumption \ref{assump_substitution2} presents the analog of the substitution condition.
\begin{assumption}[Substitution condition] \normalfont
Fix an EPSEM design $d$ with support $\mathcal{W}$. For $\bm{w} \in \mathcal{W}$, suppose units $\{i_1,...,i_{N_t(\bm{w})}\}$ are assigned to treatment and units $\{j_1,...,j_{N_c(\bm{w})}\}$ are assigned to control, where $\{i_1,...,i_{N_t(\bm{w})}\} \cup \{j_1,...,j_{N_c(\bm{w})}\} = \{1,...,N\}$. Also, let $k = N^2_t(\bm{w})/N$.
Then, there exists $\{i_{r_1},...i_{r_{k}}\} \subset \{i_1,...,i_{N_t(\bm{w})}\}$ and $\{j_{s_1},...j_{s_{N_t(\bm{w}) - k}}\} \subset \{j_1,...,j_{N_c(\bm{w})}\}$, such that under $d$, units $\{i_{r_1},...i_{r_{k}}\} \cup \{j_{s_1},...j_{s_{N_t(\bm{w}) - k}}\}$ are assigned to treatment with positive probability. \label{assump_substitution2}  
\end{assumption}
Note that, a necessary condition for Assumption \ref{assump_substitution2} to hold is that $N^2_t(\bm{w})/{N}$ is an integer. Indeed, when $N_t(\bm{w}) = N_c(\bm{w}) = N/2$, Assumption \ref{assump_substitution2} boils down to Assumption \ref{assump_substitution}, and in this case, $k = N/4$. Now, denote $a_i = Y_i(0)$, $b_i = Y_i(1)$, and $p_{\bm{w}} = \Pr_d(\bm{W} = \bm{w})$.
The Hajek estimator corresponding to $d$ can be written as,
\begin{align}
\hat{\tau}_{\text{Hajek}} & = \frac{\sum_{i=1}^{N}W_ib_i}{N_t(\bm{W})} - \frac{\sum_{i=1}^{N}(1-W_i)a_i}{N_c(\bm{W})}    
\end{align}
Let $\bar{a}$ and $\bar{b}$ be the means of $a_i$ and $b_i$ across the $N$ units. Now, the MSE of $\hat{\tau}_{\text{Hajek}}$ is given by,
\begin{align}
 \E_d(\hat{\tau}_{\text{Hajek}} - \tau)^2 & = \sum_{\bm{w}} p_{\bm{w}} \left\{\frac{b_{i_1} + ... + b_{i_{N_t(\bm{w})}}}{N_t(\bm{w})} - \frac{a_{j_1} + ... + a_{j_{N_c(\bm{w})}}}{N_c(\bm{w})} - \bar{b} + \bar{a}\right\}^2,  
\end{align}
where, for assignment vector $\bm{w}$, units $\{i_1,...,i_{N_t(\bm{w})}\}$ receive treatment and units $\{j_1,...,j_{N_c(\bm{w})}\}$ receive control, where $\{i_1,...,i_{N_t(\bm{w})}\}\cup\{j_1,...,j_{N_c(\bm{w})}\} = \{1,2,...,N\}$. Denote $c_i(\bm{w})  = \frac{N_c(\bm{w})b_i + N_t(\bm{w})a_i}{N}$. Rearranging terms, we can rewrite the MSE of $\hat{\tau}_{\text{Hajek}}$ as 
\begin{align}
    \E_d(\hat{\tau}_{\text{Hajek}} - \tau)^2 &= \sum_{\bm{w}} p_{\bm{w}} \left\{\frac{c_{i_1} + ... + c_{i_{N_t(\bm{w})}}}{N_t(\bm{w})} - \frac{c_{j_1} + ... + c_{j_{N_c(\bm{w})}}}{N_c(\bm{w})}\right\}^2.
\end{align}
Here, for simplicity, we have omitted the argument $\bm{w}$ in $c_i(\bm{w})$. Denote $\bm{a} = (a_1,...,a_N)^\top$.
When treatment effect homogeneity holds, i.e., when $b_i - a_i = \tau$, then it follows that, 
\begin{align}
    \E_d(\hat{\tau}_{\text{Hajek}} - \tau)^2 &= \sum_{\bm{w}} p_{\bm{w}} \left\{\frac{a_{i_1} + ... + a_{i_{N_t(\bm{w})}}}{N_t(\bm{w})} - \frac{a_{j_1} + ... + a_{j_{N_c(\bm{w})}}}{N_c(\bm{w})}\right\}^2 \nonumber \\
    & = \sum_{\bm{w}} p_{\bm{w}} \{\bm{l}^*(\bm{w})^\top\bm{a}\}^2,
\end{align}
where $\bm{l}^*(\bm{w}) = \frac{1}{N_t(\bm{w})}$ if $w_i = 1$ and $\bm{l}^*(\bm{w}) = -\frac{1}{N_c(\bm{w})}$ if $w_i = 0$. 

Now, let $k(\bm{w}) := N^2_t(\bm{w})/N$. If the substitution condition holds, then $k(\bm{w})$ is an integer and there exists $\{i_{r_1},...,i_{r_{k(\bm{w})}}\} \subset \{i_1,...,i_{N_t(\bm{w})}\}$ and $\{j_{s_1},...,j_{s_{N_t - k(\bm{w})}} \} \subset \{j_1,...,j_{N_c(\bm{w})}\}$ such that the units $\{i_1,...,i_{N_t(\bm{w})}\} \cup \{j_1,...,j_{N_c(\bm{w})}\}$ receive treatment with positive probability. Let the corresponding assignment vector be $\tilde{\bm{w}}$. Then, we can write,
\begin{align}
\sum_{\bm{w}} p_{\bm{w}} \{\bm{l}^*(\bm{w})^\top\bm{a}\}^2 &= \sum_{\bm{w}} p_{\bm{w}} \left\{\frac{a_{i_1} + ... + a_{i_{N_t(\bm{w})}}}{N_t(\bm{w})} - \frac{a_{j_1} + ... + a_{j_{N_c(\bm{w})}}}{N_c(\bm{w})}\right\}^2 \nonumber \\
& = \sum_{\bm{w}} p_{\bm{w}} \left\{\frac{\sum_{r \in \{r_1,...,r_{k(\bm{w})}\}}b_{i_r} + \sum_{r \notin \{r_1,...,r_{k(\bm{w})}\}}a_{i_r}}{N_t(\bm{w})} \right. \nonumber \\
& \hspace{2cm}- \left. \frac{\sum_{s \in \{s_1,...,s_{N_t(\bm{w})- k(\bm{w})}\}}b_{j_s} + \sum_{s \notin \{s_1,...,s_{N_t(\bm{w}) - k(\bm{w})}\}}a_{j_s}}{N_c(\bm{w})}   \right\}^2  \nonumber\\
& = \sum_{\bm{w}} p_{\bm{w}} \{\bm{l}^*(\bm{w})^\top \bm{y}(\tilde{\bm{w}})\}^2,
\label{eq_A_substitution_unequal1}
\end{align}
The right-hand side of \ref{eq_A_substitution_unequal1} is unbiasedly estimable. Thus, if treatment effect homogeneity holds, then we can get an unbiased estimator of $\E_d(\hat{\tau}_{\text{Hajek}} - \tau)^2$.

In addition, if $N_t(\bm{w}) = N_c(\bm{w})$, and the design is closed, i.e., $\bm{w} \in \mathcal{W} \iff \bm{1} - \bm{w} \in \mathcal{W}$, then following the proof of the symmetric case, we can write,
\begin{align}
   \sum_{\bm{w}} p_{\bm{w}} \{\bm{l}^*(\bm{w})^\top\bm{a}\}^2 & = \sum_{\bm{w}}p_{\bm{w}}\frac{1}{2}\left[\{\bm{l}^*(\bm{w})^\top \bm{y}(\tilde{\bm{w}})\}^2 + \{\bm{l}^*(\bm{w})^\top \bm{y}(\bm{1} - \tilde{\bm{w}})\}^2 \right]. \label{eq_A_substitution_unequal2} 
\end{align}
The right-hand side of Equation \ref{eq_A_substitution_unequal2} is unbiasedly estimable. Moreover, in general (without assuming homogeneity), by Jensen's inequality, 
\begin{align}
    \sum_{\bm{w}}p_{\bm{w}}\frac{1}{2}\left[\{\bm{l}^*(\bm{w})^\top \bm{y}(\tilde{\bm{w}})\}^2 + \{\bm{l}^*(\bm{w})^\top \bm{y}(\bm{1} - \tilde{\bm{w}})\}^2 \right] \geq \sum_{\bm{w}} p_{\bm{w}} \{\bm{l}^*(\bm{w})^\top\bm{c}\}^2 = \E(\hat{\tau}-\tau)^2.
\end{align}
This implies that when $N_t(\bm{w}) = N_c(\bm{w})$, and the design is closed, we can find an estimator of the MSE of the Hajek estimator that
is conservative in general, and unbiased under homogeneity.
\qed

%%%%%%%%%%%%%%%%%%%%%%%%%%%%%%%%%%%%%
%%%%%%%%%%%%%%%%%%%%%%%%%%%%%%%%%%%%%
\section{Proofs of Propositions and Theorems}

%%%%%%%%%%%%%%%%%%%%%%%%%%%%%%%%%%%
%%%%%%%%%%%%%%%%%%%%%%%%%%%%%%%%%%%
\subsection{Proof of Proposition \ref{prop_decomp2}}

Let us denote $a_i = Y_i(0)$, $b_i = Y_i(1)$, $\bm{a} = \bm{Y}(1)$, and $\bm{a} = \bm{Y}(0)$. We start by proving the following Lemma, which directly extends the Neymanian decomposition in Equation \ref{eq_neyman}.
\begin{lemma} \normalfont
Denote $p_{ii'}(w,w') = \Pr_d(W_i = w,W_{i'} = w)$, where $w,w'\in\{0,1\}$. For a design $d$ satisfying Assumption \ref{assump_positivity}, $\Var_d(\hat{\tau}) = \Tilde{V}_d - \sum_{i=1}^{N}(Y_i(1) -Y_i(0) - \tau)^2/\{N(N-1)\}$, where 
\begin{align*}
\Tilde{V}_d & = \frac{1}{N^2}\left(\sum_{i=1}^{N}\frac{Y^2_i(1)}{\pi_i} + \sum_{i=1}^{N} \frac{Y^2_i(0) }{(1-\pi_i)} \right. + 2\mathop{\sum\sum}_{i<i'}\left[Y_i(1)Y_{i'}(1)\left\{\frac{p_{ii'}(1,1)}{\pi_i \pi_{i'}} - \frac{N}{N-1}\right\} \right. \nonumber\\
& \hspace{3.2in} \left. + Y_i(0)Y_{i'}(0)\left\{\frac{p_{ii'}(0,0)}{(1-\pi_i) (1-\pi_{i'})} - \frac{N}{N-1} \right\} \right] \nonumber \\
& \quad - \left. 2\mathop{\sum\sum}_{i<i'}\left[Y_i(1)Y_{i'}(0)\left\{\frac{p_{ii'}(1,0)}{\pi_i (1- \pi_{i'})} - \frac{N}{N-1}\right\}  + Y_i(0)Y_{i'}(1)\left\{\frac{p_{ii'}(0,1)}{(1-\pi_i) \pi_{i'}}   - \frac{N}{N-1} \right\} \right]\right).
\end{align*}
\label{prop_measurable}
\end{lemma}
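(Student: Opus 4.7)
The plan is to prove the identity by direct computation of the variance of the Horvitz–Thompson estimator, then reorganize the resulting expression to match the form of $\Tilde{V}_d$ up to the stated correction term.

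First, I would rewrite $\hat\tau = \frac{1}{N}\sum_{i=1}^N U_i$ where $U_i = \frac{W_i Y_i(1)}{\pi_i} - \frac{(1-W_i)Y_i(0)}{1-\pi_i}$. Under Assumption \ref{assump_positivity}, $E_d(U_i) = Y_i(1) - Y_i(0) =: \Delta_i$, so $E_d(\hat\tau) = \tau$. Next, compute $\Var_d(\hat\tau) = \frac{1}{N^2}\sum_{i,j} E_d(U_i U_j) - \tau^2$. The diagonal terms give $E_d(U_i^2) = \frac{Y_i(1)^2}{\pi_i} + \frac{Y_i(0)^2}{1-\pi_i}$, which matches the first group of terms in $\Tilde{V}_d$. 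The off-diagonal terms ($i\neq j$) expand using the pairwise probabilities $p_{ij}(w,w')$ into the four types of cross-products appearing in $\Tilde{V}_d$, but without the $\frac{N}{N-1}$ offsets.

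The core algebraic step is to introduce those offsets. I would add and subtract $\frac{N}{N-1}\cdot Y_i(w)Y_j(w')$ inside each off-diagonal summand so that the bracketed coefficients become $\frac{p_{ij}(w,w')}{\pi_i^w(1-\pi_i)^{1-w}\pi_j^{w'}(1-\pi_j)^{1-w'}} - \frac{N}{N-1}$. The introduced correction consolidates as $\frac{1}{N(N-1)}\sum_{i\neq j}[Y_i(1)Y_j(1) + Y_i(0)Y_j(0) - Y_i(1)Y_j(0) - Y_i(0)Y_j(1)] = \frac{1}{N(N-1)}\sum_{i\neq j}\Delta_i\Delta_j$. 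Using $\sum_{i\neq j}\Delta_i\Delta_j = (\sum_i\Delta_i)^2 - \sum_i\Delta_i^2 = N^2\tau^2 - \sum_i\Delta_i^2$, this correction becomes $\frac{N\tau^2}{N-1} - \frac{1}{N(N-1)}\sum_i\Delta_i^2$.

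Finally, combining with the $-\tau^2$ from $\Var_d(\hat\tau) = E_d(\hat\tau^2) - \tau^2$ yields $-\tau^2 - \frac{N\tau^2}{N-1} + \frac{1}{N(N-1)}\sum_i\Delta_i^2 \cdot(-1)\cdot(-1)$; careful bookkeeping shows the net residual is $-\frac{1}{N(N-1)}\sum_i(\Delta_i-\tau)^2$, since $\sum_i(\Delta_i-\tau)^2 = \sum_i\Delta_i^2 - N\tau^2$. This is exactly the claimed correction. The main obstacle is purely the careful algebraic bookkeeping: tracking signs across the four pairwise probability types and correctly collapsing the $\sum_{i\neq j}\Delta_i\Delta_j$ identity together with $-\tau^2$ into the centered heterogeneity term $\sum_i(\Delta_i-\tau)^2/[N(N-1)]$. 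As a sanity check, I would specialize to a CRD with group sizes $(N_t, N_c)$, verifying that $\frac{p_{ij}(1,1)}{\pi_i\pi_j} - \frac{N}{N-1} = -\frac{N}{N_t(N-1)}$ and analogous identities, and recovering the classical decomposition $S_1^2/N_t + S_0^2/N_c - S_{10}^2/N$.
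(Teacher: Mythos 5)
Your proof is correct and follows essentially the same route as the paper's: a direct expansion of $\Var_d(\hat{\tau})$ for the Horvitz--Thompson estimator, then adding and subtracting the $N/(N-1)$ offsets in the off-diagonal terms and collapsing the resulting correction together with $-\tau^2$ into $-\sum_{i}\{Y_i(1)-Y_i(0)-\tau\}^2/\{N(N-1)\}$. The only cosmetic difference is that the paper computes $\Var_d\bigl(\tfrac{1}{N}\sum_i W_i\{Y_i(1)/\pi_i + Y_i(0)/(1-\pi_i)\}\bigr)$, which produces the unobservable same-unit product $2Y_i(1)Y_i(0)$ on the diagonal and must eliminate it via a separate algebraic identity, whereas your direct evaluation of $\E(U_i^2)$ (using $W_i(1-W_i)=0$) never generates that term.
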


\begin{proof}
Let us denote $a_i = Y_i(0)$ and $b_i = Y_i(1)$. We have 
\begin{align}
\sum_{i=1}^{N}(a_i - b_i - \tau)^2 &= \sum_{i=1}^{N}(a_i - b_i)^2 - N(\bar{b} - \bar{a})^2 \nonumber\\
& = \frac{N-1}{N}\sum_{i=1}^{N}(b^2_i + a^2_i -2a_ib_i) - \frac{2}{N}\mathop{\sum\sum}_{i<i'}(b_ib_{i'} + a_ia_{i'} - b_ia_{i'}- a_ib_{i'}).
\end{align}
Thus, 
\begin{align}
    \sum_{i=1}^{N}2a_ib_i & = \sum_{i=1}^{N}(b^2_i + a^2_i) - \frac{2}{N-1}\mathop{\sum\sum}_{i<i'}(b_ib_{i'} + a_ia_{i'} - b_ia_{i'}- a_ib_{i'}).
    \label{eq_A_2ab}
\end{align}
For an arbitrary design $d$, the variance of the Horvitz-Thompson estimator of the average treatment effect $\tau$ is given by,
\begin{align}
   & \Var_d(\hat{\tau}) \nonumber\\
   & = \Var_d\left\{\frac{1}{N}\sum_{i=1}^{N}W_i\left(\frac{b_i}{\pi_i} + \frac{a_i}{1-\pi_i}\right) \right\} \nonumber \\
    & = \frac{1}{N^2}\left\{\sum_{i=1}^{N}\pi_i(1-\pi_i)\left(\frac{b_i}{\pi_i} + \frac{a_i}{1 - \pi_i} \right)^2 + 2\mathop{\sum\sum}_{i<i'}\left(\frac{b_i}{\pi_i} + \frac{a_i}{1-\pi_i}\right) \left(\frac{b_{i'}}{\pi_{i'}} + \frac{a_{i'}}{1-\pi_{i'}}\right)\Cov_d(W_i,W_{i'})  \right\} \nonumber\\
    & = \frac{1}{N^2} \Big\{\sum_{i=1}^{N} b^2_i \left(\frac{1-\pi_i}{\pi_i} \right) + \sum_{i=1}^{N} a^2_i \left(\frac{\pi_i}{1-\pi_i} \right) + \sum_{i=1}^{N}2a_ib_i \nonumber\\
    & + 2\mathop{\sum\sum}_{i<i'}\Cov_d(W_i,W_{i'})\left(\frac{b_ib_{i'}}{\pi_i \pi_{i'}} + \frac{a_ia_{i'}}{(1-\pi_i) (1-\pi_{i'})}  + \frac{b_ia_{i'}}{\pi_i (1-\pi_{i'})} + \frac{a_ib_{i'}}{(1-\pi_i) \pi_{i'}} \right) \Big\} \nonumber \\
    & = \frac{1}{N^2}\left(\sum_{i=1}^{N} b^2_i \frac{1}{\pi_i} + \sum_{i=1}^{N} a^2_i \frac{1}{(1-\pi_i)} \right.\nonumber \\
& + 2\mathop{\sum\sum}_{i<i'}\left[b_ib_{i'}\left\{\frac{\Cov_d(W_i,W_{i'})}{\pi_i \pi_{i'}} - \frac{1}{N-1}\right\} + a_ia_{i'}\left\{\frac{\Cov_d(W_i,W_{i'})}{(1-\pi_i) (1-\pi_{i'})} - \frac{1}{N-1} \right\} \right] \nonumber \\
& + \left. 2\mathop{\sum\sum}_{i<i'}\left[b_ia_{i'}\left\{\frac{\Cov_d(W_i,W_{i'})}{\pi_i (1- \pi_{i'})} + \frac{1}{N-1}\right\} + a_ib_{i'}\left\{\frac{\Cov_d(W_i,W_{i'})}{(1-\pi_i) \pi_{i'}} + \frac{1}{N-1} \right\} \right]\right) \nonumber\\
& - \frac{1}{N(N-1)}\sum_{i=1}^{N}(b_i -a_i - \tau)^2 \nonumber \\
& = \frac{1}{N^2}\left(\sum_{i=1}^{N} b^2_i \frac{1}{\pi_i} + \sum_{i=1}^{N} a^2_i \frac{1}{(1-\pi_i)} \right.\nonumber \\
& + 2\mathop{\sum\sum}_{i<i'}\left[b_ib_{i'}\left\{\frac{\Pr_d(W_i = 1,W_{i'} = 1)}{\pi_i \pi_{i'}} - \frac{N}{N-1}\right\} + a_ia_{i'}\left\{\frac{\Pr_d(W_i = 0,W_{i'} = 0)}{(1-\pi_i) (1-\pi_{i'})} - \frac{N}{N-1} \right\} \right] \nonumber \\
& - \left. 2\mathop{\sum\sum}_{i<i'}\left[b_ia_{i'}\left\{\frac{\Pr_d(W_i = 1,W_{i'} = 0)}{\pi_i (1- \pi_{i'})} - \frac{N}{N-1}\right\} + a_ib_{i'}\left\{\frac{\Pr_d(W_i = 0,W_{i'} = 1)}{(1-\pi_i) \pi_{i'}} - \frac{N}{N-1} \right\} \right]\right) \nonumber\\
& - \frac{1}{N(N-1)}\sum_{i=1}^{N}(b_i -a_i - \tau)^2 \nonumber \\
& = \Tilde{V}_d - \frac{1}{N(N-1)}\sum_{i=1}^{N}(b_i -a_i - \tau)^2,
\end{align}
where the penultimate equality holds due to Equation \ref{eq_A_2ab}. This completes the proof. 
\end{proof}
Thus, by Lemma \ref{prop_measurable}, we can express the variance as
\begin{align}
\Var_d(\hat{\tau}) & = \tilde{V}_d - (\bm{b} - \bm{a})^\top \frac{1}{N(N-1)}(\bm{I} - \frac{1}{N}\bm{J}) (\bm{b} - \bm{a}) \nonumber\\
& = \left[\tilde{V}_d + (\bm{b} - \bm{a})^\top \left\{\bm{Q} - \frac{1}{N(N-1)}\left(\bm{I} - \frac{1}{N}\bm{J}\right)\right\}(\bm{b} - \bm{a})  \right] - (\bm{b} - \bm{a})^\top \bm{Q}(\bm{b} - \bm{a}) \nonumber \\
& = \left[\tilde{V}_d + (\bm{b} - \bm{a})^\top \bm{G}(\bm{b} - \bm{a})  \right] - (\bm{b} - \bm{a})^\top \bm{Q}(\bm{b} - \bm{a}),
\end{align}
where $\bm{G} = \bm{Q} - \frac{1}{N(N-1)}\left(\bm{I} - \frac{1}{N}\bm{J}\right)$. Let $g_{ii'}$ be the $(i,i')$th element of $\bm{G}$. 
Now, 
\begin{align}
& \tilde{V}_d + (\bm{b} - \bm{a})^\top \bm{G}(\bm{b} - \bm{a}) \nonumber\\
& = \tilde{V}_d + \sum_{i=1}^Ng_{ii}(b_i-a_i)^2 +  2\mathop{\sum\sum}_{i<i'}g_{ii'}(b_i - a_i)(b_{i'} - a_{i'}) \nonumber \\
& = \tilde{V}_d +  2\mathop{\sum\sum}_{i<i'}g_{ii'}(b_ib_{i'} + a_ia_{i'} - b_ia_{i'} - a_{i}b_{i'}) \nonumber \\
& = \frac{1}{N^2}\left\{\sum_{i=1}^{N} b^2_i \frac{1}{\pi_i} + \sum_{i=1}^{N} a^2_i \frac{1}{(1-\pi_i)} \right\} + 2\mathop{\sum\sum}_{i<i'}\left[b_ib_{i'}\left\{\frac{\Pr_d(W_i = 1,W_{i'} = 1)}{N^2\pi_i \pi_{i'}} +g_{ii'} - \frac{1}{N(N-1)}\right\} \right. \nonumber\\
& \hspace{2cm} \left. + a_ia_{i'}\left\{\frac{\Pr_d(W_i = 0,W_{i'} = 0)}{N^2(1-\pi_i) (1-\pi_{i'})} +g_{ii'}- \frac{1}{N(N-1)} \right\} \right] \nonumber \\
& \quad - 2\mathop{\sum\sum}_{i<i'}\left[b_ia_{i'}\left\{\frac{\Pr_d(W_i = 1,W_{i'} = 0)}{N^2\pi_i (1- \pi_{i'})} +g_{ii'}- \frac{1}{N(N-1)}\right\} \right. \nonumber\\
& \hspace{2cm} \left. + a_ib_{i'}\left\{\frac{\Pr_d(W_i = 0,W_{i'} = 1)}{N^2(1-\pi_i) \pi_{i'}} +g_{ii'} - \frac{1}{N(N-1)} \right\} \right], \label{eq_A_vd}
\end{align}
where the last equality holds since $g_{ii} = q_{ii} - \frac{1}{N^2} = 0$ by construction of $\bm{Q}$. Moreover, for $i\neq i'$, $g_{ii'} = q_{ii'} + \frac{1}{N^2(N-1)}$. The proof follows after substituting the expression of $g_{ii'}$ in Equation \ref{eq_A_vd}.
\qed

%%%%%%%%%%%%%%%%%%%%%%%%%%%%%%
%%%%%%%%%%%%%%%%%%%%%%%%%%%%%%
\subsection{Proof of Theorem \ref{thm_substitution}}

Denote $a_i = Y_i(0)$ and $b_i = Y_i(1)$. Let $N$ be a multiple of four, and consider a symmetric design $d$ that assigns the $N$ units into two groups of equal size. Denote $p_{\bm{w}} = \Pr_d(\bm{W} = \bm{w})$. 
 
Let $\mathcal{G}(\bm{w})$ be a set of substitutes of $\tilde{\bm{w}}$ that is closed under label switching, i.e., if $\tilde{\bm{w}} \in \mathcal{G}(\bm{w})$, $\bm{1} - \tilde{\bm{w}} \in \mathcal{G}(\bm{w})$. In this case we can split $\mathcal{G}(\bm{w})$ into $\mathcal{G}_1(\bm{w})$ and $\mathcal{G}_2(\bm{w})$ such that $\tilde{\bm{w}} \in \mathcal{G}_1(\bm{w}) \iff \bm{1}-\tilde{\bm{w}} \in \mathcal{G}_0(\bm{w})$. It follows that, $|\mathcal{G}_1(\bm{w})| = |\mathcal{G}_0(\bm{w})| = |\mathcal{G}(\bm{w})|/2$.

The general form of the variance estimator under the contrast approach is given by
    \begin{align}
        \hat{V}_{\text{sub}} &= \frac{4}{N^2}\sum_{\bm{w}:\bm{W} \in \mathcal{G}(\bm{w})} \frac{p_{\bm{w}}}{p_{\bm{W}}} \frac{1}{|\mathcal{G}(\bm{w})|}\{\bm{l}(\bm{w})^\top \bm{Y}^{\text{obs}}\}^2 \nonumber \\
        &= \frac{4}{N^2}\sum_{\bm{w}} \sum_{\tilde{\bm{w}} \in \mathcal{G}(\bm{w})} \mathbbm{1}(\bm{W} = \tilde{\bm{w}})\frac{p_{\bm{w}}}{p_{\tilde{\bm{w}}}} \frac{1}{|\mathcal{G}(\bm{w})|}\{\bm{l}(\bm{w})^\top \bm{Y}^{\text{obs}}\}^2
    \end{align}
Therefore, 
\begin{align}
    \E_d(\hat{V}_{\text{sub}}) &= \frac{4}{N^2}\sum_{\bm{w}} \sum_{\tilde{\bm{w}} \in \mathcal{G}(\bm{w})} \frac{p_{\bm{w}}}{|\mathcal{G}(\bm{w})|}\{\bm{l}(\bm{w})^\top \bm{y}(\tilde{\bm{w}})\}^2 \nonumber \\
    & = \frac{4}{N^2}\sum_{\bm{w}}\frac{p_{\bm{w}}}{|\mathcal{G}_1(\bm{w})|} \sum_{\tilde{\bm{w}} \in \mathcal{G}_1(\bm{w})} \frac{1}{2}\left[\{\bm{l}(\bm{w})^\top \bm{y}(\tilde{\bm{w}})\}^2 + \{\bm{l}(\bm{w})^\top \bm{y}(\bm{1} - \tilde{\bm{w}})\}^2\right].
    \label{eq_A_evhat}
\end{align}
Now, for a given $\bm{w}$, let $i_1,...,i_{N/2}$ be the treated units and $j_1,...,j_{N/2}$ be the control units, where $\{i_1,...i_{N/2}\} \cup \{j_1,...,j_{N/2}\} = \{1,2,...,N\}$. Now, denoting $c_i = \frac{a_i + b_i}{2}$, we have 
    \begin{align}
        \Var_d(\hat{\tau}) &= \sum_{\bm{w}}p_{\bm{w}} \left(\frac{b_{i_1}+ ...+ b_{i_{N/2}}}{N/2} - \frac{a_{i_1}+ ...+ a_{i_{N/2}}}{N/2} - \tau \right)^2 \nonumber \\
        &= \frac{4}{N^2}\sum_{\bm{w}}p_{\bm{w}} \left\{(c_{i_1}+ ...+ c_{i_{N/2}}) - (c_{j_1}+ ...+ c_{j_{N/2}})\right\}^2 \label{eq_A_sub1}\\
        & = \frac{4}{N^2}\sum_{\bm{w}}p_{\bm{w}} \{\bm{l}(\bm{w})^\top \bm{c}\}^2, \label{eq_A_sub2}
    \end{align}
where $\bm{l}(\bm{w}) = (l_1(\bm{w}),...,l_N(\bm{w}))^\top$, with $l_i(\bm{w}) = 1$ if $w_i = 1$, and $l_i(\bm{w}) = -1$ if $w_i = 0$. Also, let $\bm{y}(\bm{w})$ be the vector of observed outcomes, had the observed assignment vector been $\bm{w}$. So, $\bm{Y}^{\text{obs}} = \bm{y}(\bm{W})$.

Now, for the given $\bm{w}$, fix a $\tilde{\bm{w}} \in \mathcal{G}_1(\bm{w})$. By the substitution condition, $p_{\tilde{\bm{w}}}>0$ and by symmetry of the design, $p_{\bm{1} - \tilde{\bm{w}}}>0$. Moreover, let
$\{i_{r_1},...i_{r_{N/4}}\} \subset \{i_1,...,i_{N/2}\}$ and $\{j_{s_1},...j_{s_{N/4}}\} \subset \{j_1,...,j_{N/2}\}$, such that $\tilde{w}_{i} = 1$ if $i \in \{i_{r_1},...i_{r_{N/4}}\} \cup \{j_{s_1},...j_{s_{N/4}}\}$ and $\tilde{w}_i = 0$ otherwise.

%For an assignment vector $\bm{w}$, we call another assignment vector $\tilde{\bm{w}}$ its \textit{substitute}, if, under effect homogeneity (i.e., $b_i - a_i = \tau$), we can write $\bm{l}(\bm{w})^\top \bm{a} = \bm{l}(\bm{w})^\top \bm{y}(\tilde{\bm{w}})$.

Now, under treatment effect homogeneity, $c_i = a_i + \tau/2$. Therefore, from Equation \ref{eq_A_sub1}, we get,
\begin{align}
    \Var_d(\hat{\tau}) & = \frac{4}{N^2}\sum_{\bm{w}}p_{\bm{w}} \left\{(a_{i_1}+ ...+ a_{i_{N/2}}) - (a_{j_1}+ ...+ a_{j_{N/2}})\right\}^2 \nonumber \\
    & = \frac{4}{N^2}\sum_{\bm{w}}p_{\bm{w}} \{\bm{l}(\bm{w})^\top \bm{a}\}^2.
    \label{eq_A_homo1}
\end{align}
Following the approach in Section \ref{sec_motivating_substitution}, under homogeneity, we can write 
\begin{align}
&    (a_{i_1}+ ...+ a_{i_{N/2}}) - (a_{j_1}+ ...+ a_{j_{N/2}}) \nonumber \\
& = \left(\sum_{r \in \{r_1,r_2,...,r_{N/4}\}}a_{i_r} + \sum_{r \notin \{r_1,r_2,...,r_{N/4}\}}a_{i_r}\right) - \left(\sum_{s \in \{s_1,s_2,...,s_{N/4}\}}a_{j_s} + \sum_{s \notin \{s_1,s_2,...,s_{N/4}\}}a_{j_s}\right) \nonumber \\
&= \left\{\sum_{r \in \{r_1,r_2,...,r_{N/4}\}}(a_{i_r} + \tau) + \sum_{r \notin \{r_1,r_2,...,r_{N/4}\}}a_{i_r}\right\} - \left\{\sum_{s \in \{s_1,s_2,...,s_{N/4}\}}(a_{j_s} + \tau) + \sum_{s \notin \{s_1,s_2,...,s_{N/4}\}}a_{j_s}\right\} \nonumber \\
&= \left(\sum_{r \in \{r_1,r_2,...,r_{N/4}\}}b_{i_r} + \sum_{r \notin \{r_1,r_2,...,r_{N/4}\}}a_{i_r}\right) - \left(\sum_{s \in \{s_1,s_2,...,s_{N/4}\}}b_{j_s} + \sum_{s \notin \{s_1,s_2,...,s_{N/4}\}}a_{j_s}\right) \nonumber \\
&= \bm{l}(\bm{w})^\top\bm{y}(\tilde{\bm{w}}).
\end{align}
Similarly, we get, $(a_{i_1}+ ...+ a_{i_{N/2}}) - (a_{j_1}+ ...+ a_{j_{N/2}}) =  \bm{l}(\bm{w})^\top\bm{y}(\bm{1} - \tilde{\bm{w}})$. Thus, $\{\bm{l}(\bm{w})^\top\bm{a}\}^2 = \frac{1}{2}[\{\bm{l}(\bm{w})^\top\bm{y}(\tilde{\bm{w}})\}^2 + \{\bm{l}(\bm{w})^\top\bm{y}(\bm{1} - \tilde{\bm{w}})\}^2]$. Since this holds for every $\tilde{\bm{w}} \in \mathcal{G}_1(\bm{w})$, we get 
\begin{align}
    \frac{4}{N^2}\sum_{\bm{w}}\frac{p_{\bm{w}}}{|\mathcal{G}_1(\bm{w})|} \sum_{\tilde{\bm{w}} \in \mathcal{G}_1(\bm{w})} \frac{1}{2}\left[\{\bm{l}(\bm{w})^\top \bm{y}(\tilde{\bm{w}})\}^2 + \{\bm{l}(\bm{w})^\top \bm{y}(\bm{1}-\tilde{\bm{w}})\}^2\right] = \frac{4}{N^2}\sum_{\bm{w}}p_{\bm{w}} \{\bm{l}(\bm{w})^\top \bm{a}\}^2. \label{eq_A_homo2}
\end{align}
Equation \ref{eq_A_homo2}, combined with equations \ref{eq_A_evhat} and \ref{eq_A_homo1} implies that under treatment effect homogeneity, $\hat{V}_{\text{sub}}$ is unbiased for $\Var(\hat{\tau})$.

Next, we show that $\hat{V}_{\text{sub}}$ is conservative in general. To this end, we note that, for any $\tilde{\bm{w}}$, $\frac{1}{2}\{\bm{y}(\tilde{\bm{w}}) + \bm{y}(\bm{1}-\tilde{\bm{w}})\} = \frac{\bm{a} + \bm{b}}{2} = \bm{c}$. Therefore, by Jensen's inequality,
\begin{align}
    \E_d(\hat{V}_{\text{sub}}) &=   \frac{4}{N^2}\sum_{\bm{w}}\frac{p_{\bm{w}}}{|\mathcal{G}_1(\bm{w})|} \sum_{\tilde{\bm{w}} \in \mathcal{G}_1(\bm{w})} \frac{1}{2}\left[\{\bm{l}(\bm{w})^\top \bm{y}(\tilde{\bm{w}})\}^2 + \{\bm{l}(\bm{w})^\top \bm{y}(\bm{1}-\tilde{\bm{w}})\}^2\right] \nonumber \\
    & \geq \frac{4}{N^2}\sum_{\bm{w}}\frac{p_{\bm{w}}}{|\mathcal{G}_1(\bm{w})|} \sum_{\tilde{\bm{w}} \in \mathcal{G}_1(\bm{w})} \{\bm{l}(\bm{w})^\top \bm{c}\}^2 \nonumber\\
    & = \frac{4}{N^2} \sum_{\bm{w}} p_{\bm{w}} \{\bm{l}(\bm{w})^\top \bm{c}\}^2 = \Var_d(\hat{\tau}),
\end{align}
where the last equality holds from Equation \ref{eq_A_sub2}. This completes the proof.
\qed

%%%%%%%%%%%%%%%%%%%%%%%%%%%%%%%%
%%%%%%%%%%%%%%%%%%%%%%%%%%%%%%%%
\subsection{Proof of Theorem \ref{thm_substitute_neyman}}

For a completely randomized design (CRD) with equal group size, Neyman's variance estimator is given by,
\begin{align}
    \hat{V}_{\text{Neyman}} & = \frac{1}{\frac{N}{2}(\frac{N}{2}-1)}\left\{\sum_{i:W_i = 1}(Y^{\text{obs}}_i - \bar{Y}_t)^2+ \sum_{i:W_i = 0}(Y^{\text{obs}}_i - \bar{Y}_c)^2 \right\} \nonumber\\
    & = \frac{4}{N^2(N-2)}\left\{\mathop{\sum\sum}_{i,j:W_i,W_j = 1}(Y^{\text{obs}}_i - Y^{\text{obs}}_j)^2 + \mathop{\sum\sum}_{i,j:W_i,W_j = 0}(Y^{\text{obs}}_i - Y^{\text{obs}}_j)^2 \right\} \nonumber \\
    & = \frac{4}{N^2(N-2)}\left\{\mathop{\sum\sum}_{i\neq j:W_i,W_j = 1}(Y^{\text{obs}}_i - Y^{\text{obs}}_j)^2 + \mathop{\sum\sum}_{i\neq j:W_i,W_j = 0}(Y^{\text{obs}}_i - Y^{\text{obs}}_j)^2 \right\} \nonumber\\
    & = \frac{4}{N^2}\sum_{i=1}^{N}(Y^{\text{obs}}_i)^2 - \frac{8}{N^2(N-2)}\left(\mathop{\sum\sum}_{i \neq j: W_i,W_j = 1}Y^{\text{obs}}_i Y^{\text{obs}}_j + \mathop{\sum\sum}_{i \neq j: W_i,W_j = 0}Y^{\text{obs}}_i Y^{\text{obs}}_j \right). \label{eq_A_crd}
\end{align}
Now, consider the variance estimator under the contrast approach with $\mathcal{G}(\bm{w}) = \mathcal{G}^*(\bm{w})$. For a CRD, $|\mathcal{G}^*(\bm{w})| = \binom{N/2}{N/4}\binom{N/2}{N/4}$. The resulting estimator is given by,
\begin{align}
    \hat{V}_{\text{sub}} & = \frac{4}{N^2}\sum_{\bm{w}:\bm{W} \in \mathcal{G}^*(\bm{w})} \frac{p_{\bm{w}}}{p_{\bm{W}}} \frac{1}{|\mathcal{G}^*(\bm{w})|}\{\bm{l}(\bm{w})^\top \bm{Y}^{\text{obs}}\}^2 \nonumber \\
    &= \frac{4}{N^2} \sum_{\bm{w} \in \mathcal{G}^*(\bm{W})}\frac{p_{\bm{w}}}{p_{\bm{W}}} \frac{1}{|\mathcal{G}^*(\bm{w})|}\{\bm{l}(\bm{w})^\top \bm{Y}^{\text{obs}}\}^2 \nonumber \\
    & = \frac{4}{N^2} \frac{1}{\binom{N/2}{N/4}^2} \sum_{\bm{w} \in \mathcal{G}^*(\bm{W})} \{\bm{l}(\bm{w})^\top \bm{Y}^{\text{obs}}\}^2 \nonumber \\
    & = \frac{4}{N^2} \frac{1}{\binom{N/2}{N/4}^2} \sum_{\bm{w} \in \mathcal{G}^*(\bm{W})}\left\{(Y^{\text{obs}}_{i_1} +...+Y^{\text{obs}}_{i_{N/2}}) - (Y^{\text{obs}}_{j_1} +...+Y^{\text{obs}}_{j_{N/2}})\right\}^2 \nonumber \\
    & = \frac{4}{N^2} \frac{1}{\binom{N/2}{N/4}^2} \sum_{\bm{w} \in \mathcal{G}^*(\bm{W})} \left\{\sum_{i=1}^{N}(Y^{\text{obs}}_i)^2 + \mathop{\sum\sum}_{i\neq j: w_i=w_j}Y^{\text{obs}}_iY^{\text{obs}}_j - \mathop{\sum\sum}_{i\neq j: w_i\neq w_j}Y^{\text{obs}}_iY^{\text{obs}}_j \right\}
    \label{eq_A_vhat_crd}
\end{align}
where for a generic assignment vector $\bm{w}$, units $\{i_1,...,i_{N/2}\}$ are assigned to treatment, and units $\{j_1,...,j_{N/2}\}$ are assigned to control.
The right-hand side of Equation \ref{eq_A_vhat_crd} is a quadratic form in the observed outcomes. To show that $\hat{V}_{\text{sub}}$ is algebraically identical to $\hat{V}_{\text{Neyman}}$, we show that the coefficients of the two quadratic forms are equal. To this end, we assume without loss of generality, that the observed vector of treatment assignment $\bm{W}$ puts units $1,2,...,N/2$ in the treatment group and rest in the control group. 

First, we consider the coefficient of $(Y^{\text{obs}}_1)^2$ in Equation \ref{eq_A_vhat_crd}, which equals 
\begin{align}
 \frac{4}{N^2} \frac{1}{\binom{N/2}{N/4}^2} \times {\binom{N/2}{N/4}^2}   = \frac{4}{N^2},
\end{align}
which is same as the coefficient of $(Y^\text{obs}_1)^2$ in Equation \ref{eq_A_crd}. By symmetry, the coefficients for $(Y^\text{obs}_i)^2$ are the same in both quadratic forms, for all $i \in \{1,2,...,N\}$.

Next, we consider the coefficient of $Y^{\text{obs}}_1 Y^{\text{obs}}_2$ in Equation \ref{eq_A_vhat_crd}, which equals
\begin{align}
&    \frac{4}{N^2}\frac{1}{\binom{N/2}{N/4}^2} \times\left\{\binom{N/2 - 2}{N/4}\binom{N/2}{N/4} + \binom{N/2 - 2}{N/4 - 2}\binom{N/2}{N/4} - 2\binom{N/2 - 2}{N/4 - 1}\binom{N/2}{N/4} \right\} \nonumber \\
& = \frac{4}{N^2} \times \frac{(-2)}{N-2} = -\frac{8}{N^2(N-2)},
\end{align}
which is same as the coefficient of $Y^{\text{obs}}_1 Y^{\text{obs}}_2$ in Equation \ref{eq_A_crd}. By symmetry, the coefficients of $Y^{\text{obs}}_i Y^{\text{obs}}_i$ are the same in both quadratic forms, for all $i \neq j$ such that $W_i = W_j$. 

Finally, we consider the coefficient of $Y^{\text{obs}}_1 Y^{\text{obs}}_{N}$ in Equation \ref{eq_A_vhat_crd}, which equals
\begin{align}
&    \frac{4}{N^2}\frac{1}{\binom{N/2-1}{N/4}^2} \times\left\{2\binom{N/2 - 1}{N/4}\binom{N/2-1}{N/4-1} - 2\binom{N/2 - 1}{N/4}\binom{N/2-1}{N/4} \right\} \nonumber \\
& = \frac{4}{N^2} \times 0 = 0,
\end{align}
which is same as the coefficient of $Y^{\text{obs}}_1 Y^{\text{obs}}_N$ in Equation \ref{eq_A_crd}. By symmetry, the coefficients of $Y^{\text{obs}}_i Y^{\text{obs}}_i$ are the same in both quadratic forms, for all $i \neq j$ such that $W_i \neq W_j$. 

Thus, the two quadratic forms are identical. This completes the proof.
\qed
%%%%%%%%%%%%%%%%%%%%%%%%%%%
%%%%%%%%%%%%%%%%%%%%%%%%%%%
\subsection{Proof of Proposition \ref{prop_subs_example}}

Denote $a_i = Y_i(0)$, $b_i = Y_i(1)$. In this example, $\pi_i = 0.5$, and for $w \in \{0,1\}$, $\Pr_d(W_i = w, W_{i+1} = w) = \Pr_d(W_1 = w, W_4 = w) = 0.25$, $\Pr_d(W_{i} = w, W_{i+1} = 1-w) = \Pr_d(W_1 = w, W_4 = 1-w) = 0.25$, and $\Pr_d(W_1 = w, W_3 = 1-w) = \Pr_d(W_2 = w, W_4 = 1-w) = 0.5$. 
Using the decomposition in Proposition \ref{prop_decomp2} with the given choice of $\bm{Q}$, we get
\begin{align}
\Tilde{V}_d(\bm{Q}) &= \frac{2}{16}\sum_{i=1}^{4}(a^2_i + b^2_i) - \frac{2}{16}(b_1b_2+a_1a_2+b_2b_3+a_2a_3+b_3b_4+a_3a_4+b_1b_4+a_1a_4) \nonumber\\
& \quad + \frac{2}{16}(b_1a_2+a_1b_2+b_2a_3+a_2b_3+b_2a_3+a_2b_3+b_1a_4+a_1b_4) - \frac{4}{16}(b_1a_3 + a_1b_3 + b_2a_4 + a_2b_4)
\end{align}
Thus, denoting $\tilde{W}_i = 1 - W_i$, the Neymanian decomposition-based variance estimator is
\begin{align}
\hat{\Tilde{V}}_d(\bm{Q}) &= \frac{4}{16}\sum_{i=1}^{4}(Y^{\text{obs}}_i)^2 - \frac{8}{16}\left\{\sum_{i=1}^{3}(W_iW_{i+1} + \Tilde{W}_i\Tilde{W}_{i+1})Y^{\text{obs}}_iY^{\text{obs}}_{i+1} + (W_1W_4 + \Tilde{W}_1\Tilde{W}_4)Y^{\text{obs}}_1Y^{\text{obs}}_4 \right\} \nonumber\\
& \quad + \frac{8}{16}\left\{\sum_{i=1}^{3}(W_i\tilde{W}_{i+1} + \Tilde{W}_i {W}_{i+1})Y^{\text{obs}}_iY^{\text{obs}}_{i+1} + (W_1\tilde{W}_4 + \Tilde{W}_1{W}_4)Y^{\text{obs}}_1Y^{\text{obs}}_4 \right\} \nonumber\\
& \quad - \frac{8}{16} \left\{(W_1\tilde{W}_3 + \Tilde{W}_1{W}_3)Y^{\text{obs}}_1Y^{\text{obs}}_3 + (W_2\tilde{W}_4 + \Tilde{W}_2{W}_4)Y^{\text{obs}}_2Y^{\text{obs}}_4 \right\}.
\end{align}
In particular, when $\bm{W} \in \{(1,1,0,0)^\top, (0,0,1,1)^\top\}$, 
\begin{align}
    \hat{\Tilde{V}}_d(\bm{Q}) &= \frac{1}{4}\sum_{i=1}^{4}(Y^{\text{obs}}_i)^2 - \frac{1}{2}(Y^{\text{obs}}_1Y^{\text{obs}}_2 + Y^{\text{obs}}_1Y^{\text{obs}}_3 - Y^{\text{obs}}_1Y^{\text{obs}}_4 - Y^{\text{obs}}_2Y^{\text{obs}}_3 + Y^{\text{obs}}_2Y^{\text{obs}}_4 + Y^{\text{obs}}_3Y^{\text{obs}}_4) \nonumber \\
    & = \frac{1}{4} (Y^{\text{obs}}_1-Y^{\text{obs}}_2-Y^{\text{obs}}_3+Y^{\text{obs}}_4)^2, \label{eq_A_examp_1}
\end{align}
and when $\bm{W} \in \{(1,0,0,1)^\top, (0,1,1,0)^\top\}$, 
\begin{align}
    \hat{\Tilde{V}}_d(\bm{Q}) &= \frac{1}{4}\sum_{i=1}^{4}(Y^{\text{obs}}_i)^2 + \frac{1}{2}(Y^{\text{obs}}_1Y^{\text{obs}}_2 - Y^{\text{obs}}_1Y^{\text{obs}}_3 - Y^{\text{obs}}_1Y^{\text{obs}}_4 - Y^{\text{obs}}_2Y^{\text{obs}}_3 - Y^{\text{obs}}_2Y^{\text{obs}}_4 + Y^{\text{obs}}_3Y^{\text{obs}}_4) \nonumber \\
 & = \frac{1}{4} (Y^{\text{obs}}_1+Y^{\text{obs}}_2-Y^{\text{obs}}_3-Y^{\text{obs}}_4)^2, \label{eq_A_examp_2}
\end{align}
Now, in the contrast approach with the full set of substitutes, we estimate the variance of $\hat{\tau}$ by unbiasedly estimating
\begin{align}
    V = \frac{1}{16}\left\{(b_1 +a_2 - a_3 - b_4)^2 + (a_1 +b_2 - b_3 - a_4)^2 + (b_1-b_2-a_3+a_4)^2 + (a_1-a_2-b_3+b_4)^2\right\}
\end{align}
using a Horvitz-Thompson-type estimator. More specifically, we use the estimator
\begin{align}
 \hat{V} & = \frac{1}{16}\left[\frac{\mathbbm{1}\{\bm{W} = (1,0,0,1)^\top\} + \mathbbm{1}\{\bm{W} = (0,1,1,0)^\top\}}{1/4}(Y^{\text{obs}}_1+Y^{\text{obs}}_2-Y^{\text{obs}}_3-Y^{\text{obs}}_4)^2 \right. \nonumber \\
 & \quad \left. + \frac{\mathbbm{1}\{\bm{W} = (1,1,0,0)^\top\} + \mathbbm{1}\{\bm{W} = (0,0,1,1)^\top\}}{1/4}(Y^{\text{obs}}_1-Y^{\text{obs}}_2-Y^{\text{obs}}_3+Y^{\text{obs}}_4)^2 \right]. \label{eq_A_examp_3}
\end{align}
From Equations \ref{eq_A_examp_1}, \ref{eq_A_examp_2}, and \ref{eq_A_examp_3} it follows that, the estimators $\hat{\tilde{V}}_d(\bm{Q})$ and $\hat{V}$ are identical. This completes the proof.

Consider an alternative design where $\Pr(\bm{W} = (1,1,0,0)^\top) = \Pr(\bm{W} = (0,0,1,1)^\top) = 1/3$ and $\Pr(\bm{W} = (1,0,0,1)^\top) = \Pr(\bm{W} = (0,1,1,0)^\top) = 1/6$. In this case, the contrast approach-based estimator does not correspond to an estimator of the form $\hat{\tilde{V}}_d(\bm{Q})$ for any $\bm{Q}$. To see this, we can, without loss of generality, set the observed treatment assignment to $\bm{W} = (1, 1, 0, 0)^\top$. In this case, the coefficient of $(Y^{\text{obs}}_1)^2$ in the contrast approach-based estimator is 0.5, whereas in $\hat{\tilde{V}}_d(\bm{Q})$—regardless of the choice of $\bm{Q}$—it is 0.25.

\qed
%%%%%%%%%%%%%%%%%%%%%%%%%%
%%%%%%%%%%%%%%%%%%%%%%%%%%
\subsection{Proof of Theorem \ref{thm_substitute_matched}}

The proof of this theorem follows similar steps as in the proof of Theorem \ref{thm_substitute_neyman}. In a matched-pair setting, each pair $j$ has two units, labelled 1 and 2. Without loss of generality, we assume that in the observed data, the first unit in each pair is treated, i.e., $\bm{W} = (\underbrace{1,0}_{\text{pair 2}},\underbrace{1,0}_{\text{pair 2}},....,\underbrace{1,0}_{\text{pair $2k$}})^\top$. Let $d_j = Y^{\text{obs}}_{j1} - Y^{\text{obs}}_{j2}$ be the difference in observed outcome between units 1 and 2 in pair $j$, and let $\bar{d} = \frac{1}{2k}\sum_{j=1}^{2k}d_{j}$. For the given $\bm{W}$, we can express both $\hat{V}_{\text{pair}}$ and $\hat{V}_{\text{sub}}$ in terms of $d_j$. In particular, 
\begin{align}
\hat{V}_{\text{pair}} &= \frac{4k}{4k(4k-2)}\sum_{j=1}^{2k}(d_j - \bar{d})^2 \nonumber \\
& = \frac{1}{4k^2(4k-2)}\left[(4k-2)\sum_{j=1}^{2k}d^2_j - 4\mathop{\sum\sum}_{j<j'}d_jd_{j'} \right]. \label{eq_pair1}
\end{align}
Now, for any assignment vector $\bm{w}\in \mathcal{W}$, we see that its substitute is obtained by selecting $k$ pairs out of the $2k$ available pairs and switching the treatment labels within each pair (from what it was in $\bm{w}$). For instance, consider the special case of $k=1$ (i.e., $N-4$), and the assignment vector $\bm{w} = (\underbrace{1,0}_{\text{pair 1}},\underbrace{1,0}_{\text{pair 2}})^\top$. It has two substitutes: $(\underbrace{0,1}_{\text{pair 1}},\underbrace{1,0}_{\text{pair 2}})^\top$ (switching is done pair 1) and $(\underbrace{1,0}_{\text{pair 1}},\underbrace{0,1}_{\text{pair 2}})^\top$ (switching is done pair 2). Thus, in general, the total number of substitutes of for any  $\bm{w}$ is $|\mathcal{G}^*(\bm{w})| = \binom{2k}{k}$. 
\begin{align}
    \hat{V}_{\text{sub}} & =  \frac{4}{16k^2}\frac{1}{\binom{2k}{k}}\sum_{\bm{w} \in \mathcal{G}^*(\bm{W})} \{\bm{l}(\bm{w})^\top \bm{Y}^{\text{obs}}\}^2 \nonumber\\
    & = \frac{1}{4k^2}\frac{1}{\binom{2k}{k}} \sum_{\bm{w} \in \mathcal{G}^*(\bm{W})} \left(\sum_{j:z_j(\bm{w}) = 1}d_j - \sum_{j:z_j(\bm{w}) = 0}d_j \right)^2,
\end{align}
where $z_j(\bm{w}) =1$ if under $\bm{w}$, the 1st unit in pair $j$ is treated. 
Thus,
\begin{align}
    \hat{V}_{\text{sub}} = \frac{1}{4k^2}\frac{1}{\binom{2k}{k}} \sum_{\bm{w} \in \mathcal{G}^*(\bm{W})} \left(\sum_{j=1}^{2k}d^2_j + 2\mathop{\sum\sum}_{j <j':z_j(\bm{w}) = z_{j'}(\bm{w})}d_jd_{j'} - 2\mathop{\sum\sum}_{j <j':z_j(\bm{w}) \neq z_{j'}(\bm{w})}d_jd_{j'} \right). \label{eq_submatch}
\end{align}
To show that $\hat{V}_{\text{pair}}$ and $\hat{V}_{\text{sub}}$ are algebraically identical, we compare the coefficients of $d_j^2$ and $d_jd_{j'}$. By symmetry, it suffices to compare the coefficients of $d^2_1$ and $d_1d_2$. From Equation \ref{eq_pair1}, it follows that the coefficient of $d^2_1$ is $\frac{1}{4k^2}$ and the coefficient of $d_{1}d_{2}$ is $-\frac{1}{k^2(4k-2)}$. Now, from Equation \ref{eq_submatch}, the coefficient of $d^2_1$ is $\binom{2k}{k} \times \frac{1}{4k^2 \binom{2k}{k}} = \frac{1}{4k^2}$, and the coefficient of $d_{12}$ is
\begin{equation}
    \frac{1}{4k^2 \binom{2k}{k}}\left\{ 2\binom{2k-2}{k-2} + 2\binom{2k-2}{k} - 4\binom{2k-2}{k-1}  \right\} = -\frac{1}{k^2(4k-2)}.
\end{equation}
This completes the proof.
%%%%%%%%%%%%%%%%%%%%%%%%%%%
%%%%%%%%%%%%%%%%%%%%%%%%%%%
\subsection{Proof of Proposition \ref{prop_impute_gen}}
Consider a design $d$ and denote $\pi_i = \Pr_d(W_i = 1)$ and $p_{\bm{w}} = \Pr_d(\bm{W} = \bm{w})$. The corresponding Horvitz-Thompson (HT) estimator can be written as,
   \begin{align}
       \hat{\tau} &= \frac{1}{N}\sum_{i=1}^{N}\frac{W_i Y_i(1)}{\pi_i} - \frac{1}{N}\sum_{i=1}^{N}\frac{(1-W_i) Y_i(0)}{1-\pi_i} \nonumber \\
       & = \frac{1}{N}\sum_{i=1}^{N} \left(\frac{1}{\pi_i} + \frac{1}{1-\pi_i} \right)W_i c_i - \frac{1}{N}\sum_{i=1}^{N}\frac{Y_i(0)}{1-\pi_i}, 
   \end{align}
where $c_i = \frac{\frac{Y_i(1)}{\pi_i} + \frac{Y_i(0)}{1-\pi_i}}{\frac{1}{\pi_i} + \frac{1}{1-\pi_i}} = (1 - \pi_i)Y_i(1) + \pi_i Y_i(0)$. 
So,
\begin{align}
    \Var_d(\hat{\tau}) &= \Var_d\left\{\frac{1}{N}\sum_{i=1}^{N}\frac{W_ic_i}{\pi_i(1 - \pi_i)} \right\} \label{eq_A_impute_step0} \\
    & = \frac{1}{N^2}\E_d\left\{\sum_{i=1}^{N}\frac{W_ic_i}{\pi_i(1 - \pi_i)} - \sum_{i=1}^{N}\frac{c_i}{1-\pi_i} \right\}^2 \nonumber \\
  %  & = \frac{1}{N^2}\sum_{\bm{w}} p_{\bm{w}} \left\{\frac{c_{i_1}}{\pi_{i_1}(1-\pi_{i_1})} + ... + \frac{c_{i_{N_t(\bm{w})}}}{\pi_{i_{N_t(\bm{w})}}(1-\pi_{i_{N_t(\bm{w})}})} - \sum_{i=1}^{N}\frac{c_i}{1-\pi_i}   \right\}^2 \nonumber \\
    & = \frac{1}{N^2}\sum_{\bm{w}} p_{\bm{w}} \left(\sum_{i:w_{i} = 1}\frac{c_{i}}{\pi_{i}} - \sum_{i:w_{i} = 0}\frac{c_{i}}{1-\pi_{i}} \right)^2 =: \psi(\bm{c}) \label{eq_A_impute1}
\end{align}
Here, we note that the number of treated units is allowed to vary across $\bm{w}$. 

Also, from Equation \ref{eq_A_impute_step0}, we observe that $\Var_d(\hat{\tau}) = \Var_d(\bm{c}^\top\bm{D})$, where $\bm{D} = (D_1,...,D_N)^\top$ with $D_i = \frac{W_i}{N\pi_i(1-\pi_i)}$. Thus, $\Var_d(\hat{\tau}) = \bm{c}^\top \Var_d(\bm{D}) \bm{c}$. Since $\Var_d(\bm{D})$ is always non-negative definite, it follows that $\psi(\cdot)$ is convex.
This completes the proof.
\qed
%%%%%%%%%%%%%%%%%%%%%%%%%%%%%%
%%%%%%%%%%%%%%%%%%%%%%%%%%%%%%
\subsection{Proof of Proposition \ref{prop_c}}

Proposition \ref{prop_c} follows directly from Equation \ref{eq_A_impute1} after substituting $\pi_i = 0.5$.
\qed
%%%%%%%%%%%%%%%%%%%%%%%%%%%%%%%
%%%%%%%%%%%%%%%%%%%%%%%%%%%%%%%

%\subsection{Proof of Corollary \ref{corollary_imputation}}
%%%%%%%%%%%%%%%%%%%%%%%%%%%%%%%%%%
%%%%%%%%%%%%%%%%%%%%%%%%%%%%%%%%%%
\subsection{Proof of Proposition \ref{thm_imputation1}}
Proposition \ref{thm_imputation1} follows directly from Corollary \ref{corollary_imputation}.

%%%%%%%%%%%%%%%%%%%%%%
%%%%%%%%%%%%%%%%%%%%%%
\subsection{Proof of Proposition \ref{prop_limit}}

From Proposition \ref{thm_imputation1}, we have $\E_d\{\psi(\hat{\bm{c}})\} = \Var_d(\hat{\tau}) + (\tau - \beta)^2\E_d\{\psi(\bm{W})\}$. Now, 
\begin{align}
    \E_d\{\psi(\bm{W})\} & = \sum_{\bm{w}\in \mathcal{W}}p_{\bm{w}}\E_d\left(\frac{W_{i_1}+ ...+W_{i_{N/2}}}{N/2} - \frac{W_{j_1}+ ...+W_{j_{N/2}}}{N/2}  \right)^2,
\end{align}
where $\{i_1,...,i_{N/2}\}$ and $\{j_1,...,j_{N/2}\}$ are the set of treated and control units under $\bm{w}$, respectively. Now, since $(W_1+W_2 + ... +W_N)^2 = \frac{N^2}{4},$ we have
\begin{align}
2\mathop{\sum\sum}_{i<j}W_iW_j & = \frac{N}{2}\left(\frac{N}{2} - 1 \right)    \nonumber \\
\iff \mathop{\sum\sum}_{i<j:w_i = w_j}\pi_{ij} + \mathop{\sum\sum}_{i<j:w_i \neq w_j}\pi_{ij} &= \frac{N}{4}\left(\frac{N}{2} - 1 \right). \label{eq_A_5.4_1}
\end{align}
Now, for a fixed $\bm{w} \in \mathcal{W}$, we have
\begin{align}
\E_d\left(\frac{W_{i_1}+ ...+W_{i_{N/2}}}{N/2} - \frac{W_{j_1}+ ...+W_{j_{N/2}}}{N/2}  \right)^2 & = \frac{4}{N^2}\left(\frac{N}{2} + 2\mathop{\sum\sum}_{i<j:w_i = w_j}\pi_{ij} - 2\mathop{\sum\sum}_{i<j:w_i \neq w_j}\pi_{ij}   \right) \nonumber \\
& = \frac{4}{N} + \frac{16}{N^2}\mathop{\sum\sum}_{i<j:w_i = w_j}\pi_{ij} - 1 \nonumber\\
& = \frac{16}{N^2}\mathop{\sum\sum}_{i<j:w_i = w_j}[\pi_{ij} - (N-2)/\{4(N-1)\}] + 1/(N-1),
\end{align}
where the penultimate equality holds due to Equation \ref{eq_A_5.4_1}. By the given condition,
\begin{align}
\E_d\left(\frac{W_{i_1}+ ...+W_{i_{N/2}}}{N/2} - \frac{W_{j_1}+ ...+W_{j_{N/2}}}{N/2}  \right)^2 & = o\left(1\right).
\end{align}
Thus, $\E_d\{\psi(\bm{W})\} = o(1)$.
% by silverman toepliz, convergence of wtd avgs.
In particular, for a CRD, $\pi_{ij} = (N-2)/\{4(N-1)\}$ and hence, $\E_d\{\psi(\bm{W})\} = 1/(N-1)$.
\qed

%%%%%%%%%%%%%%%%%%%%%%%%%%%%%%%%%%
%%%%%%%%%%%%%%%%%%%%%%%%%%%%%%%%%%
\subsection{Proof of Proposition \ref{prop_imputation_neyman}}

For a completely randomized design (CRD) with equal group size, Neyman's variance estimator is given by,
\begin{align}
    \hat{V}_{\text{Neyman}} & = \frac{1}{\frac{N}{2}(\frac{N}{2}-1)}\left\{\sum_{i:W_i = 1}(Y^{\text{obs}}_i - \bar{Y}_t)^2+ \sum_{i:W_i = 0}(Y^{\text{obs}}_i - \bar{Y}_c)^2 \right\} \nonumber\\
    & = \frac{4}{N^2}\sum_{i=1}^{N}(Y^{\text{obs}}_i)^2 - \frac{8}{N^2(N-2)}\left(\mathop{\sum\sum}_{i \neq j: W_i,W_j = 1}Y^{\text{obs}}_i Y^{\text{obs}}_j + \mathop{\sum\sum}_{i \neq j: W_i,W_j = 0}Y^{\text{obs}}_i Y^{\text{obs}}_j \right). \label{eq_A_impute_neyman}
\end{align}
Now, without loss of generality, suppose that the observed assignment vector assigns units $1,2,...,N/2$ to treatment and the rest to control. In that case, the realized value of $\hat{V}_{\text{Neyman}}$ is
\begin{align}
 \hat{V}_{\text{Neyman}} & =  \frac{4}{N^2}\sum_{i=1}^{N}(Y^{\text{obs}}_i)^2 - \frac{8}{N^2(N-2)}\left(\mathop{\sum\sum}_{i \neq j \in \{1,2,...,N/2\}}Y^{\text{obs}}_i Y^{\text{obs}}_j + \mathop{\sum\sum}_{i \neq j \in \{N/2+1,...,N\}}Y^{\text{obs}}_i Y^{\text{obs}}_j \right)   
\end{align}
Now, we consider the imputation approach where the missing potential outcomes are imputed as if $\hat{\tau}$ is the true unit-level causal effect. More formally, 
\begin{equation}
  \hat{Y}_i(0) = 
    \begin{cases}
     Y^{\text{obs}}_i - \hat{\tau} & \text{if $W_i = 1$}\\
    Y^{\text{obs}}_i & \text{if $W_i = 0$},
    \end{cases}       
\end{equation}
\begin{equation}
  \hat{Y}_i(1) =
    \begin{cases}
     Y^{\text{obs}}_i & \text{if $W_i = 1$}\\
    Y^{\text{obs}}_i + \hat{\tau} & \text{if $W_i = 0$}.
    \end{cases}       
\end{equation}
Thus, under homogeneity, 
\begin{align}
    \hat{c}_i = \frac{\hat{Y}_i(0) + \hat{Y}_i(1)}{2}.
\end{align}
In particular, when $\bm{W}$ assigns the first $N/2$ units to treatment and the rest to control then, $\hat{c}_i = Y^{\text{obs}}_i - \frac{\hat{\tau}}{2}$ for $i \in \{1,...,N/2\}$ and $\hat{c}_i = Y^{\text{obs}}_i + \frac{\hat{\tau}}{2}$ for $i \in \{N/2+1,...,N\}$. 
Now, the corresponding imputation estimator is given by,
\begin{align}
    \psi(\hat{\bm{c}}) & = \frac{4}{N^2}\frac{1}{\binom{N}{N/2}}\sum_{\bm{w}} \left(\sum_{i:w_{i} = 1}\hat{c}_{i} - \sum_{i:w_{i} = 0}\hat{c}_{i} \right)^2 
\end{align}
We introduce a few additional notations. First, for any $\bm{w}$, denote $\bm{l}(\bm{w}) = (l_1(\bm{w}),...,l_N(\bm{w}))^\top,$ where $l_i(\bm{w}) = 1$ if $w_i = 1$, and $l_i(\bm{w}) = -1$ otherwise. Moreover, denote for every $\bm{w}$, let $r(\bm{w}) = \#\{i:w_i = 1, W_i = 1\}$ be the number of units that are treated both under assignment vector $\bm{w}$ and the observed assignment vector $\bm{W}$. In this case, $r(\bm{w})$ is simply the number of treated units under $\bm{w}$ among the first $N/2$ units. It follows that, $\bm{l}(\bm{w})^\top \bm{1} = 0$, $\bm{l}(\bm{w})^\top \bm{l}(\bm{W}) = 4r(\bm{w}) - N$, and $\bm{l}(\bm{W})^\top \bm{Y}^{\text{obs}} = (N/2)\hat{\tau}$. Now, the variance estimator can be written as,
\begin{align}
     \psi(\hat{\bm{c}}) & = \frac{4}{N^2}\frac{1}{\binom{N}{N/2}}\sum_{\bm{w}}\{\bm{l}(\bm{w})^\top \hat{\bm{c}}\}^2 \nonumber\\
     & = \frac{4}{N^2}\frac{1}{\binom{N}{N/2}}\sum_{\bm{w}} \left\{\bm{l}(\bm{w})^\top\bm{Y}^{\text{obs}} - \frac{\hat{\tau}}{2}\bm{l}(\bm{w})^\top\bm{l}(\bm{W}) \right\}^2 \nonumber \\
     & = \frac{4}{N^2}\frac{1}{\binom{N}{N/2}}\sum_{\bm{w}} \left[ \left\{ \bm{l}(\bm{w}) - \frac{4r(\bm{w}) - N}{N}\bm{l}(\bm{W}) \right\}^\top \bm{Y}^{\text{obs}} \right]^2 \nonumber \\ 
     & = \frac{4}{N^2}\frac{1}{\binom{N}{N/2}}\sum_{\bm{w}}\{\bm{g}(\bm{w})^\top\bm{Y}^{\text{obs}}\}^2 \nonumber \\
     & = \frac{4}{N^2}\frac{1}{\binom{N}{N/2}}\left[ \sum_{i=1}^{N} \left\{\sum_{\bm{w}}g^2_i(\bm{w})\right\}(Y^{\text{obs}}_i)^2 + \mathop{\sum\sum}_{i \neq j}\left\{\sum_{\bm{w}}g_i(\bm{w})g_j(\bm{w})\right\}Y^{\text{obs}}_i Y^{\text{obs}}_j  \right], \label{eq_A_impute_hat1}
\end{align}
where $\bm{g}(\bm{w}) = (g_1(\bm{w}),...,g_N(\bm{w}))^\top$, where $g_i(\bm{w}) = l_i(\bm{w})  - \frac{4r(\bm{w}) - N}{N}l_i(\bm{W})$. Thus, $\psi(\hat{\bm{c}})$ is a quadratic form in $\bm{Y}^{\text{obs}}$. We now compare the coefficient of this quadratic form to that corresponding to Neyman's estimator in Equation \ref{eq_A_impute_neyman}.

To this end, we first compute the following sums
\begin{align}
    \sum_{\bm{w}}r(\bm{w}) & = \sum_{k = 0}^{N/2} k \#\left\{\bm{w}: \sum_{i=1}^{N/2}w_i = k\right\} \nonumber \\
    & = \sum_{k=0}^{N/2} k \binom{N/2}{k} \binom{N/2}{N/2 - k} \nonumber \\
    & = \binom{N}{N/2} \frac{N}{4},
\end{align}
\begin{align}
    \sum_{\bm{w}}r^2(\bm{w}) & = \sum_{k=0}^{N/2} k^2 \binom{N/2}{k} \binom{N/2}{N/2 - k} \nonumber \\
    & = \binom{N}{N/2}\frac{N^2}{16}\frac{N}{N-1}.
\end{align}
Next, for unit $i \in \{1,2,...,N/2\}$, 
\begin{align}
    \sum_{\bm{w}}l_i(\bm{w})r(\bm{w}) &= \sum_{\bm{w}:w_i = 1} r(\bm{w}) - \sum_{\bm{w}:w_i = 0} r(\bm{w}) \nonumber \\
    & = \sum_{k=1}^{N/2}k \binom{N/2-1}{k-1} \binom{N/2}{N/2-k} - \sum_{k=1}^{N/2}k \binom{N/2-1}{k} \binom{N/2}{N/2-k} \nonumber \\
    & = \binom{N}{N/2}\frac{N}{4(N-1)}.
\end{align}
Likewise, for unit $i \in \{N/2+1,...,N\}$, 
\begin{align}
    \sum_{\bm{w}}l_i(\bm{w})r(\bm{w}) &= \sum_{\bm{w}:w_i = 1} r(\bm{w}) - \sum_{\bm{w}:w_i = 0} r(\bm{w}) \nonumber \\
    & = \sum_{k=1}^{N/2}k \binom{N/2}{k} \binom{N/2-1}{N/2-1-k} - \sum_{k=1}^{N/2}k \binom{N/2}{k} \binom{N/2-1}{N/2-k} \nonumber \\
    & = -\binom{N}{N/2}\frac{N}{4(N-1)}.
\end{align}
Finally, for $i \in \{1,2,...,N/2\}$ and $j \in \{N/2+1,...,N\}$,
\begin{align}
    \sum_{\bm{w}}l_i(\bm{w})l_j(\bm{w}) &= \sum_{\bm{w}:w_i = w_j}1  - \sum_{\bm{w}:w_i \neq w_j}1 \nonumber\\
    & = \left\{\binom{N-2}{N/2-2} + \binom{N-2}{N/2-2} \right\} - \left\{ \binom{N-2}{N/2-1} + \binom{N-2}{N/2-1} \right\} \nonumber \\
    & = -\frac{1}{N-1}\binom{N}{N/2}.
\end{align}
Now, the coefficient of $(Y^{\text{obs}}_i)^2$ in Equation \ref{eq_A_impute_hat1}, for $i \in \{1,...,N/2\}$ is given by,
\begin{align}
    &\frac{4}{N^2}\frac{1}{\binom{N}{N/2}} \sum_{\bm{w}}g^2_i(\bm{w}) \nonumber \\
    & = \frac{4}{N^2}\frac{1}{\binom{N}{N/2}} \sum_{\bm{w}}\left\{l_i(\bm{w}) - \frac{4r(\bm{w})-N}{N}   \right\}^2 \nonumber\\
    & = \frac{4}{N^2}\frac{1}{\binom{N}{N/2}}\left\{2\binom{N}{N/2} + \frac{16}{N^2}\sum_{\bm{w}}r^2(\bm{w}) - \frac{8}{N}\sum_{\bm{w}}r(\bm{w}) - \frac{8}{N}\sum_{\bm{w}}l_i(\bm{w})r(\bm{w}) + 2\sum_{\bm{w}}l_i(\bm{w})\right\} \nonumber \\
    & = \frac{4}{N^2}\frac{1}{\binom{N}{N/2}}\binom{N}{N/2}\frac{N-2}{N-1}.
\end{align}
Similarly, the coefficient of $(Y^{\text{obs}}_i)^2$ in Equation \ref{eq_A_impute_hat1}, for $i \in \{N/2+1,...,N\}$ is given by,
\begin{align}
   & \frac{4}{N^2}\frac{1}{\binom{N}{N/2}} \sum_{\bm{w}}g^2_i(\bm{w}) \nonumber\\
   & = \frac{4}{N^2}\frac{1}{\binom{N}{N/2}}\sum_{\bm{w}}\left\{l_i(\bm{w}) + \frac{4r(\bm{w})-N}{N}   \right\}^2 \nonumber\\
    & = \frac{4}{N^2}\frac{1}{\binom{N}{N/2}}\left\{2\binom{N}{N/2} + \frac{16}{N^2}\sum_{\bm{w}}r^2(\bm{w}) - \frac{8}{N}\sum_{\bm{w}}r(\bm{w}) + \frac{8}{N}\sum_{\bm{w}}l_i(\bm{w})r(\bm{w}) - 2\sum_{\bm{w}}l_i(\bm{w})\right\} \nonumber \\
    & =  \frac{4}{N^2}\frac{1}{\binom{N}{N/2}} \binom{N}{N/2}\frac{N-2}{N-1} = \frac{4}{N^2}\frac{N-2}{N-1}.
\end{align}
Therefore, the coefficient of $(Y^{\text{obs}}_i)^2$ in $\psi(\hat{\bm{c}})$ is $\frac{N-2}{N-1}$ that of $\hat{V}_{\text{Neyman}}$.

Next, for $i,j \in \{1,2,...,N/2\}$, we consider the coefficient of $Y^{\text{obs}}_iY^{\text{obs}}_j$ in $\psi(\hat{\bm{c}})$.
\begin{align}
 &\frac{4}{N^2}\frac{1}{\binom{N}{N/2}}\sum_{\bm{w}} g_i(\bm{w})g_j(\bm{w}) \nonumber \\
 & = \frac{4}{N^2}\frac{1}{\binom{N}{N/2}}\sum_{\bm{w}}\left\{l_i(\bm{w}) - \frac{4r(\bm{w})-N}{N}   \right\}\left\{l_j(\bm{w}) - \frac{4r(\bm{w})-N}{N}   \right\} \nonumber \\
 & = \left[\sum_{\bm{w}}l_i(\bm{w})l_j(\bm{w}) + \frac{1}{N^2}\sum_{\bm{w}}\{16r^2(\bm{w}) + N^2 - 8Nr(\bm{w})\} - \frac{4}{N}\left\{\sum_{\bm{w}}r(\bm{w})l_i(\bm{w}) + \sum_{\bm{w}}r(\bm{w})l_j(\bm{w})\right\} \right. \nonumber \\
 &\quad + \left.\sum_{\bm{w}}\{l_i(\bm{w}) + l_j(\bm{w})\}\right]\frac{4}{N^2}\frac{1}{\binom{N}{N/2}} \nonumber \\
 & = \frac{4}{N^2}\frac{1}{\binom{N}{N/2}} \frac{(-2)}{N-1}\binom{N}{N/2} = -\frac{8}{N^2(N-2)}\frac{N-2}{N-1}.
\end{align}
Following similar steps, we can show that, for $i,j \in \{N/2+1,...,N\}$, the coefficient of $Y^{\text{obs}}_iY^{\text{obs}}_j$ is,
\begin{align}
    &\frac{4}{N^2}\frac{1}{\binom{N}{N/2}}\sum_{\bm{w}} g_i(\bm{w})g_j(\bm{w}) \nonumber \\
 & = \frac{4}{N^2}\frac{1}{\binom{N}{N/2}}\sum_{\bm{w}}\left\{l_i(\bm{w}) + \frac{4r(\bm{w})-N}{N}   \right\}\left\{l_j(\bm{w}) + \frac{4r(\bm{w})-N}{N}   \right\} \nonumber \\
 &= -\frac{8}{N^2(N-2)}\frac{N-2}{N-1}.
\end{align}
Thus, for $i,j$ with $W_i = W_j$, the coefficient of $Y^{\text{obs}}_iY^{\text{obs}}_j$ in $\psi(\hat{\bm{c}})$ is $\frac{N-2}{N-1}$ times that of $\hat{V}_{\text{Neyman}}$.

Finally, for $i \in \{1,2,...,N/2\}$ and $j \in \{N/2+1,...,N\}$, we consider the coefficient of $Y^{\text{obs}}_iY^{\text{obs}}_j$ in $\psi(\hat{\bm{c}})$.
\begin{align}
 &\frac{4}{N^2}\frac{1}{\binom{N}{N/2}}\sum_{\bm{w}} g_i(\bm{w})g_j(\bm{w}) \nonumber \\
 & = \frac{4}{N^2}\frac{1}{\binom{N}{N/2}}\sum_{\bm{w}}\left\{l_i(\bm{w}) - \frac{4r(\bm{w})-N}{N}   \right\}\left\{l_j(\bm{w}) + \frac{4r(\bm{w})-N}{N}   \right\} \nonumber \\
 & = \left[\sum_{\bm{w}}l_i(\bm{w})l_j(\bm{w}) + \frac{1}{N^2}\sum_{\bm{w}}\{16r^2(\bm{w}) + N^2 - 8Nr(\bm{w})\} + \frac{4}{N}\left\{\sum_{\bm{w}}r(\bm{w})l_i(\bm{w}) - \sum_{\bm{w}}r(\bm{w})l_j(\bm{w})\right\} \right. \nonumber \\
 &\quad - \left.\sum_{\bm{w}}l_i(\bm{w}) + \sum_{\bm{w}}l_j(\bm{w})\right]\frac{4}{N^2}\frac{1}{\binom{N}{N/2}} \nonumber \\
 & = \frac{4}{N^2}\frac{1}{\binom{N}{N/2}} \times 0 = 0.
\end{align}
Therefore, it follows that, the coefficients in the quadratic form corresponding to $\psi(\hat{\bm{c}})$ is $\frac{N-2}{N-1}$ times those corresponding to $\hat{V}_{\text{Neyman}}$. Thus, we have,
\begin{align}
    \psi(\hat{\bm{c}}) = \frac{N-2}{N-1}\hat{V}_{\text{Neyman}}.
\end{align}
This completes the proof.
\qed

%%%%%%%%%%%%%%%%%%%%%%%%%%%%%%%%%%
%%%%%%%%%%%%%%%%%%%%%%%%%%%%%%%%%%
\subsection{Proof of Proposition \ref{prop_imp_tauhat_asymp}}

Using Theorem \ref{thm_impute_general}, we can write
$$\psi(\hat{\bm{c}}) = \Var_d(\hat{\tau}) + A_1 + A_2,$$
where
$$A_1  = {(\hat{\tau} - \tau)^2}\sum_{\bm{w}}p_{\bm{w}}\left(\sum_{i:w_i = 1}\frac{W_i}{N/2} - \sum_{i:w_i = 0}\frac{W_i}{N/2}\right)^2,$$
and 
\begin{align*}
  A_2 & =   -2{(\hat{\tau} - \tau)}\sum_{\bm{w}}p_{\bm{w}} \left(\sum_{i:w_i = 1}\frac{Y_i(0)}{N/2} - \sum_{i:w_i = 0}\frac{Y_i(0)}{N/2} \right)\left(\sum_{i:w_i = 1}\frac{W_i}{N/2} - \sum_{i:w_i = 0}\frac{W_i}{N/2}\right).
\end{align*}
Now, since $\sum_{\bm{w}}p_{\bm{w}}\left(\sum_{i:w_i = 1}\frac{W_i}{N/2} - \sum_{i:w_i = 0}\frac{W_i}{N/2}\right)^2 \leq 1$, $0\leq A_1 \leq (\hat{\tau} - \tau)^2 = o_P(1)$. Moreover, since the potential outcomes are bounded, $A_1 \leq C$, where $C>0$ is a constant. Therefore, using the dominated convergence theorem, we have $\E_d(A_1) = o(1).$

Similarly, by Cauchy-Schwarz inequality, 
\begin{align}
|A_2| &\leq 2|\hat{\tau} - \tau|\sqrt{\sum_{\bm{w}}p_{\bm{w}}\left(\sum_{i:w_i = 1}\frac{W_i}{N/2} - \sum_{i:w_i = 0}\frac{W_i}{N/2}\right)^2 \sum_{\bm{w}}p_{\bm{w}}\left(\sum_{i:w_i = 1}\frac{Y_i(0)}{N/2} - \sum_{i:w_i = 0}\frac{Y_i(0)}{N/2}\right)^2} \nonumber \\
& \leq 2C'|\hat{\tau} - \tau|,
\end{align}
where $C'>0$ is a constant. The last inequality holds since $\sum_{\bm{w}}p_{\bm{w}}\left(\sum_{i:w_i = 1}\frac{W_i}{N/2} - \sum_{i:w_i = 0}\frac{W_i}{N/2}\right)^2 \leq 1$ and the potential outcomes are bounded. Using consistency of $\hat{\tau}$ we have, $A_2 = o_P(1)$. Moreover, since the potential outcomes are bounded, $|A_2|\leq C''$ for some constant $C''>0$. Using the dominated convergence theorem once again, we get $\E_d(A_2) = o(1)$. This completes the proof.
\qed

%%%%%%%%%%%%%%%%%%%%%%%
%%%%%%%%%%%%%%%%%%%%%%%
\subsection{Proof of Proposition \ref{prop_bias}}
\begin{equation}
  \hat{c}_i =
    \begin{cases}
      \frac{1-\pi_i}{\pi_i} Y_i(1) - (1-\pi_i)\gamma_i & \text{if $W_i = 1$}\\
    \frac{\pi_i}{1-\pi_i} Y_i(0) + \pi_i\gamma_i & \text{if $W_i = 0$},
    \end{cases}  
\end{equation}
So, 
\begin{align}
\E(\hat{c}_i) &= \pi_i \left\{\frac{1-\pi_i}{\pi_i}Y_i(1) - (1-\pi_i)\E(\gamma_i|W_i = 1) \right\} + (1-\pi_i) \left\{\frac{\pi_i}{1-\pi_i}Y_i(0) + \pi_i\E(\gamma_i|W_i = 0) \right\}. \nonumber\\
\implies \E(\hat{c}_i) - c_i &= \pi_i(1-\pi_i)\{\E_d(\gamma_i|W_i = 0) - \E_d(\gamma_i|W_i = 1)\}.
\end{align}

\qed

\subsection{Proof of Proposition \ref{prop_jack_general2}}

\begin{align}
\E(\hat{\theta}_{(-i)}|W_i = 1) &= \E\left\{\frac{1}{N-1}\sum_{j\neq i}\frac{W_jY_j(1)(1-\pi_j)}{\tilde{\pi}_j \pi_j} - \frac{1}{N-1}\sum_{j\neq i}\frac{(1-W_j)Y_j(0) {\pi}_j}{(1-\tilde{\pi}_j)(1-\pi_j)} \Bigg|W_i = 1 \right\} \nonumber\\
& = \frac{1}{N-1}\sum_{j\neq i}\frac{\E_d(W_j|W_i = 1)Y_j(1)(1-\pi_j)}{\Pr_d(W_j = 1|W_i = 1) \pi_j} - \frac{1}{N-1}\sum_{j\neq i}\frac{\E_d(1-W_j|W_i=1)Y_j(0) {\pi}_j}{\Pr_d(W_j = 0|W_i = 1)(1-\pi_j)} \nonumber\\
& = \frac{1}{N-1}\sum_{j \neq i}\left(\frac{1-\pi_j}{\pi_j}Y_j(1) - \frac{\pi_j}{1-\pi_j}Y_j(0)\right).
\end{align}
Following similar steps, we can show that $\E(\hat{\theta}_{(-i)}|W_i = 0) = \frac{1}{N-1}\sum_{j \neq i}\left(\frac{1-\pi_j}{\pi_j}Y_j(1) - \frac{\pi_j}{1-\pi_j}Y_j(0)\right)$. This completes the proof.

\qed

\subsection{Proof of Theorem \ref{thm_jack2_bias}}

In this case, the estimated $\hat{c}_i$ can be written as 
\begin{equation}
    \hat{c}_i = Y_i(0) + W_i \tau + \hat{\theta}_{(-i)}(0.5 -W_i).
\end{equation}
Moreover, in this case, 
\begin{equation}
     \hat{\theta}_{(-i)} =
    \begin{cases}
      \frac{1}{N/2-1}\sum_{j \neq i} W_jY_j(1) - \frac{1}{N/2}\sum_{j \neq i} (1-W_j)Y_j(0)  & \text{if $W_i = 1$}\\
    \frac{1}{N/2}\sum_{j \neq i} W_jY_j(1) - \frac{1}{N/2 - 1}\sum_{j \neq i} (1-W_j)Y_j(0) & \text{if $W_i = 0$}.
    \end{cases}  
\end{equation}
It follows that,
\begin{equation}
     \hat{\theta}_{(-i)} =
    \begin{cases}
      \hat{\tau} + \frac{1}{N/2(N/2-1)}\sum_{j \neq i} W_jY_j(1) - \frac{1}{N/2-1}W_i Y_i(1) & \text{if $W_i = 1$}\\
    \hat{\tau} - \frac{1}{N/2(N/2-1)}\sum_{j \neq i} (1-W_j)Y_j(0) + \frac{1}{N/2}(1-W_i) Y_i(0) & \text{if $W_i = 0$}.
    \end{cases}  
\end{equation}
Let $\bar{Y}(0)_t$, $\bar{Y}(0)_c$, and $\bar{Y}(0)$ be the means of $Y_i(0)$ in the treatment, control, and the overall sample, respectively. Under homogeneity, we can write,
\begin{equation}
     \hat{\theta}_{(-i)} = \hat{\tau} + \frac{2}{N-2}W_i 2(\bar{Y}(0) - Y_i(0)) +\frac{2}{N-2}(Y_i(0) - \bar{Y}(0)_c).
\end{equation}
Now,
\begin{align}
    \hat{V}_{\text{Jack}}  &= \frac{4}{N^2}\sum_{\tilde{\bm{w}}\in \mathcal{W}}\frac{1}{\binom{N}{N/2}}\left(\sum_{i:\tilde{w}_i = 1}\hat{c}_i  - \sum_{i:\tilde{w}_i = 0}\hat{c}_i\right)^2.
\end{align}
Substituting the expressions of $\hat{c}_i$ (and of $\hat{\theta}_{(-i)}$), we get
\begin{align}
    &\hat{V}_{\text{Jack}} \nonumber\\
    &= \frac{4}{N^2}\frac{(N-1)^2}{(N-2)^2}\sum_{\tilde{\bm{w}}\in \mathcal{W}}\frac{1}{\binom{N}{N/2}}\left\{\left(\sum_{i:\tilde{w}_i = 1}Y_i(0)  - \sum_{i:\tilde{w}_i = 0}Y_i(0)\right) - (\hat{\tau} - \tau)\left(\sum_{i:\tilde{w}_i = 1}W_i  - \sum_{i:\tilde{w}_i = 0}W_i\right) \right\}^2 \nonumber\\
& =  \frac{4}{N^2}\frac{(N-1)^2}{(N-2)^2}\frac{1}{\binom{N}{N/2}}\left[ \sum_{\tilde{\bm{w}}\in \mathcal{W}}\left(\sum_{i:\tilde{w}_i = 1}Y_i(0)  - \sum_{i:\tilde{w}_i = 0}Y_i(0)\right)^2 \right.\nonumber\\
& \hspace{1.5in} - 2\left(\sum_{i:\tilde{w}_i = 1}Y_i(0)  - \sum_{i:\tilde{w}_i = 0}Y_i(0)\right)(\hat{\tau} - \tau)\left(\sum_{i:\tilde{w}_i = 1}W_i  - \sum_{i:\tilde{w}_i = 0}W_i\right) \nonumber \\
& \hspace{1.5in} \left. +  (\hat{\tau} - \tau)^2\left(\sum_{i:\tilde{w}_i = 1}W_i  - \sum_{i:\tilde{w}_i = 0}W_i\right)^2 \right]. \label{eq_A_5.9_0} 
\end{align}
We now derive the expectations of the terms on the right-hand side.
\begin{align}
&    \E\left\{(\hat{\tau} - \tau)\left(\sum_{i:\tilde{w}_i = 1}W_i  - \sum_{i:\tilde{w}_i = 0}W_i\right) \right\} \nonumber\\
& = \E\left\{\frac{1}{N/2}\left(\sum_{j}W_jY_j(0) - \sum_{j}(1-W_j)Y_j(0)\right)\left(\sum_{i:\tilde{w}_i = 1}W_i  - \sum_{i:\tilde{w}_i = 0}W_i\right) \right\} \nonumber\\
& = \frac{4}{N}\E\left\{\left(\sum_{j:\tilde{w}_j = 1}W_jY_j(0) + \sum_{j:\tilde{w}_j = 0}W_jY_j(0) \right)\left(\sum_{i:\tilde{w}_i = 1}W_i  - \sum_{i:\tilde{w}_i = 0}W_i\right) \right\}. \label{eq_A_5.9_1}
\end{align}
Under a CRD, and for $i \neq j$, $\E(W_iW_j) = \frac{N/2(N/2-1)}{N(N-1)}$. Thus, expanding the product in Equation \ref{eq_A_5.9_1}, we get
\begin{align}
     \E\left\{(\hat{\tau} - \tau)\left(\sum_{i:\tilde{w}_i = 1}W_i  - \sum_{i:\tilde{w}_i = 0}W_i\right) \right\} = \frac{1}{N-1}\left( \sum_{i:\tilde{w}_i = 1}Y_i(0)  - \sum_{i:\tilde{w}_i = 0}Y_i(0)\right).
\end{align}
Next, we consider
\begin{align}
 & \E\left\{(\hat{\tau} - \tau)^2\frac{1}{\binom{N}{N/2}}\sum_{\tilde{\bm{w}}}\left(\sum_{i:\tilde{w}_i = 1}W_i  - \sum_{i:\tilde{w}_i = 0}W_i\right)^2 \right\} \nonumber\\
  & = \E\left\{(\hat{\tau} - \tau)^2\frac{1}{\binom{N}{N/2}}\sum_{\tilde{\bm{w}}}\left(\sum_{i=1}^{N}2\tilde{w}_iW_i  - N/2\right)^2 \right\} \nonumber\\
  & = \E\left\{(\hat{\tau} - \tau)^2 \left(\sum_{i=1}^{N}2\tilde{W}_iW_i  - N/2\right)^2 \right\},
\end{align}
where $\tilde{\bm{W}} = (\tilde{W}_1,...,\tilde{W}_N)^\top$ is an independent and identical copy of $\bm{W}$. Using the law of iterated expectations,
\begin{align}
&    \E\left\{(\hat{\tau} - \tau)^2 \left(\sum_{i=1}^{N}2\tilde{W}_iW_i  - N/2\right)^2 \right\} \nonumber\\
& = \E\left[(\hat{\tau} - \tau)^2 \E\left\{\left(\sum_{i=1}^{N}2\tilde{W}_iW_i  - N/2\right)^2\Bigg|\bm{W} \right\}\right] \nonumber\\
& = \E\left[ (\hat{\tau} - \tau)^2 \left\{\frac{N}{2} + 2\frac{N-2}{4(N-1)}\mathop{\sum\sum}_{i<i'}(2W_i-1)(2W_{i'}-1) \right\}  \right] \nonumber\\
& = \E\left[ (\hat{\tau} - \tau)^2 \left\{ \frac{N}{2} - \frac{N(N-2)}{4(N-1)} + \frac{(N-2)}{4(N-1)} \left(\sum_{i=1}^{N}(2W_i-1)\right)^2 \right\}  \right] \nonumber\\
& = \frac{N^2}{4(N-1)} \E(\hat{\tau} - \tau)^2 =  \frac{N^2}{4(N-1)} \Var(\hat{\tau}).
\end{align}
Therefore, Equation \ref{eq_A_5.9_0} implies,
\begin{align}
& \E(\hat{V}_{\text{Jack}}) \nonumber \\
& = \frac{(N-1)^2}{(N-2)^2}\left\{\Var(\hat{\tau}) -2 \frac{1}{N-1}\Var(\hat{\tau}) + \frac{4}{N^2}\frac{N^2}{4(N-1)} \Var(\hat{\tau})  \right\} \nonumber\\
& = \Var(\hat{\tau}) \frac{N-1}{N-2},
\end{align}
which completes the proof. 

%%%%%%%%%%%%%%%%%%%%%%%%%%%%%%%%%%
%%%%%%%%%%%%%%%%%%%%%%%%%%%%%%%%%%

\section{Simulation study}
\label{sec_simulation}

In this section, we evaluate the performance of the direct imputation approach for different versions of the variance estimator using a simulation study. To this end, we consider six scenarios, each corresponding to a combination of the design parameters and the potential outcomes. In particular,
\begin{itemize}
    \item Scenarios 1 and 2: CRD with $N = 6$, $N_t = N_c = 3$
    \item Scenario 3: CRD with $N = 6$, $N_t = 4$, $N_c = 2$
    \item Scenarios 4 and 5: CRD with $N = 8$, $N_t = N_c = 4$
    \item Scenario 6, CRD with $N = 8$, $N_t = 5$, $N_c = 3$.
\end{itemize}
In each scenario, the potential outcomes under control are generated independently from a $\text{Uniform}(0,10)$. For scenarios 1,3,4, and 6, the unit-level treatment effects are homogeneous, with the common value drawn from a $\text{Uniform}(-5,5)$ distribution. Conversely, for scenarios 2 and 5, treatment effects are allowed to vary across units, with values independently drawn from a $\text{Uniform}(-5,5)$ distribution.

We consider four different choices of $\gamma_i$, namely $0$, $\hat{\tau}$, $\hat{\tau}_{(-i)}$ (as in Equation \ref{eq_jack2}), and $\hat{\theta}_{(-i)}$ (as in Equation \ref{eq_jack3}). Under each scenario and for each choice of $\gamma_i$, we compute the relative bias of the resulting variance estimator $\hat{V}$, defined as $\frac{\E_d(\hat{V}) - \Var_d(\hat{\tau})}{\Var_d(\hat{\tau})}$. This process is repeated 100 times, each time independently generating the potential outcomes according to the specified data-generating process. 

Figure \ref{fig:jack1} shows the distribution of the relative bias of each estimator across the six scenarios.

\begin{figure}[!ht]
    \centering
    \includegraphics[scale =0.65]{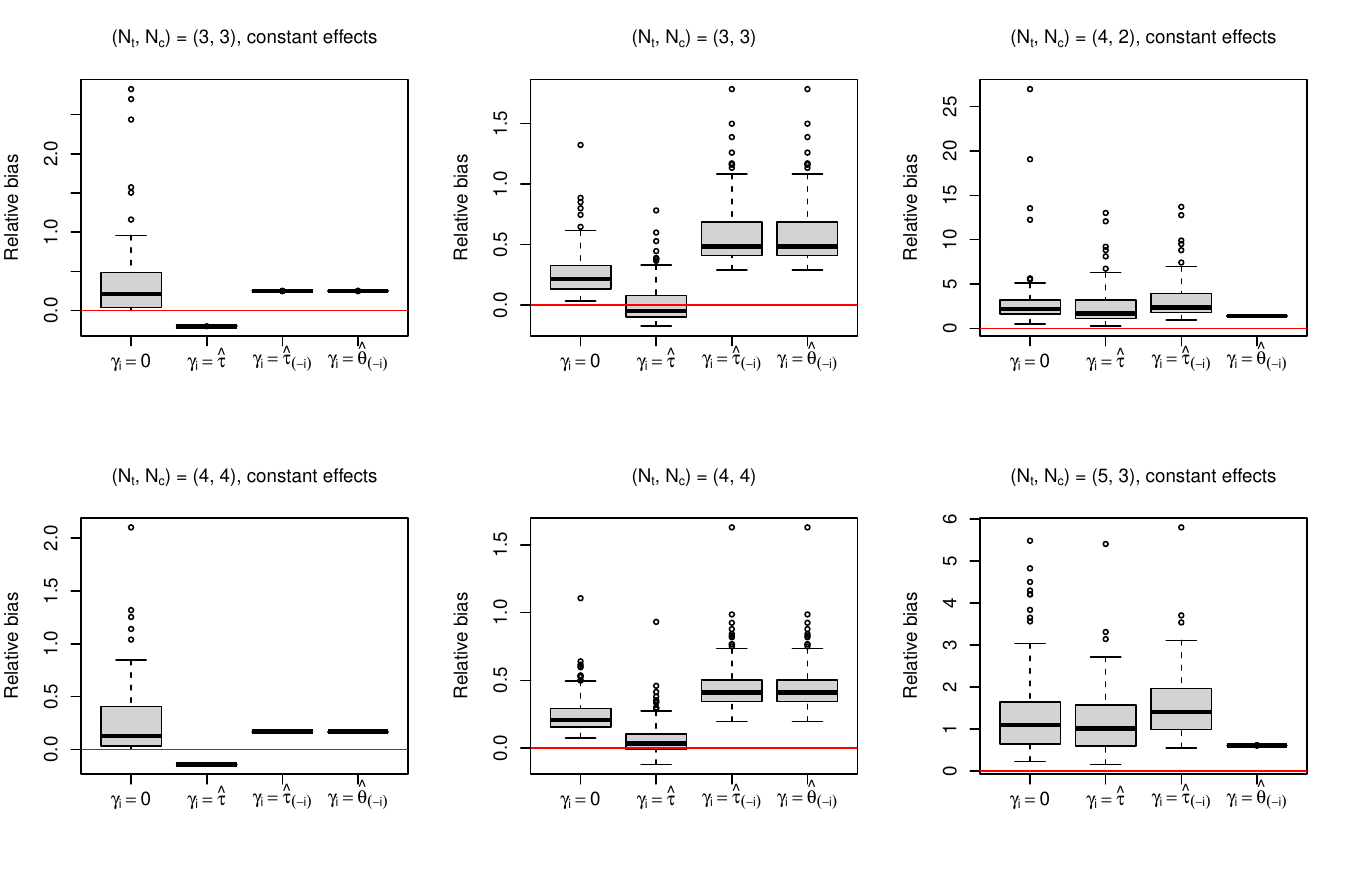}
    \caption{Relative bias of direct imputation based variance estimators for different choices of $\gamma_i$ under complete randomization with $N_t$ treated and $N_c$ control units}
    \label{fig:jack1}
\end{figure}
Figure \ref{fig:jack1} shows that, the direct imputation estimators with $\gamma_i = 0$, $\hat{\tau}_{(-i)}$, or $\hat{\theta}_{(-i)}$ exhibit non-negative relative biases across all scenarios since they are conservative by construction. The exception is the estimator with $\gamma_i = \hat{\theta}$, which, for completely randomized designs with equal group sizes and under homogeneity, is known to be anti-conservative. This phenomenon is evident in the negative relative biases observed in scenarios 1 and 4.
Moreover, the jackknifed variance estimators with $\gamma_i = \hat{\tau}_{(-i)}$ and $\gamma_i = \hat{\theta}_{(-i)}$ exhibit identical distributions of relative bias when $N_t = N_c$. This observation is not surprising, as the two estimators are equivalent when $\pi_i = 0.5$. For fixed $N_t$ and $N_c$ and under treatment effect homogeneity, the relative bias corresponding to $\gamma_i = \hat{\theta}_{(-i)}$ is constant, i.e., it does not depend on the potential outcomes. This observation aligns with Theorem \ref{thm_jack2_bias}. Overall, the jackknifed variance estimator with $\gamma_i = \hat{\theta}_{(-i)}$ performs reasonably well across scenarios when treatment effects are homogeneous.

In Figure \ref{fig:jack2}, we perform similar comparisons under CRDs with larger group sizes, e.g., $(N_t, N_c) \in \{(30,30), (15,45)\}$. Here too, the jackknifed variance estimator with $\gamma_i = \hat{\theta}_{(-i)}$ performs well across different scenarios. 

\begin{figure}[!ht]
  \centering
  \begin{subfigure}[b]{0.32\textwidth}
    \includegraphics[width=\textwidth]{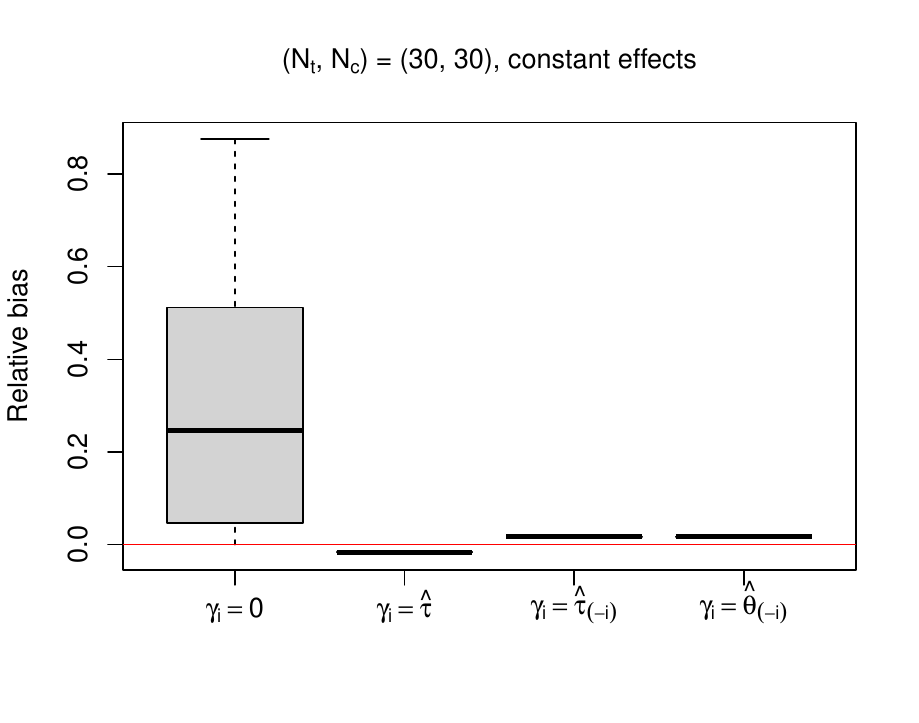}
    %\caption{First}
    %\label{fig:sub1}
  \end{subfigure}
  \hfill
  \begin{subfigure}[b]{0.32\textwidth}
    \includegraphics[width=\textwidth]{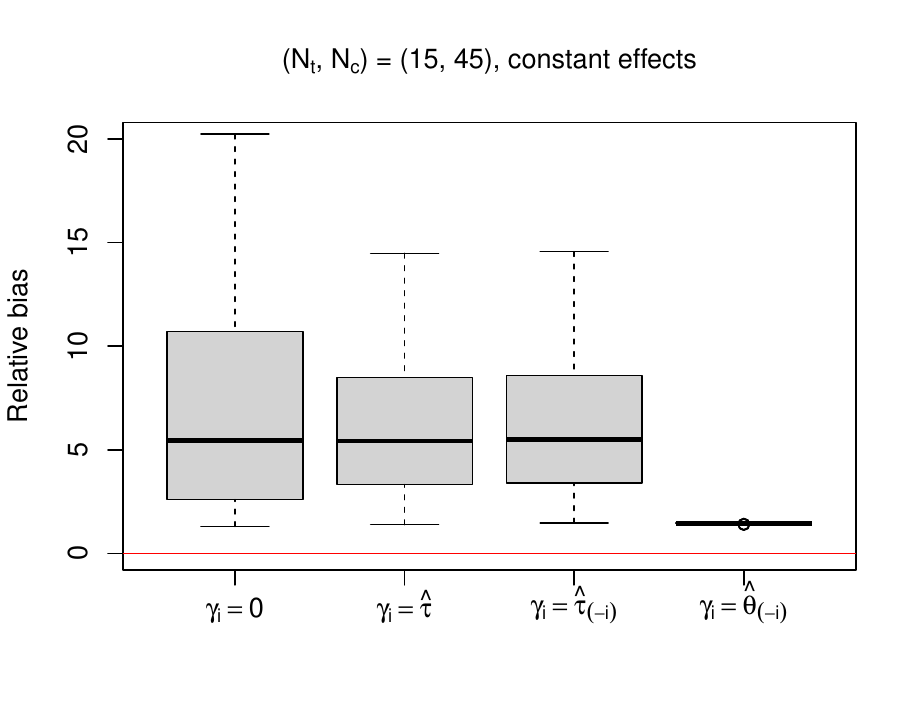}
    %\caption{Second}
    %\label{fig:sub2}
  \end{subfigure}
  \hfill
  \begin{subfigure}[b]{0.32\textwidth}
    \includegraphics[width=\textwidth]{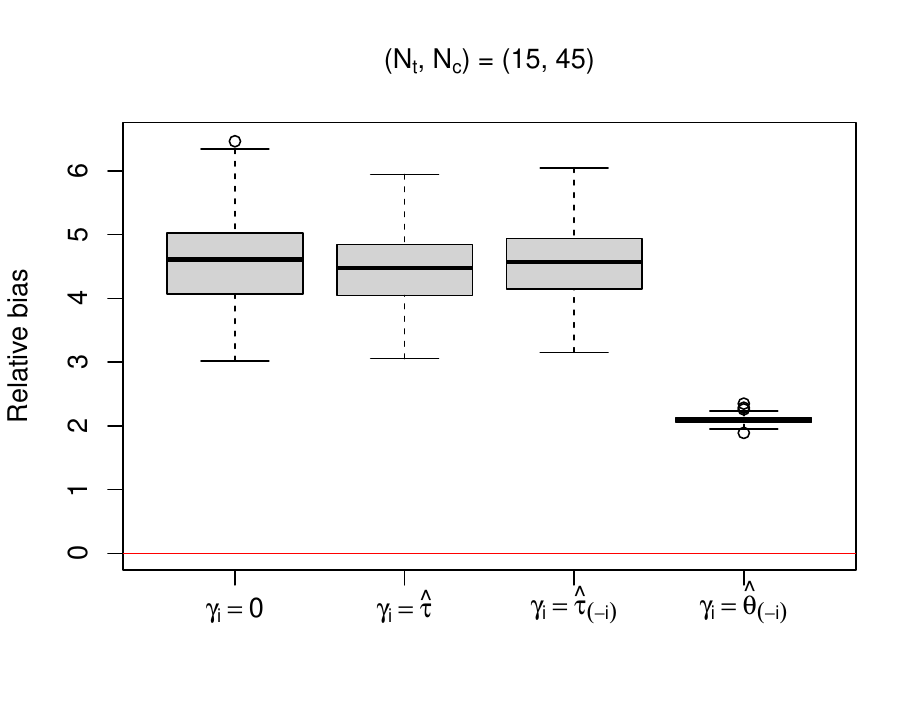}
    %\caption{Third}
    %\label{fig:sub3}
  \end{subfigure}
  \caption{Relative bias of direct imputation based variance estimators for different choices of $\gamma_i$ under complete randomization with $N_t$ treated and $N_c$ control units.}
  \label{fig:jack2}
\end{figure}

\end{document}